\def\R{\mathbb{R}}
\def\rP{\mathbb{P}}
\def\Lip{\mathop{\rm Lip}}
\def\Gr{\mathop{\rm Gr}}
\def\hx{\hat{x}}
\def\C{{\mathcal C}}
\def\P{{\mathcal P}}
\def\M{{\mathcal M}}
\def\bmu{{\boldsymbol \mu}}
\def\bnu{{\boldsymbol \nu}}
\def\bpi{{\boldsymbol \pi}}
\def\bxi{{\boldsymbol \xi}}
\def\by{{\bf y}}
\def\ba{{\bf a}}
\def\bx{{\bf x}}
\def\bu{{\bf u}}
\def\bv{{\bf v}}
\def\tx{{\tilde x}}
\def\te{{\tilde e}}
\def\tpi{{\tilde \pi}}
\def\hx{{\hat x}}
\def\sX{{\mathsf X}}
\def\sA{{\mathsf A}}
\def\sH{{\mathsf H}}
\def\sM{{\mathsf M}}
\def\sE{{\mathsf E}}
\def\sG{{\mathsf G}}
\def\sW{{\mathsf W}}
\newtheorem{remark}{Remark}
\newcommand{\appsec}{
\renewcommand{\thesubsection}{\Alph{subsection}}
}
\begin{document}
\sloppy
\title{Markov-Nash Equilibria in Mean-Field Games with Discounted Cost
\thanks{This research was supported in part by the U.S.\ Air Force Office of Scientific Research (AFOSR) under MURI grant FA9550-10-1-0573, and in part by the Office of Naval Research under (ONR) MURI grant N00014-16-1-2710 and grant N00014-12-1-0998.}
}
\author{Naci Saldi, Tamer Ba\c{s}ar, and Maxim Raginsky
\thanks{The authors are with the Coordinated Science Laboratory, University of Illinois at Urbana-Champaign, Urbana, IL 61801, Email: \{nsaldi,basar1,maxim\}@illinois.edu}
     }
\maketitle

\begin{abstract}
In this paper, we consider discrete-time dynamic games of the mean-field type with a finite number $N$ of agents subject to an infinite-horizon discounted-cost optimality criterion. The state space of each agent is a locally compact Polish space.  At each time, the agents are coupled through the empirical distribution of their states, which affects both the agents' individual costs and their state transition probabilities. We introduce a new solution concept of the Markov-Nash equilibrium, under which a policy is player-by-player optimal in the class of all Markov policies. Under mild assumptions, we demonstrate the existence of a mean-field equilibrium in the infinite-population limit $N \to \infty$, and then show that the policy obtained from the mean-field equilibrium is approximately Markov-Nash when the number of agents $N$ is sufficiently large.
\end{abstract}
\begin{keywords}
Mean-field games, Nash equilibrium, discounted cost. 
\end{keywords}

\begin{AMS}
91A15, 91A10, 91A13, 93E20.  	    	
\end{AMS}

\section{Introduction}\label{sec1}

Mean-field game theory has been developed independently by Huang, Malham\'{e}, and Caines \cite{HuMaCa06} and Lasry and Lions \cite{LaLi07} to study continuous-time non-cooperative differential games with a large number of identical agents.  The key underlying idea is that, under a particular equilibrium condition, called the Nash certainty equivalence principle \cite{HuMaCa06}, the decentralized game problem can be reduced to a single-agent decision problem. This principle essentially says that the state evolution of an individual agent should be consistent with the total population behavior.

The optimal solution of this single-agent decision problem provides an approximation to Nash equilibria of games with large (but finite) population sizes. However, in contrast to the standard single-agent optimal control problem, the characterization of this optimal solution in the continuous-time setting leads to a Fokker-Planck equation evolving forward in time, and a Hamilton-Jacobi-Bellman equation evolving backward in time. We refer the reader to \cite{HuCaMa07,TeZhBa14,Hua10,BeFrPh13,Ca11,CaDe13,GoSa14,BaTeBa16,MoBa17,MoBa16} for studies of continuous-time mean-field games with different models and cost functions, such as games with major-minor players, risk-sensitive games, robust mean field games, games with jump parameters, and LQG games.

By contrast, there are relatively few results on \textit{discrete-time} mean-field games. Existing works have mostly studied the setup where the state space is discrete (finite or countable) and the agents are coupled only through their cost functions; that is, the mean-field term does not affect the evolution of the states of the agents. Gomes et al.~\cite{GoMoSo10} study a discrete-time mean-field game with a finite state space over a finite horizon. Adlakha et al.~\cite{AdJoWe15} consider a discrete-time mean-field game where the state space is a countable subset of a $d$-dimensional lattice, under an infinite-horizon discounted cost criterion. This setup is the closest to the one studied here. However, in addition to the state space being at most countable, Adlakha et al.~\cite{AdJoWe15} also assume that, for each agent, every stationary policy induces a Markov chain that has at least one invariant distribution. In this case, the optimal control problem in the mean-field limit corresponds to a standard homogeneous Markov decision problem. Biswas \cite{Bis15} considers the average-cost setting, where the state space is a $\sigma$-compact Polish space and the transition probability of an individual agent does not depend on the empirical distribution of the states. Under the regularity conditions imposed in \cite{Bis15}, it can be shown that, for each stationary policy, the corresponding Markov process for each agent has a unique invariant measure, which is a standard technique when studying average-cost problems. However, these regularity conditions are stated in terms of a specific metric topology on the set of stationary policies, and appear to be too strong to hold under reasonable assumptions. Discrete-time mean-field games with linear individual dynamics are studied in \cite{ElLiNi13,MoBa15,NoNa13,MoBa16-cdc}.

In this paper, we consider discrete-time mean-field games with a Polish state space, under the infinite-horizon discounted-cost optimality criterion. In such a game, the agents are coupled through the empirical distribution of their states at each time, which affects both the individual costs and the state transition probabilities of each agent. In Section~\ref{sec2} we formulate the finite-agent discrete-time game problem of the mean-field type and introduce the solution concept of \textit{Markov-Nash equilibrium}, under which a policy is player-by-player optimal in the class of all Markov policies. In Section~\ref{sec3},  we introduce the infinite-population mean-field game and prove the existence of a mean-field equilibrium, which we use in Section~\ref{sec4} to deduce the existence of approximate Markov-Nash equilibria for $N$-agent games when $N$ is sufficiently large. In Section~\ref{example} we present an example to illustrate our results. Section~\ref{conc} lists some directions for future research. To the best of our knowledge, this is the first result demonstrating the existence of an (exact or approximate) equilibrium policy for a general discrete-time mean-field game with finitely many agents.

\smallskip

\noindent\textbf{Notation.} For a metric space $\sE$, we let $C_b(\sE)$ denote the set of all bounded continuous real functions on $\sE$. Let $\P(\sE)$ denote the set of all Borel probability measures on $\sE$. For any $\sE$-valued random element $x$, ${\cal L}(x)(\,\cdot\,) \in \P(\sE)$ denotes the distribution of $x$. A sequence $\{\mu_n\}$ of measures on $\sE$ is said to converge weakly to a measure $\mu$ if $\int_{\sE} g(e) \mu_n(de)\rightarrow\int_{\sE} g(e) \mu(de)$ for all $g \in C_b(\sE)$. For any $\nu \in \P(\sE)$ and measurable real function $g$ on $\sE$, we define $\nu(g) \coloneqq \int g d\nu$. For any subset $B$ of $\sE$, we let $\partial B$ and $B^c$ denote the boundary and complement of $B$, respectively. The notation $v\sim \nu$ means that the random element $v$ has distribution $\nu$. For real numbers $a$ and $b$, the notation $a\vee b$ denotes the maximum of $a$ and $b$. Unless otherwise specified, the term ``measurable" will refer to Borel measurability.

%we let $C_b(\sE)$ denote the set of all bounded continuous real functions on $\sE$. For any $u \in C_b(\sE)$, we let $\|u\| \coloneqq \sup_{e \in \sE} |u(e)|$ which turns $C_{b}(\sE)$ into a Banach space.

\section{Finite Player Game with Discounted Cost}\label{sec2}

We consider a discrete-time $N$-agent stochastic game with a Polish state space $\sX
$ and a Polish action space $\sA$. The dynamics of the game are specified by an initial state distribution $\mu_0 \in \P(\sX)$ and a stochastic state transition kernel $p : \sX \times \sA \times \P(\sX) \to \P(\sX)$ as follows. For every $t \in \{0,1,2,\ldots\}$ and every $i \in \{1,2,\ldots,N\}$, let $x^N_i(t) \in \sX$ and $a^N_i(t) \in \sA$ denote the state and the action of Agent~$i$ at time $t$, and let
\begin{align}
e_t^{(N)}(\,\cdot\,) \coloneqq \frac{1}{N} \sum_{i=1}^N \delta_{x_i^N(t)}(\,\cdot\,) \in \P(\sX) \nonumber
\end{align}
denote the empirical distribution of the state configuration at time $t$, where $\delta_x\in\P(\sX)$ is the Dirac measure at $x$. The initial states $x^N_i(0)$ are independent and identically distributed according to $\mu_0$, and, for each $t \ge 0$, the next-state configuration $(x^N_1(t+1),\ldots,x^N_N(t+1))$ is generated at random according to the probability law
\begin{align}\label{eq:state_spec}
\prod^N_{i=1} p\big(dx^N_i(t+1)\big|x^N_i(t),a^N_i(t),e^{(N)}_t\big),
\end{align}
where $p(\,\cdot\,|x,a,\mu)$ denotes the image of the triple $(x,a,\mu) \in \sX \times \sA \times \P(\sX)$ in $\P(\sX)$ under the state transition kernel $p$.

To complete the description of the game dynamics, we must specify how the agents select their actions at each time step. To that end, we introduce the history spaces $\sH_0 = \sX \times \P(\sX)$ and $\sH_{t}=(\sX\times\sA\times\P(\sX))^{t}\times (\sX\times\P(\sX))$ for $t=1,2,\ldots$, all endowed with product Borel $\sigma$-algebras.\footnote{We endow the set $\P(\sX)$ with the topology of weak convergence, which makes it a Polish space.} A \emph{policy} for a generic agent is a sequence $\pi=\{\pi_{t}\}$ of stochastic kernels on $\sA$ given $\sH_{t}$; we say that such a policy is \emph{Markov} if each $\pi_t$ is a Markov kernel on $\sA$ given $\sX$. The set of all policies for Agent~$i$ is denoted by $\Pi_i$, and the subset consisting of all Markov policies by $\sM_i$. Furthermore, we let $\sM_i^c$ denote the set of all Markov policies for Agent~$i$ that are weakly continuous; that is, $\pi=\{\pi_t\}\in\sM_i^c$ if for all $t\geq0$, $\pi_t: \sX \rightarrow \P(\sA)$ is continuous when $\P(\sA)$ is endowed with the weak topology.

Let ${\bf \Pi}^{(N)} = \prod_{i=1}^N \Pi_i$, ${\bf \sM}^{(N)} = \prod_{i=1}^N \sM_i$, and ${\bf \sM}^{(N,c)} = \prod_{i=1}^N \sM_i^c$.
We let ${\boldsymbol \pi}^{(N)} \coloneqq (\pi^1,\ldots,\pi^N)$, $\pi^i \in \Pi_i$ denote the $N$-tuple of policies for all the agents in the game. We will refer to ${\boldsymbol \pi}^{(N)} \in {\bf \Pi}^{(N)}$ simply as a `policy.' Under such a policy, the action configuration at each time $t \ge 0$ is generated at random according to the probability law
\begin{align}\label{eq:policy_spec}
\prod^N_{i=1} \pi^i_t\big(da^N_i(t)\big|h^N_i(t)\big),
\end{align}
where $h^N_i(0) = (x^N_i(0),e^{(N)}_0)$ and $h^N_i(t) = (h^N_i(t-1),x^N_i(t),a^N_i(t-1),e^{(N)}_t)$ for $t \ge 1$ are the histories observed by Agent~$i$ at each time step. When ${\boldsymbol \pi}^{(N)} \in {\bf \sM}^{(N)}$, Eq.~\eqref{eq:policy_spec} becomes
$$
\prod^N_{i=1}\pi^i_t(da^N_i(t)|x^N_i(t)).
$$
The stochastic update rules in Eqs.~\eqref{eq:state_spec} and \eqref{eq:policy_spec}, together with the initial state distribution $\mu_0$, uniquely determine the probability law of all the states and actions for all $i \in \{1,\ldots,N\}$ and all $t \geq 0$. We will denote expectations with respect to this probability law by $E^{{\boldsymbol \pi}^{(N)}}\big[\cdot\big]$.

We now turn to the question of optimality. The \emph{one-stage cost} function for a generic agent is a measurable function $c : \sX \times \sA \times \P(\sX) \to [0,\infty)$. For Agent~$i$, the infinite-horizon discounted cost under the initial distribution $\mu_0$ and a policy ${\boldsymbol \pi}^{(N)} \in {\bf \Pi}^{(N)}$ is given by
\begin{align}
J_i^{(N)}({\boldsymbol \pi}^{(N)}) &= E^{{\boldsymbol \pi}^{(N)}}\biggl[\sum_{t=0}^{\infty}\beta^{t}c(x_{i}^N(t),a_{i}^N(t),e^{(N)}_t)\biggr], \nonumber
\end{align}
where $\beta \in (0,1)$ is the discount factor. The standard notion of optimality is a player-by-player one:

\begin{definition}
A policy ${\boldsymbol \pi}^{(N*)}= (\pi^{1*},\ldots,\pi^{N*})$ constitutes a \emph{Nash equilibrium} if
\begin{align}
J_i^{(N)}({\boldsymbol \pi}^{(N*)}) = \inf_{\pi^i \in \Pi_i} J_i^{(N)}({\boldsymbol \pi}^{(N*)}_{-i},\pi^i) \nonumber
\end{align}
for each $i=1,\ldots,N$, where ${\boldsymbol \pi}^{(N*)}_{-i} \coloneqq (\pi^{j*})_{j\neq i}$.
\end{definition}

There are two challenges pertaining to Nash equilibria in the class of games considered here. The first challenge is the (almost) decentralized nature of the information structure of the problem. Namely, the agents have access only to their local state information $x_i^N(t)$ and the empirical distribution of the states $e_t^{(N)}$, and there is no general theory (of existence and characterization of Nash equilibria) for such games even with special structures for the transition probabilities. The second difficulty arises because of the so-called \emph{curse of dimensionality}; that is, the solution of the problem becomes intractable when the number of states/actions and agents is large. The existence of Nash equilibria in this case is a challenging problem even when the agents have access to full state information $\{x_i^N(t)\}_{i=1}^N$ (see \cite{JaNo16,BaDu14,LeMc15} and references therein). Therefore, it is of interest to find an approximate decentralized equilibrium with reduced complexity. To that end, we adopt the following solution concept:
\begin{definition}\label{def1}
A policy ${\boldsymbol \pi}^{(N*)} \in {\bf \sM}^{(N)}$ is a \emph{Markov-Nash equilibrium} if
\begin{align*}
J_i^{(N)}({\boldsymbol \pi}^{(N*)}) &= \inf_{\pi^i \in \sM_i} J_i^{(N)}({\boldsymbol \pi}^{(N*)}_{-i},\pi^i)
\end{align*}
for each $i=1,\ldots,N$, and an \emph{$\varepsilon$-Markov-Nash equilibrium} (for a given $\varepsilon > 0$) if
\begin{align*}
J_i^{(N)}({\boldsymbol \pi}^{(N*)}) &\leq \inf_{\pi^i \in \sM_i} J_i^{(N)}({\boldsymbol \pi}^{(N*)}_{-i},\pi^i) + \varepsilon
\end{align*}
for each $i=1,\ldots,N$.
\end{definition}

The main contribution of this paper is the proof of existence of $\varepsilon$-Markov-Nash equilibria for games with sufficiently many agents. To this end, we first consider a mean-field game that arises in the infinite-population limit $N \to \infty$ and prove the existence of an appropriately defined mean-field equilibrium for this limiting mean-field game. Then we pass back to the finite-$N$ case and show that, if each agent in the original problem adopts the mean-field eqiulibrium policy, then the resulting policy will be an approximate Markov-Nash equilibrium for all sufficiently large $N$. It is important to note that, although the policy in the mean-field equilibrium is an approximate Markov-Nash equilibrium for the finite-agent game problem, it is indeed a true Nash equilibrium in the infinite population regime. This follows from the fact that the set of Markov policies is sufficiently rich for optimality in the limiting case, as each agent is faced with a single-agent decision problem.

\subsection{Assumptions}\label{assumptions}

In this section, we state all assumptions that will be made on the game model for easy reference. They will be imposed when needed in the remainder of the paper.

Let $w: \sX \rightarrow [1,\infty)$ be a continuous moment function; that is, there exists an increasing sequence of compact subsets $\{K_n\}_{n\geq1}$ of $\sX$ such that
\begin{align}
\lim_{n\rightarrow\infty} \inf_{x \in \sX \setminus K_n} w(x) = \infty. \nonumber
\end{align}
We assume that $w(x) \geq 1 + d_{\sX}(x,x_0)^p$ for some $p\geq1$ and $x_0 \in \sX$, where $d_{\sX}$ is the metric on $\sX$. In order to study bounded and unbounded one-stage cost functions $c$ simultaneously, we define the following function:
\begin{align}
v \coloneqq \begin{cases} 1, & \text{if $c$ is bounded} \\
w, & \text{if $c$ is unbounded}
\end{cases}\nonumber
\end{align}
For any $g: \sX \rightarrow \R$, define the $v$-norm of $g$ as
\begin{align}
\|g\|_v \coloneqq \sup_{x \in \sX} \frac{|g(x)|}{v(x)}. \nonumber
\end{align}
Let $B_v(\sX)$ denote the Banach space of all real valued measurable functions $g$ on $\sX$ with finite $v$-norm and let $C_v(\sX)$ denote the Banach space of all real valued continuous functions in $B_v(\sX)$.

Analogously, for any finite signed measure $\mu$ on $\sX$, let us define the $v$-norm of $\mu$ as
\begin{align}
\|\mu\|_v \coloneqq \sup_{\substack{g \in B_v(\sX): \\ \|g\|_v \leq 1}} \biggl| \int_{\sX} g(x) \mu(dx) \biggr|. \nonumber
\end{align}
Note that if $v = 1$, then $\|\mu\|_v$ corresponds to the total variation distance \cite[Section 7.2]{HeLa99}.
Let
\begin{align}
\P_v(\sX) &\coloneqq \big\{ \mu \in \P(\sX): \|\mu\|_v < \infty \bigr\} \nonumber \\
&= \biggl\{ \mu \in \P(\sX): \int_{\sX} v(x) \mu(dx) < \infty \biggr\}. \nonumber
\end{align}
It is known that weak topology on $\P(\sX)$ can be metrized using the following metric:
\begin{align}
\rho(\mu,\nu) \coloneqq \sum_{m=1}^{\infty} 2^{-(m+1)} | \mu(f_m) - \nu(f_m) |, \nonumber
\end{align}
where $\{f_m\}_{m\geq1}$ is an appropriate sequence of continuous and bounded functions such that $\|f_m\| \leq 1$ for all $m\geq1$ (see \cite[Theorem 6.6, p. 47]{Par67}). We define
\begin{align}
\rho_v(\mu,\nu) \coloneqq \rho(\mu,\nu) + |\mu(v) - \nu(v)| \nonumber
\end{align}
which is a metric on $\P_v(\sX)$. Note that
\begin{align}
\rho_v(\mu_n,\mu) \rightarrow 0 \Longleftrightarrow \mu_n(g) \rightarrow \mu(g), \text{ } \forall g \in C_v(\sX). \nonumber
\end{align}
Indeed, ($\Leftarrow$) is trivial. For ($\Rightarrow$), let $\rho_v(\mu_n,\mu) \rightarrow 0$ which means that $\mu_n \rightarrow \mu$ weakly and $\mu_n(v) \rightarrow \mu(v)$. Let $g \in C_v(\sX)$. Define non-negative continuous function $h$ as $h \coloneqq \|g\|_v v + g$. It is known that \cite[Proposition E.2]{HeLa96}
\begin{align}
\liminf_{n\rightarrow\infty} \int_{\sX} h(x) \mu_n(dx) \geq \int_{\sX} h(x) \mu(dx). \nonumber
\end{align}
But, since $\mu_n(v) \rightarrow \mu(v)$, we should have
\begin{align}
\liminf_{n\rightarrow\infty} \int_{\sX} g(x) \mu_n(dx) \geq \int_{\sX} g(x) \mu(dx). \nonumber
\end{align}
Conversely, define non-negative continuous function $u$ as $u \coloneqq \|g\|_v v - g$. Then, we have
\begin{align}
\liminf_{n\rightarrow\infty} \int_{\sX} u(x) \mu_n(dx) \geq \int_{\sX} u(x) \mu(dx). \nonumber
\end{align}
But, since $\mu_n(v) \rightarrow \mu(v)$, we should also have
\begin{align}
\limsup_{n\rightarrow\infty} \int_{\sX} g(x) \mu_n(dx) \leq \int_{\sX} g(x) \mu(dx). \nonumber
\end{align}
Thus, $\mu_n(g) \rightarrow \mu(g)$, which establishes the result. Let us call the topology induced by metric $\rho_v$ on $\P_v(\sX)$ as $v$-topology. It can be proved that $\P_v(\sX)$ with metric $\rho_v$ is a Polish space.

Suppose that $v=w$ (i.e., $c$ is unbounded). Define the Wasserstein distance of order $p\geq1$ on $\P_v(\sX)$ as follows \cite[Definition 6.1]{Vil09}:
\begin{align}
W_p(\mu,\nu) \coloneqq \inf \bigl\{ E[d_{\sX}(X,Y)^p]^{\frac{1}{p}}: {\cal L}(X) = \mu \text{ and } {\cal L}(Y) = \nu \bigr\}. \nonumber
\end{align}
Note that $W_p$ is a metric on $\P_v(\sX)$ since $v(x) \geq 1 + d_{\sX}(x,x_0)^p$ for all $x\in\sX$. Furthermore, $W_p(\mu_n,\mu) \rightarrow 0$ if and only if $\mu_n(g) \rightarrow \mu(g)$ for all continuous $g$ with $|g(x)| \leq 1+ d_{\sX}(x,x_0)^p$ \cite[Definition 6.8, Theorem 6.9]{Vil09}. The last observation implies that the following metric $\beta_v(\mu,\nu) \coloneqq W_p(\mu,\nu) + |\mu(v)-\nu(v)|$ metrizes the $v$-topology on $\P_v(\sX)$. Using the dual formulation of $W_p$ \cite[Theorem 5.10]{Vil09}, we can write $\beta_v$ as follows:
\begin{align}
\beta_v(\mu,\nu) = \sup_{\substack{(h,g) \in {\cal L}_1(\mu)\times {\cal L}_1(\nu):\\ h(x)-g(y) \leq d_{\sX}(x,y)^p}} |\mu(h) - \nu(g)|^{\frac{1}{p}} + |\mu(v)-\nu(v)|, \label{beta}
\end{align}
where ${\cal L}_1(\lambda)$ denotes the set of all $\lambda$-integrable real functions on $\sX$.

\begin{remark}
In the remainder of the paper, $\P(\sX)$ is always equipped with the weak topology while $\P_v(\sX)$ is always equipped with the $v$-topology. In other words, when we say that a function over $\P_v(\sX)$ is continuous, it should be understood that it is continuous with respect to $v$-topology. Similarly, a function over $\P(\sX)$ is continuous if it is continuous with respect to weak topology.
\end{remark}

\smallskip

%\subsubsection*{Assumption 1}
\textbf{Assumption~1:}
\begin{itemize}
\item [(a)] The one-stage cost function $c$ is continuous.
\item [(b)] $\sA$ is compact and $\sX$ is locally compact.
\item [(c)] There exists a non-negative real number $\alpha$ such that
\begin{align}
\sup_{(a,\mu) \in \sA \times \P(\sX)} \int_{\sX} w(y) p(dy|x,a,\mu) \leq \alpha w(x). \nonumber
\end{align}
\end{itemize}

\begin{itemize}
\item [(d)] The stochastic kernel $p(\,\cdot\,|x,a,\mu)$ is weakly continuous; that is, if $(x_n,a_n,\mu_n) \rightarrow (x,a,\mu)$ in $\sX \times \sA \times \P(\sX)$, then $p(\,\cdot\,|x_n,a_n,\mu_n) \rightarrow p(\,\cdot\,|x,a,\mu)$ weakly. In addition, the function $\int_{\sX} w(y) p(dy|x,a,\mu)$ is continuous in $(x,a,\mu)$.
\item [(e)] The initial probability measure $\mu_0$ satisfies
\begin{align}
\int_{\sX} v(x) \mu_0(dx) \eqqcolon M < \infty. \nonumber
\end{align}
\end{itemize}

\begin{remark}\label{remark1}
Note that Assumption~1-(c) implies that the range of $p$ lies in $\P_v(\sX)$; that is, $\bigl\{p(\,\cdot\,|x,a,\mu): (x,a,\mu) \in \sX \times \sA \times \P(\sX) \bigr\} \subset \P_v(\sX)$. Therefore, Assumption~1-(d) is equivalent to the following condition: if $(x_n,a_n,\mu_n) \rightarrow (x,a,\mu)$ in $\sX \times \sA \times \P(\sX)$, then $p(\,\cdot\,|x_n,a_n,\mu_n) \rightarrow p(\,\cdot\,|x,a,\mu)$ with respect to the $v$-topology on $\P_v(\sX)$.
\end{remark}

\smallskip

For each $t\geq0$, let us define
\begin{align}
\P_v^t(\sX) \coloneqq \biggl\{ \mu \in \P_v(\sX): \int_{\sX} w(x) \mu(dx) \leq \alpha^t M \biggr\}. \nonumber
\end{align}

\begin{itemize}
\item [(f)] There exist $\gamma \geq1$ and a non-negative real number $R$ such that for each $t\geq0$, if we define $M_t \coloneqq \gamma^t R$, then
\begin{align}
\sup_{(a,\mu) \in \sA \times \P_v^t(\sX)} c(x,a,\mu) \leq M_t v(x). \nonumber
\end{align}
\item [(g)] We assume that $\alpha \beta \gamma < 1$.
\end{itemize}
\vspace{5pt}

\textbf{Assumption~2:}
\vspace{5pt}

Define the following moduli of continuity:
\begin{align}
\omega_{p}(r) &\coloneqq \sup_{(x,a) \in \sX\times\sA} \sup_{\substack{\mu,\nu: \\ \tilde{\rho}_v(\mu,\nu)\leq r}} \|p(\,\cdot\,|x,a,\mu) - p(\,\cdot\,|x,a,\nu)\|_{v} \nonumber \\
\omega_{c}(r) &\coloneqq \sup_{(x,a) \in \sX\times\sA} \sup_{\substack{\mu,\nu: \\ \tilde{\rho}_v(\mu,\nu)\leq r}} |c(x,a,\mu) - c(x,a,\nu)|, \nonumber
\end{align}
where $\tilde{\rho}_v = \beta_v$ (see (\ref{beta})) if $c$ is unbounded, and $\tilde{\rho}_v=\rho$ if $c$ is bounded.

For any function $g: \P_v(\sX) \rightarrow \R$, we define the $v$-norm of $g$ as follows:
\begin{align}
\|g\|^*_v \coloneqq \sup_{\mu \in \P_v(\sX)} \frac{|g(\mu)|}{\mu(v)}. \nonumber
\end{align}
%Here, we use capital $V$ to denote the above norm in order not to confuse with the $v$-norm of real-valued functions over $\sX$.
\begin{itemize}
\item [(h)] We assume that $\omega_p(r) \rightarrow 0$ and $\omega_c(r) \rightarrow 0$ as $r\rightarrow0$. Moreover, for any $\mu \in \P_v(\sX)$, the functions
    \begin{align}
    \omega_p(\tilde{\rho}_v(\,\cdot\,,\mu)): \P_v(\sX) \rightarrow \R \nonumber \\
    \intertext{and}
    \omega_c(\tilde{\rho}_v(\,\cdot\,,\mu)): \P_v(\sX) \rightarrow \R \nonumber
    \end{align}
    have finite $v$-norm.
\item [(i)] There exists a non-negative real number $B$ such that
\begin{align}
\sup_{(a,\mu) \in \sA \times \P_v(\sX)} \int_{\sX} v^2(y) p(dy|x,a,\mu) \leq B v^2(x). \nonumber
\end{align}
\end{itemize}

\begin{remark}
Suppose that $v = w$. Define a metric $\lambda_{\sX}$ on $\sX$ as follows:
\begin{align}
\lambda_{\sX}(x,y) \coloneqq \begin{cases} 0, & \text{if $x=y$} \\
v(x) + v(y), & \text{if $x \neq y$} \end{cases} \nonumber
\end{align}
For any real-valued measurable function $g$ on $\sX$, define the Lipschitz seminorm of $g$ as:
\begin{align}
\|g\|_{\lambda} \coloneqq \sup_{x \neq y} \frac{|g(x) - g(y)|}{\lambda_{\sX}(x,y)}. \nonumber
\end{align}
Let $\Lip_{\lambda}(1,\R) \coloneqq \{g: \|g\|_{\lambda} \leq 1\}$. Then, for any $\mu, \nu \in \P_v(\sX)$, we have \cite[Lemma 2.1]{HaMa11}
\begin{align}
\|\mu - \nu\|_v = \sup_{g \in \Lip_{\lambda}(1,\R)} \biggl| \int_{\sX} g(x) \mu(dx) - \int_{\sX} g(x) \nu(dx) \biggr| \label{v-norm}.
\end{align}
This alternative formulation of $v$-norm will be useful when verifying Assumption~2-(h) for specific examples.
\end{remark}

\begin{remark}
In the remainder of this paper, all the proofs are obtained under the assumption that the cost function $c$ is unbounded. The bounded case can be covered by slight modification of the proofs for the unbounded case.
\end{remark}

Before proceeding to the next section, we prove an important (but straightforward) result which will be used in the sequel. It basically states that there is no loss of generality in restricting the infima in Def.~\ref{def1} to weakly continuous Markov policies.

\begin{theorem}\label{theorem0}
Suppose Assumption~1 holds. Then, for any policy ${\boldsymbol \pi}^{(N)} \in {\bf \sM}^{(N)}$, we have
\begin{align}
\inf_{\pi^i \in \sM_i} J_i^{(N)}({\boldsymbol \pi}^{(N)}_{-i},\pi^i) = \inf_{\pi^i \in \sM_i^c} J_i^{(N)}({\boldsymbol \pi}^{(N)}_{-i},\pi^i) \nonumber
\end{align}
for each $i=1,\ldots,N$.
\end{theorem}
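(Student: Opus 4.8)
The plan is to establish the nontrivial inequality $\inf_{\pi^i\in\sM_i^c}J_i^{(N)}(\boldsymbol{\pi}^{(N)}_{-i},\pi^i)\le\inf_{\pi^i\in\sM_i}J_i^{(N)}(\boldsymbol{\pi}^{(N)}_{-i},\pi^i)$; the reverse is immediate since $\sM_i^c\subset\sM_i$. Fixing the Markov policies $\boldsymbol{\pi}^{(N)}_{-i}$ of the other agents, the joint configuration $\mathbf{x}(t)=(x^N_1(t),\dots,x^N_N(t))$ is a Markov chain on $\sX^N$ whose law is controlled through $a^N_i(t)$ by $\pi^i$ alone. It therefore suffices to show that for every $\pi^i\in\sM_i$ with $J_i^{(N)}(\boldsymbol{\pi}^{(N)}_{-i},\pi^i)<\infty$ and every $\varepsilon>0$ there is a $\tilde\pi^i\in\sM_i^c$ with $J_i^{(N)}(\boldsymbol{\pi}^{(N)}_{-i},\tilde\pi^i)\le J_i^{(N)}(\boldsymbol{\pi}^{(N)}_{-i},\pi^i)+\varepsilon$. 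First I would truncate the horizon: the growth bound in Assumption~1-(c) is policy-independent and gives $E[w(x^N_i(t))]\le\alpha^t M$, so together with the cost bound in Assumption~1-(f) and $\alpha\beta\gamma<1$ from Assumption~1-(g) the tail $\sum_{t>T}\beta^t E[c(x^N_i(t),a^N_i(t),e^{(N)}_t)]$ is at most some $\eta(T)$ with $\eta(T)\to0$, uniformly over all Markov policies of agent~$i$. Hence it is enough to match the finite-horizon cost $J^T:=\sum_{t=0}^{T}\beta^tE[c(x^N_i(t),a^N_i(t),e^{(N)}_t)]$, which depends only on $\pi^i_0,\dots,\pi^i_T$, while defining $\tilde\pi^i_s$ to be an arbitrary fixed continuous kernel for $s>T$.

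The core of the argument is a telescoping over hybrid policies. For $k=0,\dots,T+1$ let $\sigma^k$ use $\tilde\pi^i_s$ for $s<k$ and $\pi^i_s$ for $k\le s\le T$, so that $J^T(\tilde\pi^i)-J^T(\pi^i)=\sum_{k=0}^{T}\bigl(J^T(\sigma^{k+1})-J^T(\sigma^k)\bigr)$, where $\sigma^k$ and $\sigma^{k+1}$ differ only in the policy used at time $k$, namely $\pi^i_k$ versus $\tilde\pi^i_k$. Consequently the law $\nu_k$ of $\mathbf{x}(k)$ is common to both, and I write $\lambda_k$ for its $x_i$-marginal. I would choose the continuous approximations sequentially in $k$: once $\tilde\pi^i_0,\dots,\tilde\pi^i_{k-1}$ are fixed, $\lambda_k$ is a well-defined Borel probability measure on $\sX$, and since $\sA$ is compact (Assumption~1-(b)) $\P(\sA)$ is compact metrizable. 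Lusin's theorem then yields a closed set $F_k\subset\sX$ with $\lambda_k(\sX\setminus F_k)$ as small as I wish and with $\pi^i_k|_{F_k}$ continuous, and Dugundji's extension theorem, applied to the convex compact set $\P(\sA)$, produces a continuous $\tilde\pi^i_k:\sX\to\P(\sA)$ agreeing with $\pi^i_k$ on $F_k$. This $\tilde\pi^i_k$ is a weakly continuous Markov kernel, as required for membership in $\sM_i^c$.

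It remains to bound each telescoping term. Let $H_k(\mathbf{x},a_i)$ denote the cost-to-go from time $k$ onward under $\sigma^k$ given $\mathbf{x}(k)=\mathbf{x}$ and $a^N_i(k)=a_i$; by linearity of the cost in the time-$k$ law,
\[
J^T(\sigma^{k+1})-J^T(\sigma^k)=\beta^k\int_{\sX^N}\int_{\sA}H_k(\mathbf{x},a_i)\,\bigl(\tilde\pi^i_k-\pi^i_k\bigr)(da_i\,|\,x_i)\,\nu_k(d\mathbf{x}).
\]
Because $\tilde\pi^i_k=\pi^i_k$ on $F_k$, the integrand vanishes unless $x_i\in\sX\setminus F_k$. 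Since $\sA$ is compact and $c$ is continuous, $H_k(\mathbf{x},\cdot)$ is bounded on $\sA$ by a function $G_k(\mathbf{x})$ that, by Assumptions~1-(c),(f),(g), obeys a $v$-type growth estimate $G_k(\mathbf{x})\le C_k\sum_{j=1}^N v(x_j)$ with $C_k<\infty$; as $\tilde\pi^i_k-\pi^i_k$ has total variation at most $2$, the $k$-th term is dominated by $2\beta^k\int_{\{x_i\in\sX\setminus F_k\}}G_k(\mathbf{x})\,\nu_k(d\mathbf{x})$. Each $v(x^N_j(k))$ is $\nu_k$-integrable by Assumptions~1-(c),(e), so this integral tends to $0$ as $\lambda_k(\sX\setminus F_k)\to0$; I therefore choose $F_k$ so that the $k$-th term is below $\varepsilon/(2(T+1))$. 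Summing over $k\le T$, using $J^T(\pi^i)\le J_i^{(N)}(\boldsymbol{\pi}^{(N)}_{-i},\pi^i)$ and $J_i^{(N)}(\boldsymbol{\pi}^{(N)}_{-i},\tilde\pi^i)\le J^T(\tilde\pi^i)+\eta(T)$, and finally taking $T$ with $\eta(T)\le\varepsilon/2$, yields the desired inequality.

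The step I expect to be the main obstacle is precisely what forces this somewhat indirect route. The coupling through the empirical measure makes the one-step kernel of the configuration chain, and hence the cost-to-go $H_k$, merely measurable (not continuous) in the coordinates $x_j$, $j\ne i$, because the fixed policies $\pi^j$ are themselves only measurable; this rules out the naive approach of passing to the limit via weak convergence of the induced marginals and continuous convergence of $H_k$. The device that circumvents it is to invoke Lusin's theorem in the form ``$\tilde\pi^i_k=\pi^i_k$ off a set of small $\lambda_k$-measure'' rather than ``$\tilde\pi^i_k\to\pi^i_k$ weakly $\lambda_k$-almost everywhere'': on the exceptional set one needs only the integrable $v$-growth of $H_k$, so no continuity of $H_k$---equivalently, no continuity of the other agents' policies---is ever used. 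The remaining points, namely the uniform tail estimate $\eta(T)$ and the finiteness of the growth constants $C_k$, are routine consequences of Assumptions~1-(c),(e),(f),(g).
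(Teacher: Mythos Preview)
Your proposal is correct and follows essentially the same route as the paper's proof in Appendix~A: both invoke Lusin's theorem together with the Dugundji extension theorem to replace each $\pi^i_t$ by a weakly continuous kernel agreeing with it off a set of small $x_i$-marginal measure, and then control the resulting cost difference via a telescoping (hybrid-policy) sum combined with the $v$-growth bounds from Assumption~1. The only cosmetic difference is that you first truncate the horizon and then sum finitely many error terms, whereas the paper works directly with the infinite horizon and bounds the total discrepancy by the convergent series $\sum_{t\ge 0} 2\beta^t L_t\int_{F_t^c} v\,d\mu_t$.
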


\begin{proof}
The proof is given in Appendix~\ref{app0}.
\end{proof}

\section{Mean-field games and mean-field equilibria}\label{sec3}

We begin by considering a mean-field game that can be interpreted as the infinite-population limit $N \to \infty$ of the game introduced in the preceding section. This mean-field game is specified by the quintuple $\bigl( \sX, \sA, p, c, \mu_0 \bigr)$, where, as before, $\sX$ and $\sA$ denote the state and action spaces, respectively, $p(\,\cdot\,|x,a,\mu)$ is the transition probability, and $c$ is the one-stage cost function. We also define the history spaces as $\sG_0 = \sX$ and $\sG_{t}=(\sX\times\sA)^{t}\times \sX$ for $t=1,2,\ldots$, which are endowed with their product Borel $\sigma$-algebras. A \emph{policy} is a sequence $\pi=\{\pi_{t}\}$ of stochastic kernels on $\sA$ given $\sG_{t}$. The set of all policies is denoted by $\Pi$. A \emph{Markov} policy is a sequence $\pi=\{\pi_{t}\}$ of stochastic kernels on $\sA$ given $\sX$. The set of Markov policies is denoted by $\sM$.

\begin{remark}
It is important to note that mean-field games are not games in the strict sense. As will be shown below, they are single-agent stochastic control problems with a constraint on the distribution of the state at each time step.
\end{remark}

In this section, we impose Assumption~1 on the components $\bigl( \sX, \sA, p, c, \mu_0 \bigr)$ of the mean-field game model.

Instead of $N$ agents in the original game, here we have a single agent and model the collective behavior of (a large population of) other agents by an exogenous \textit{state-measure flow} $\bmu := (\mu_t)_{t \ge 0} \subset \P(\sX)$  with a given initial condition $\mu_0$. We say that a policy $\pi^{*} \in \Pi$ is optimal for $\bmu$ if
\begin{align}
J_{\bmu}(\pi^{*}) = \inf_{\pi \in \Pi} J_{\bmu}(\pi), \nonumber
\end{align}
where
\begin{align}
J_{\bmu}(\pi) &\coloneqq  E^{\pi}\biggl[ \sum_{t=0}^{\infty} \beta^t c(x(t),a(t),\mu_t) \biggr] \nonumber
\end{align}
is the infinite-horizon discounted cost of policy $\pi$ with the measure flow $\bmu$. Here, the evolution of the states and actions is given by
\begin{align}
x(0) &\sim \mu_0, \nonumber \\
x(t) &\sim p(\,\cdot\,|x(t-1),a(t-1),\mu_{t-1}), \text{ } t=1,2,\ldots \nonumber \\
a(t) &\sim \pi_t(\,\cdot\,|g(t)), \text{ } t=0,1,\ldots, \nonumber
\end{align}
where $g(t) \in \sG_t$ is the state-action history  up to time $t$.

Let $\M \coloneqq \bigl\{\bmu \in \P(\sX)^{\infty}: \mu_0 \text{ is fixed}\bigr\}$ be the set of all state-measure flows with a given initial condition $\mu_0$. Define the set-valued mapping $\Phi : \M \rightarrow 2^{\Pi}$ %\label{eq1}
%\end{align}
as $\Phi({\boldsymbol \mu}) = \{\pi \in \Pi: \pi \text{ is optimal for } {\boldsymbol \mu}\}$. Conversely, we define a mapping $\Lambda : \Pi \to \M$ as follows: given $\pi \in \Pi$, the state-measure flow $\bmu := \Lambda(\pi)$ is constructed recursively as
\begin{align}
\mu_{t+1}(\,\cdot\,) = \int_{\sX \times \sA} p(\,\cdot\,|x(t),a(t),\mu_t) \rP^{\pi}(da(t)|x(t)) \mu_t(dx(t)), \nonumber
\end{align}
where $\rP^{\pi}(da(t)|x(t))$ denotes the conditional distribution of $a(t)$ given $x(t)$ under $\pi$ and $(\mu_{\tau})_{0\leq\tau\leq t}$. Note that if $\pi$ is a Markov policy (i.e., $\pi_t(da(t)|g(t)) = \pi_t(da(t)|x(t))$ for all $t$), then $\rP^{\pi}(da(t)|x(t)) = \pi_t(da(t)|x(t))$.

We are now in a position to introduce the notion of an equilibrium for the mean-field game:
\begin{definition}
A pair $(\pi,{\boldsymbol \mu}) \in \Pi \times \M$ is a \emph{mean-field equilibrium} if $\pi \in \Phi({\boldsymbol \mu})$ and $\bmu = \Lambda(\pi)$.
\end{definition}

The following structural result shows that the restriction to Markov policies entails no loss of optimality:
\begin{proposition}\label{lemma1}
For any state measure flow $\bmu \in \M$, we have
\begin{align}
\inf_{\pi \in \Pi} J_{\bmu}(\pi) = \inf_{\pi \in \sM} J_{\bmu}(\pi). \nonumber
\end{align}
Furthermore, we have $\Lambda(\Pi) = \Lambda(\sM)$; that is, for any $\pi \in \Pi$, there exists $\hat{\pi} \in \sM$ such that $\mu_t^{\pi} = \mu_t^{\hat{\pi}}$ for all $t\geq0$.
\end{proposition}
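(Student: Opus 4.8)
The plan is to handle both assertions through a single ``mimicking'' construction: given an arbitrary policy $\pi \in \Pi$, I build a Markov policy $\hat\pi \in \sM$ that reproduces the state-action statistics of $\pi$ one time step at a time. Since both $J_{\bmu}$ and $\Lambda$ depend on $\pi$ only through these statistics, the optimality identity and the flow identity will follow simultaneously.

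First I fix the flow relative to which the conditioning is performed. For the optimality claim I use the given exogenous flow $\bmu$; for the identity $\Lambda(\Pi)=\Lambda(\sM)$ I use the self-generated flow $\bmu^{\pi}\coloneqq\Lambda(\pi)$. In either case, running $\pi$ against the chosen flow determines a joint law for the state-action process $(x(t),a(t))_{t\ge0}$. Because $\sX$ and $\sA$ are Polish, the disintegration theorem produces a measurable stochastic kernel $\rP^{\pi}(\cdot\,|\,x)$, namely the regular conditional distribution of $a(t)$ given $x(t)=x$ under this law. I then set $\hat\pi_t(\cdot\,|\,x)\coloneqq\rP^{\pi}(\cdot\,|\,x)$, which defines a Markov policy $\hat\pi=\{\hat\pi_t\}$.

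The core of the argument is an induction on $t$ showing that, against the relevant flow, the joint law of $(x(t),a(t))$ under $\hat\pi$ coincides with that under $\pi$. The key structural fact is that the transition kernel enters only through the current pair $(x(t),a(t))$ together with the measure $\mu_t$, which by the inductive hypothesis is common to both constructions. Hence agreement of the joint law at time $t$ forces agreement of the state marginal at time $t+1$; applying $\hat\pi_{t+1}=\rP^{\pi}(\cdot\,|\,\cdot)$ then restores agreement of the conditional action law, closing the induction. The base case is immediate since $x(0)\sim\mu_0$ under both policies and $\hat\pi_0$ matches the $\pi$-conditional by construction. For optimality this gives $E^{\pi}[c(x(t),a(t),\mu_t)]=E^{\hat\pi}[c(x(t),a(t),\mu_t)]$ for every $t$, hence $J_{\bmu}(\pi)=J_{\bmu}(\hat\pi)$ term by term, and since $\sM\subset\Pi$ the two infima coincide. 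For the flow identity the recursion defining $\Lambda$ is exactly the one-step pushforward appearing in the induction, so $\mu_t^{\hat\pi}=\mu_t^{\pi}$ for all $t$, i.e.\ $\Lambda(\hat\pi)=\Lambda(\pi)$, and $\sM\subset\Pi$ yields $\Lambda(\Pi)=\Lambda(\sM)$.

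The genuine difficulties are measure-theoretic rather than probabilistic. The measurability of $x\mapsto\hat\pi_t(\cdot\,|\,x)$ as a stochastic kernel is precisely where the Polish hypothesis on $\sX,\sA$ is used, and I would invoke the standard disintegration and measurable-selection results (e.g.\ \cite{HeLa96}). The growth conditions of Assumption~1 (the drift bound in~(c), the cost bound via $M_t$ in~(f), and $\alpha\beta\gamma<1$ in~(g)) guarantee that each term $E^{\pi}[c(x(t),a(t),\mu_t)]$ is finite and that the discounted series converges, making the term-by-term comparison meaningful; I note, however, that the identity $J_{\bmu}(\pi)=J_{\bmu}(\hat\pi)$ holds in $[0,\infty]$ regardless of finiteness. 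I expect the main obstacle to be bookkeeping the two distinct conditioning flows within one construction: the kernel $\rP^{\pi}$ must be taken relative to the exogenous $\bmu$ in the first claim but relative to the self-generated $\Lambda(\pi)$ in the second, and the induction in the second case tracks a flow that is itself evolving rather than held fixed.
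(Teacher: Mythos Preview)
Your proposal is correct and follows essentially the same approach as the paper's proof: define $\hat\pi_t(\cdot\,|\,x)\coloneqq\rP^{\pi}(da(t)\,|\,x(t)=x)$ via disintegration, prove by induction on $t$ that the state marginals (hence the joint $(x(t),a(t))$ laws) agree, and conclude termwise equality of costs and of the $\Lambda$-recursion. You are in fact more explicit than the paper about the measurability ingredient and about the distinction between conditioning relative to the exogenous flow $\bmu$ (for the first claim) versus the self-generated flow $\Lambda(\pi)$ (for the second claim); the paper handles the first claim in detail and dismisses the second as ``similar.''
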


\begin{proof}
The proof is given in Appendix~\ref{app01}.
\end{proof}

\noindent In other words, we can restrict ourselves to Markov policies in the definitions of $\Phi$ and $\Lambda$ without loss of generality --- that is, we have $\Phi(\M) = 2^{\sM}$ and $\Lambda(\sM) = \M$. This implies that, unlike the finite-population case, the policy in the mean-field equilibrium, if it exists, constitutes a true Nash equilibrium of Markovian type for the infinite-population game problem. Put differently, this policy is player-by-player optimal in the class of \emph{all admissible} (not only Markov) policies.

\noindent The main result of this section is the existence of a mean-field equilibrium under Assumption~1.

\begin{theorem}\label{thm:MFE} Under Assumption~1, the mean-field game $(\sX,\sA,p,c,\mu_0)$ admits a mean-field equilibrium $(\pi,\bnu) \in \sM \times \M$.
\end{theorem}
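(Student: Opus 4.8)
The plan is to realize a mean-field equilibrium as a fixed point of the composite set-valued map $\Gamma := \Phi\circ\Lambda$ and to invoke the Kakutani--Fan--Glicksberg fixed-point theorem. First I would cut the problem down to a compact convex domain of flows. Using Assumption~1-(c),(e) one checks by induction that every flow of the form $\Lambda(\pi)$ satisfies $\int_\sX w\,d\mu_t \le \alpha^t M$ for all $t$, so it suffices to search inside
$$\Xi := \Big\{\bmu\in\M : \textstyle\int_\sX w\,d\mu_t \le \alpha^t M \ \text{for all } t\ge0\Big\} = \prod_{t\ge0}\P_v^t(\sX).$$
Each factor $\P_v^t(\sX)$ is convex and, because $w$ is a moment function, tight, hence weakly compact by Prokhorov; it is weakly closed since $\mu\mapsto\int w\,d\mu$ is weakly lower semicontinuous. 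By Tychonoff, $\Xi$ is a nonempty, convex, metrizable, weakly compact subset of the locally convex space of signed-measure sequences, and $\Lambda(\Pi)\subset\Xi$. Assumptions~1-(f),(g) guarantee finiteness of the cost on $\Xi$: from $c(x,a,\mu_t)\le\gamma^tR\,v(x)$ and $E^\pi[v(x(t))]\le\alpha^tM$ one gets
$$J_\bmu(\pi)\le RM\sum_{t\ge0}(\alpha\beta\gamma)^t<\infty.$$

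Second, for each fixed $\bmu\in\Xi$ I would solve the associated single-agent control problem, a time-inhomogeneous infinite-horizon discounted MDP with weakly continuous kernel (1-(d)), continuous one-stage cost (1-(a)), and compact action set (1-(b)). Since $\alpha\beta\gamma<1$, value iteration converges geometrically in the $v$-weighted norm to a solution $(V_t)_{t\ge0}$ of the Bellman system
$$V_t(x)=\min_{a\in\sA}\Big\{c(x,a,\mu_t)+\beta\int_\sX V_{t+1}(y)\,p(dy|x,a,\mu_t)\Big\},$$
and a standard measurable-selection argument (lower semicontinuity of the bracket in $a$ together with compactness of $\sA$) produces an optimal Markov policy, so $\Phi(\bmu)\ne\emptyset$; by Proposition~\ref{lemma1} the restriction to $\sM$ is without loss. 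Crucially, a Markov policy is optimal if and only if, for each $t$, $\pi_t(\,\cdot\,|x)$ is supported on the argmin set $A_t(x)$ of the bracket for $\mu_t$-a.e.\ $x$; since mixtures of measures supported on $A_t(x)$ remain so supported, $\Phi(\bmu)$ is \emph{convex}. This is precisely why I place the fixed point in the policy variable: equip $\sM$ with a topology (of relaxed-control / Young type) under which it is a compact convex subset of a locally convex space, so that $\Gamma=\Phi\circ\Lambda:\sM\to2^{\sM}$ has nonempty, convex, closed values.

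The main obstacle is the closed-graph (upper semicontinuity) requirement, which is where essentially all the hypotheses enter. I must establish two continuity statements. First, $\Lambda$ is continuous: writing $\mu^n_t$ for the $t$-th element of $\Lambda(\pi^n)$, one shows by induction on $t$ that $\pi^n\to\pi$ forces $\mu^n_t\to\mu_t$ weakly, by passing to the limit in
$$\mu^n_{t+1}(\,\cdot\,)=\int_{\sX\times\sA} p(\,\cdot\,|x,a,\mu^n_t)\,\pi^n_t(da|x)\,\mu^n_t(dx)$$
using the weak continuity of $p$ in 1-(d). Second, $\Phi$ has closed graph: if $\bmu^n\to\bmu$ in $\Xi$ and $\pi^n\in\Phi(\bmu^n)$ with $\pi^n\to\pi$, then $\pi\in\Phi(\bmu)$. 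This reduces to showing that the value functions depend continuously on the flow, so that $V^{(n)}_t\to V_t$ --- here one uses continuity of $c$ and $p$ in $\mu$ (1-(a),(d)) together with Remark~\ref{remark1}, which upgrades weak continuity of $p$ to $v$-topology continuity and thereby licenses passage to the limit in $\int V^{(n)}_{t+1}\,p(dy|x,a,\mu^n_t)$ for the $v$-bounded functions $V^{(n)}_{t+1}$ --- followed by a Berge-type argument that limits of argmin-supported policies are argmin-supported. With these in hand, $\Gamma$ is upper semicontinuous with closed graph, and Kakutani--Fan--Glicksberg yields $\pi^*\in\Gamma(\pi^*)=\Phi(\Lambda(\pi^*))$; setting $\bnu:=\Lambda(\pi^*)$ gives the desired mean-field equilibrium $(\pi^*,\bnu)\in\sM\times\M$. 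I expect the two delicate points to be the choice of a policy topology that is simultaneously compact and fine enough to make $\Lambda$ continuous, and the uniform-in-$\bmu$ control of value iteration needed for the continuity of $\bmu\mapsto(V_t)$.
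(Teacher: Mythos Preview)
Your overall architecture---Kakutani on a best-response map---is sound, but it diverges from the paper at exactly the point you flag as delicate, and that point is in fact a gap you have not closed. You place the fixed point in the \emph{policy} space $\sM$, and you need $\sM$ to be a compact convex subset of a locally convex space under a topology fine enough that $\Lambda$ is continuous and $\Phi$ has closed graph. For sequences of stochastic kernels $\pi_t:\sX\to\P(\sA)$ on a noncompact state space with no fixed reference measure, no such topology is supplied; the usual Young-measure compactifications require a fixed $\sX$-marginal, which here depends on the policy through $\Lambda$ itself. Without this, neither compactness of $\sM$ nor continuity of $\Lambda$ is established, and Kakutani cannot be invoked.

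The paper avoids this difficulty by the Jovanovic--Rosenthal device: it moves the fixed point from policies to \emph{state-action measure flows}. Concretely, it works on
\[
\Xi=\prod_{t\ge0}\P_v^t(\sX\times\sA),
\]
and defines $\Gamma(\bnu)=C(\bnu)\cap B(\bnu)$, where $C(\bnu)$ encodes the flow-consistency $\nu'_{t+1,1}=\int p(\,\cdot\,|x,a,\nu_{t,1})\,\nu_t(dx,da)$ and $B(\bnu)$ encodes that each $\nu'_t$ is supported on the argmin set of the Bellman operator. Compactness of $\Xi$ now follows directly from the moment bound on the $\sX$-marginal plus compactness of $\sA$ via Prokhorov; convexity is immediate since these are genuine probability measures, not kernels. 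The closed-graph argument then reduces to showing that the value functions $J^{\bnu^{(n)}}_{*,t}$ converge continuously to $J^{\bnu}_{*,t}$ as $\bnu^{(n)}\to\bnu$, which the paper obtains by a contraction/successive-approximation argument on a weighted product space. A fixed point $\bnu\in\Gamma(\bnu)$ is then disintegrated as $\nu_t(dx,da)=\nu_{t,1}(dx)\,\pi_t(da|x)$ to produce the equilibrium policy.

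In short: the paper and you both use Kakutani and both need continuity of value functions in the flow, but the paper's choice of ambient space (joint state-action measures rather than conditional kernels) is what makes compactness and convexity automatic and eliminates the policy-topology problem you left open.
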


\subsection*{Proof of Theorem~\ref{thm:MFE}}\label{sec3-1}

For each $t\geq0$, let us define
\begin{align}
L_t \coloneqq \sum_{k=t}^{\infty} (\beta \alpha)^{k-t} M_k, \nonumber
\end{align}
where $\alpha$ and $M_k$ are constants defined in Assumption~1-(c) and (f), respectively. Note that
\begin{align}
L_t = M_t + (\beta \alpha) L_{t+1}. \nonumber
\end{align}
For each $t\geq0$, we define
\begin{align}
C_v^t(\sX) &\coloneqq \bigl\{ u \in C_v(\sX): \|u\|_v \leq L_t \bigr\} \nonumber \\
\intertext{and}
\P_v^t(\sX \times \sA) &\coloneqq \bigl\{ \mu \in \P(\sX \times \sA): \mu_1 \in \P_v^t(\sX) \bigr\}, \nonumber
\end{align}
where for any $\nu \in \P(\sX \times \sA)$, $\nu_1$ denotes the marginal of $\nu$ on $\sX$; that is,
\begin{align}
\nu_1(\,\cdot\,) \coloneqq \nu(\,\cdot\, \times \sA). \nonumber
\end{align}
Moreover, we define
\begin{align}
\C &\coloneqq \prod_{t=0}^{\infty} C_v^t(\sX) \nonumber \\
\Xi &\coloneqq \prod_{t=0}^{\infty} \P_v^t(\sX \times \sA). \nonumber
\end{align}
We equip $\C$ with the following metric:
\begin{align}
\rho(\bu,\bv) \coloneqq \sum_{t=0}^{\infty} \sigma^{-t} \|u_t - v_t\|_v, \nonumber
\end{align}
where $\sigma>0$ is chosen so that $\sigma > \gamma$ and $\sigma \beta \alpha <1$. The first condition and Assumption~1-(g) guarantee that $\rho(\bu,\bv) <\infty$ for all $\bu,\bv \in \C$. It can also be proved that $\C$ is complete with respect to $\rho$.

For any $\bnu \in \Xi$ and $t\geq0$, we define the operator $T_t^{\bnu}$ as
\begin{align}
T_t^{\bnu} u(x) = \min_{a \in \sA} \biggl[ c(x,a,\nu_{t,1}) + \beta \int_{\sX} u(y) p(dy|x,a,\nu_{t,1}) \biggr], \nonumber
\end{align}
where $u: \sX \rightarrow \R$.

\begin{lemma}\label{newlemma1}
Let $\bnu \in \Xi$ be arbitrary. Then, for all $t\geq0$, $T_t^{\bnu}$ maps $C_v^{t+1}(\sX)$ into $C_v^t(\sX)$. In addition, for any $u$, $r \in C_v^{t+1}(\sX)$, we have
\begin{align}
\|T_t^{\bnu} u - T_t^{\bnu} r\|_v \leq \alpha \beta \|u-r\|_v. \label{neweq1}
\end{align}
\end{lemma}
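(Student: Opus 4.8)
The plan is to analyze, for fixed $\bnu \in \Xi$ and $u \in C_v^{t+1}(\sX)$, the integrand
$$F(x,a) \coloneqq c(x,a,\nu_{t,1}) + \beta \int_{\sX} u(y)\, p(dy|x,a,\nu_{t,1}),$$
and then pass to the pointwise minimum $T_t^{\bnu} u(x) = \min_{a\in\sA} F(x,a)$. Since $\bnu \in \Xi = \prod_{t}\P_v^t(\sX\times\sA)$, its $t$-th marginal satisfies $\nu_{t,1} \in \P_v^t(\sX)$, which is precisely the condition under which the cost bound of Assumption~1-(f) becomes available. I would treat the three assertions of the lemma in turn: continuity of $T_t^{\bnu} u$, the $v$-norm bound $\|T_t^{\bnu} u\|_v \le L_t$, and the Lipschitz estimate \eqref{neweq1}.

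First I would prove continuity. The map $(x,a)\mapsto c(x,a,\nu_{t,1})$ is continuous by Assumption~1-(a). For the integral term, I would take $(x_n,a_n)\to(x,a)$ and invoke Assumption~1-(d) in the equivalent $v$-topology form recorded in Remark~\ref{remark1}, giving $p(\,\cdot\,|x_n,a_n,\nu_{t,1}) \to p(\,\cdot\,|x,a,\nu_{t,1})$ in the $v$-topology; since $u \in C_v(\sX)$, the characterization $\rho_v(\mu_n,\mu)\to 0 \Leftrightarrow \mu_n(g)\to\mu(g)$ for all $g\in C_v(\sX)$ then yields convergence of the integrals. Hence $F$ is continuous on $\sX\times\sA$, and because $\sA$ is compact (Assumption~1-(b)) the minimum is attained for each $x$ and $x\mapsto\min_a F(x,a)$ is continuous by Berge's maximum theorem; a measurable minimizing selector exists by standard measurable-selection results.

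Next I would verify the norm bound. For the upper bound, fixing any $a$ gives $T_t^{\bnu} u(x) \le c(x,a,\nu_{t,1}) + \beta\|u\|_v \int_{\sX} v\, dp$, where $c(x,a,\nu_{t,1}) \le M_t v(x)$ by Assumption~1-(f) (applicable since $\nu_{t,1}\in\P_v^t(\sX)$) and $\int_{\sX} v\, dp \le \alpha v(x)$ by Assumption~1-(c) with $v=w$. Combined with $\|u\|_v \le L_{t+1}$ and the recursion $L_t = M_t + \beta\alpha L_{t+1}$, this gives $T_t^{\bnu} u(x) \le L_t v(x)$. For the lower bound, $c\ge 0$ together with the same transition estimate yields $T_t^{\bnu} u(x) \ge -\beta\alpha\|u\|_v v(x) \ge -\beta\alpha L_{t+1} v(x) \ge -L_t v(x)$. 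Hence $|T_t^{\bnu} u(x)| \le L_t v(x)$ for all $x$, so $T_t^{\bnu} u \in C_v^t(\sX)$.

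Finally, the estimate \eqref{neweq1} follows from the elementary inequality $|\min_a f(a) - \min_a g(a)| \le \sup_a|f(a)-g(a)|$, applied with $f(a) = c(x,a,\nu_{t,1}) + \beta\int u\, dp$ and $g(a) = c(x,a,\nu_{t,1}) + \beta\int r\, dp$: the cost terms cancel, leaving $|T_t^{\bnu} u(x) - T_t^{\bnu} r(x)| \le \beta \sup_a \int_{\sX} |u-r|\, dp \le \beta\|u-r\|_v \sup_a \int_{\sX} v\, dp \le \alpha\beta\|u-r\|_v\, v(x)$ by Assumption~1-(c); dividing by $v(x)$ and taking the supremum over $x$ gives the claim. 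I expect the continuity step to be the main obstacle: because $u$ need not be bounded, ordinary weak continuity of $p$ is insufficient, and it is exactly the $v$-topology continuity furnished by Assumption~1-(d) and Remark~\ref{remark1} that makes $(x,a)\mapsto\int u\, dp$ continuous. The remaining two claims are routine consequences of Assumptions~1-(c),(f) and the defining recursion for $L_t$.
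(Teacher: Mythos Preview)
Your proposal is correct and follows essentially the same approach as the paper: the paper cites \cite[Proposition 7.32]{BeSh78} for continuity of the minimum (your Berge argument), bounds $\|T_t^{\bnu}u\|_v$ via Assumption~1-(c),(f) and the recursion $L_t=M_t+\beta\alpha L_{t+1}$ exactly as you do, and declares the contraction estimate ``straightforward,'' which is precisely the $|\min f-\min g|\le\sup|f-g|$ argument you spell out. Your treatment is simply more detailed, particularly in explaining why unboundedness of $u$ forces one to use the $v$-topology continuity of $p$ (Remark~\ref{remark1}) rather than mere weak continuity.
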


\begin{proof}
Let $u \in C_v^{t+1}(\sX)$. By \cite[Proposition 7.32]{BeSh78}, $T_t^{\bnu}u$ is continuous. We also have
\begin{align}
\|T_t^{\bnu}u\|_v
&\leq \sup_{(x,a) \in \sX \times \sA} \frac{\biggl| c(x,a,\nu_{t,1}) + \beta \int_{\sX} u(y) p(dy|x,a,\nu_{t,1}) \biggr|}{v(x)} \nonumber\\
&\leq \sup_{(x,a) \in \sX \times \sA} \frac{M_t v(x) + \beta \alpha L_{t+1} v(x)}{v(x)} \nonumber \\
&=  M_t + \beta \alpha L_{t+1} = L_t, \nonumber
\end{align}
where the last inequality follows from Assumption~1-(c) and (f). This completes the proof of the first statement. The proof of the second statement is straightforward, so we omit the details.
\end{proof}

Using operators $\{T_t^{\bnu}\}_{t\geq0}$, let us define the operator $T^{\bnu}: \C \rightarrow \C$ as
\begin{align}
\bigl( T^{\bnu} \bu \bigr)_t = T_t^{\bnu} u_{t+1}, \text{ for } t\geq0. \label{eq4}
\end{align}
By Lemma~\ref{newlemma1}, $T^{\bnu}$ is a well defined operator; that is, it maps $\C$ into itself.

Since $T_t^{\bnu}$ satisfies (\ref{neweq1}) for all $t\geq0$, it can be shown that $T^{\bnu}$ is a contraction operator on $\C$ with modulus $\sigma \beta \alpha < 1$. Hence, $T^{\bnu}$ has a unique fixed point by the Banach fixed point theorem, as $\C$ is complete with respect to $\rho$.

For any $\bnu \in \Xi$, we let $J_{*,t}^{\bnu}: \sX \rightarrow \R$ denote the discounted-cost value function at time $t$ of the nonhomogeneous Markov decision process with the one-stage cost functions $\bigl\{c(x,a,\nu_{t,1})\bigr\}_{t\geq0}$ and the transition probabilities $\bigl\{p(\,\cdot\,|x,a,\nu_{t,1})\bigr\}_{t\geq0}$. Under Assumption~1, it can be proved that, for all $t\geq0$, $J_{*,t}^{\bnu}$ is continuous. Let $J_{*}^{\bnu} \coloneqq \bigl( J^{\bnu}_{*,t}\bigr)_{t\geq0}$.

\begin{lemma}\label{newlemma2}
For any $\bnu \in \Xi$, we have $J_{*}^{\bnu} \in \C$.
\end{lemma}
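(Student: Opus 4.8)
The plan is to verify directly that every coordinate $J_{*,t}^{\bnu}$ lies in $C_v^t(\sX)$, which is precisely the requirement for $J_*^{\bnu} \in \C = \prod_{t\geq0} C_v^t(\sX)$. Continuity of $J_{*,t}^{\bnu}$ has already been recorded in the paragraph preceding the lemma, so the real content is the $v$-norm bound $\|J_{*,t}^{\bnu}\|_v \leq L_t$. Since $c$ is non-negative, so is $J_{*,t}^{\bnu}$, and it therefore suffices to establish the uniform upper bound $J_{*,t}^{\bnu}(x) \leq L_t\,v(x)$ for all $x \in \sX$.

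To do this I would first write the value function as the infimum over policies of the tail discounted cost of the subprocess initialized at time $t$ in state $x$,
\begin{align}
J_{*,t}^{\bnu}(x) = \inf_{\pi} E^{\pi}_{t,x}\biggl[ \sum_{s=t}^{\infty} \beta^{s-t} c(x(s),a(s),\nu_{s,1}) \biggr], \nonumber
\end{align}
and then bound the summand pathwise. Since $\bnu \in \Xi$, we have $\nu_{s,1} \in \P_v^s(\sX)$ for every $s$, so Assumption~1-(f) gives $c(x(s),a(s),\nu_{s,1}) \leq M_s\,v(x(s))$ along any trajectory, reducing the cost estimate to a moment estimate for the state process.

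The moment estimate comes from Assumption~1-(c): with $v=w$ it reads $\int_{\sX} v(y)\,p(dy|x,a,\mu) \leq \alpha\,v(x)$ uniformly in $(a,\mu)$, so by the tower property and induction $E^{\pi}_{t,x}[v(x(s))] \leq \alpha^{s-t}\,v(x)$ for every policy $\pi$. Combining the two bounds and summing, using $L_t = \sum_{s\geq t}(\beta\alpha)^{s-t}M_s$, gives
\begin{align}
E^{\pi}_{t,x}\biggl[ \sum_{s=t}^{\infty} \beta^{s-t} c(x(s),a(s),\nu_{s,1}) \biggr] \leq v(x)\sum_{s=t}^{\infty}(\beta\alpha)^{s-t}M_s = L_t\,v(x). \nonumber
\end{align}
Because this holds for every $\pi$, it survives passage to the infimum, yielding $0 \leq J_{*,t}^{\bnu}(x) \leq L_t\,v(x)$ and hence $\|J_{*,t}^{\bnu}\|_v \leq L_t$; with continuity this gives $J_{*,t}^{\bnu} \in C_v^t(\sX)$ and so $J_*^{\bnu} \in \C$.

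The argument is largely routine, and the one point requiring care is the uniformity of the moment bound over all admissible policies, which is what lets me pass the estimate through the infimum; this uniformity is guaranteed precisely because the drift inequality in Assumption~1-(c) holds uniformly over the action $a$, making the one-step contraction of $v$ action-independent. A tempting alternative is to note that $J_*^{\bnu}$ satisfies the Bellman recursion $J_{*,t}^{\bnu} = T_t^{\bnu} J_{*,t+1}^{\bnu}$ and identify it with the unique fixed point of $T^{\bnu}$; but since membership in $\C$ is exactly the hypothesis under which that fixed point was produced, I prefer the self-contained direct estimate to avoid circularity.
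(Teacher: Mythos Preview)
Your proposal is correct and follows essentially the same approach as the paper: bound the cost under an arbitrary policy by $\sum_{s\geq t}\beta^{s-t}M_s\,E^{\pi}_{t,x}[v(x(s))]$ via Assumption~1-(f), control the moment by $\alpha^{s-t}v(x)$ via Assumption~1-(c), and sum to obtain $L_t\,v(x)$. The paper's proof is in fact slightly terser---it simply fixes one arbitrary Markov policy $\pi$ and observes $J_{*,t}^{\bnu}(y)\leq E^{\pi}[\,\cdots\,]$, rather than explicitly passing the bound through the infimum---but the content is identical.
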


\begin{proof}
Let $\pi$ be arbitrary Markov policy. Note that for all $t\geq0$, we have
\begin{align}
J^{\bnu}_{*,t}(y) &\leq \sum_{k=t}^{\infty}  \beta^{k-t} E^{\pi}\bigl[ c(x(k),a(k),\nu_{k,1}) \bigr| x(t)=y \bigr] \nonumber \\
&\leq \sum_{k=t}^{\infty} \beta^{k-t} M_k E^{\pi}\bigl[ v(x(k)) \bigr| x(t)=y \bigr] \text{ }(\text{by Assumption~1-(f)}) \nonumber \\
&\leq \sum_{k=t}^{\infty} \beta^{k-t} M_k \alpha^{k-t} v(y) \text{ }(\text{by Assumption~1-(c)}) \nonumber \\
&= L_t v(y). \nonumber
\end{align}
Hence, $J^{\bnu}_{*,t} \in C_v^t(\sX)$. This completes the proof.
\end{proof}

The following theorem is a known result in the theory of nonhomogeneous Markov decision processes (see \cite[Theorems 14.4 and 17.1]{Hin70}).

\begin{theorem}\label{theorem1}
For any $\bnu \in \Xi$, the collection of value functions $J^{\bnu}_{*}$ is the unique fixed point of the operator $T^{\bnu}$. Furthermore, $\pi \in \sM$ is optimal if and only if
\begin{align}
\nu_t^{\pi} \biggl( \biggr\{ (x,a) : c(x,a,\nu_{t,1}) + &\beta \int_{\sX} J_{*,t+1}^{\bnu}(y) p(dy|x,a,\nu_{t,1}) \nonumber \\
&= T_t^{\bnu} J_{*,t+1}^{\bnu}(x) \biggr\} \biggr) = 1, \label{eq5}
\end{align}
where $\nu_t^{\pi} = {\cal L}\bigl( x(t),a(t) \bigr)$ under $\pi$ and $\bnu$.
\end{theorem}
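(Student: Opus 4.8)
The plan is to split the statement into two parts: first identifying $J_*^{\bnu}$ as the unique fixed point of $T^{\bnu}$, and then deriving the pointwise optimality characterization \eqref{eq5}. Since the contraction property of $T^{\bnu}$ on $\C$ together with the Banach fixed point theorem already guarantees a \emph{unique} fixed point, and since Lemma~\ref{newlemma2} places $J_*^{\bnu}$ inside $\C$, it suffices to verify that $J_*^{\bnu}$ satisfies the dynamic programming equation $J_{*,t}^{\bnu} = T_t^{\bnu} J_{*,t+1}^{\bnu}$ for every $t \ge 0$; uniqueness then forces $J_*^{\bnu}$ to be \emph{the} fixed point.

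To establish the dynamic programming equation I would prove the two inequalities separately. For the lower bound $J_{*,t}^{\bnu} \ge T_t^{\bnu} J_{*,t+1}^{\bnu}$, I would condition an arbitrary Markov policy on its action at time $t$, use the Markov (shift-invariance) structure to recognize the continuation cost from time $t+1$ as being bounded below by $J_{*,t+1}^{\bnu}$, and then take the infimum over the first action and over all policies. For the reverse bound I would invoke a measurable selection theorem (as in \cite[Proposition 7.33]{BeSh78}, applicable because $\sA$ is compact and the map $a \mapsto c(x,a,\nu_{t,1}) + \beta \int_{\sX} J_{*,t+1}^{\bnu}(y)\, p(dy|x,a,\nu_{t,1})$ is continuous by Assumption~1-(a),(d) and Lemma~\ref{newlemma1}) to obtain a deterministic Markov selector $f_t$ attaining the minimum in $T_t^{\bnu}$; iterating the resulting one-step identity and letting the horizon tend to infinity yields that the selector policy $f = (f_t)$ achieves $T_t^{\bnu}J_{*,t+1}^{\bnu}$, hence $J_{*,t}^{\bnu} \le T_t^{\bnu}J_{*,t+1}^{\bnu}$.

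For the optimality characterization I would introduce the nonnegative discrepancy
\[
\phi_t(x,a) \coloneqq c(x,a,\nu_{t,1}) + \beta \int_{\sX} J_{*,t+1}^{\bnu}(y)\, p(dy|x,a,\nu_{t,1}) - J_{*,t}^{\bnu}(x) \ge 0,
\]
whose nonnegativity is exactly the dynamic programming equation established above. A telescoping (Dynkin-type) identity then gives, for the total discounted cost of any $\pi \in \sM$, the relation $E^{\pi}\bigl[\sum_{t\ge0}\beta^t c(x(t),a(t),\nu_{t,1})\bigr] - \mu_0(J_{*,0}^{\bnu}) = \sum_{t=0}^{\infty} \beta^t\, \nu_t^{\pi}(\phi_t)$, so that $\pi$ is optimal precisely when every term $\nu_t^{\pi}(\phi_t)$ vanishes; since $\phi_t \ge 0$, this is equivalent to $\phi_t = 0$ holding $\nu_t^{\pi}$-almost surely for each $t$, which is exactly \eqref{eq5}.

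The main obstacle I anticipate is not the algebra but the justification of the infinite-horizon limiting arguments on the noncompact state space with unbounded cost: in both the attainment step and the telescoping identity one must show that the tail term $\beta^{n} E^{\pi}\bigl[J_{*,t+n}^{\bnu}(x(t+n))\,\big|\,x(t)=y\bigr]$ vanishes as $n \to \infty$. This is where the weighted framework does the work: the bound $\|J_{*,t+n}^{\bnu}\|_v \le L_{t+n}$ from Lemma~\ref{newlemma2} combined with the moment estimate $E^{\pi}[v(x(t+n))\,|\,x(t)=y] \le \alpha^{n} v(y)$ from Assumption~1-(c) reduces the tail to at most $(\beta\alpha)^{n} L_{t+n}\, v(y)$, which tends to $0$ because the choices $\sigma > \gamma$ and $\sigma\beta\alpha < 1$ (hence $\alpha\beta\gamma<1$ by Assumption~1-(g)) force $(\beta\alpha)^{n} L_{t+n} \to 0$. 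Verifying this decay carefully, and ensuring that all interchanges of summation, integration, and limits are legitimate under these $v$-norm bounds, is the technical heart of the argument.
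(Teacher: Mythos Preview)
Your proof proposal is correct and essentially reconstructs the standard argument for nonhomogeneous discounted MDPs in the weighted-norm setting. Note, however, that the paper does not give its own proof of this theorem: it simply states that the result ``is a known result in the theory of nonhomogeneous Markov decision processes'' and cites \cite[Theorems~14.4 and 17.1]{Hin70}. So there is nothing to compare against at the level of technique; your write-up supplies precisely the details the paper chose to outsource. The ingredients you identify---contraction of $T^{\bnu}$ on $\C$ for uniqueness, the two-sided dynamic programming inequality with a measurable selector for attainment, and the telescoping decomposition $J_{\bmu}(\pi)-\mu_0(J^{\bnu}_{*,0})=\sum_{t\ge0}\beta^t\,\nu_t^{\pi}(\phi_t)$ with the tail controlled via $(\beta\alpha)^nL_{t+n}\to0$---are exactly what Hinderer's argument amounts to in this framework, so your plan would serve as a self-contained replacement for the citation.
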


\noindent To prove the existence of a mean-field equilibrium, we adopt the technique of Jovanovic and Rosenthal \cite{JoRo88}. Define the set-valued mapping $\Gamma: \Xi \rightarrow 2^{\P(\sX \times \sA)^{\infty}}$ as follows:
\begin{align}
\Gamma(\bnu) = C(\bnu) \cap B(\bnu), \nonumber
\end{align}
where
\begin{align}
C(\bnu) &\coloneqq \biggl\{ \bnu' \in \P(\sX \times \sA)^{\infty}: \nu'_{0,1} = \mu_0 \text{ and } \nonumber \\
&\phantom{xxxxxxxxxxxxxx}\nu'_{t+1,1}(\,\cdot\,) = \int_{\sX \times \sA} p(\,\cdot\,|x,a,\nu_{t,1}) \nu_t(dx,da) \biggr\} \nonumber \\
\intertext{and}
B(\bnu) &\coloneqq \biggl\{ \bnu' \in \P(\sX \times \sA)^{\infty}: \forall t\geq0, \text{ } \nu_t' \biggl( \biggr\{ (x,a) : c(x,a,\nu_{t,1}) \nonumber \\
&\phantom{xxxxxxxx}+ \beta \int_{\sX} J_{*,t+1}^{\bnu}(y) p(dy|x,a,\nu_{t,1}) = T_t^{\bnu} J^{\bnu}_{*,t+1}(x) \biggr\} \biggr) = 1 \biggr\}. \nonumber
\end{align}
The following proposition implies that the image of $\Xi$ under $\Gamma$ is contained in $2^{\Xi}$.

\begin{proposition}\label{prop2}
For any $\bnu \in \Xi$, we have $\Gamma(\bnu) \subset \Xi$.
\end{proposition}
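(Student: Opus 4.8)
The plan is to show that membership in $\Gamma(\bnu)$ forces every one-step marginal of $\bnu'$ to obey the $w$-moment bound that defines $\Xi$. Since $\Gamma(\bnu) = C(\bnu) \cap B(\bnu) \subset C(\bnu)$, only the constraints encoded in $C(\bnu)$ will be used; the optimality constraints in $B(\bnu)$ play no role in this statement. The crucial structural observation is that the recursion defining $C(\bnu)$ expresses $\nu'_{t+1,1}$ directly in terms of the \emph{given} flow $\bnu$ (through $\nu_t$ and $\nu_{t,1}$) rather than in terms of $\bnu'$ itself. Consequently the required bounds can be checked componentwise, and no induction on $t$ is needed.

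First I would dispose of the base case $t=0$: by definition of $C(\bnu)$ we have $\nu'_{0,1} = \mu_0$, and Assumption~1-(e) (with $v = w$ in the unbounded case) gives $\int_{\sX} w \, d\mu_0 = M = \alpha^0 M$, so that $\nu'_{0,1} \in \P_v^0(\sX)$. Next, for each fixed $t \geq 0$, I would estimate the $w$-moment of $\nu'_{t+1,1}$. Using the recursion from $C(\bnu)$ together with Tonelli's theorem (legitimate since all integrands are non-negative), one writes
\begin{align}
\int_{\sX} w(y)\, \nu'_{t+1,1}(dy) = \int_{\sX \times \sA} \left( \int_{\sX} w(y)\, p(dy|x,a,\nu_{t,1}) \right) \nu_t(dx,da). \nonumber
\end{align}
Assumption~1-(c) bounds the inner integral by $\alpha\, w(x)$, so the right-hand side is at most $\alpha \int_{\sX} w(x)\, \nu_{t,1}(dx)$. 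Since $\bnu \in \Xi$, we have $\nu_{t,1} \in \P_v^t(\sX)$, i.e.\ $\int_{\sX} w \, d\nu_{t,1} \leq \alpha^t M$, whence $\int_{\sX} w \, d\nu'_{t+1,1} \leq \alpha^{t+1} M < \infty$. This shows both that $\nu'_{t+1,1} \in \P_v(\sX)$ (its $w$-moment is finite) and that the defining inequality of $\P_v^{t+1}(\sX)$ holds, so $\nu'_{t+1,1} \in \P_v^{t+1}(\sX)$.

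Combining the base case with this bound for every $t \geq 0$ yields $\nu'_{t,1} \in \P_v^t(\sX)$ for all $t$, that is, $\bnu' \in \Xi$, which is the claim. I do not anticipate a genuine obstacle here: the proposition reduces to a single moment computation. The only points that warrant a moment of care are the interchange of the order of integration (justified by non-negativity via Tonelli) and the observation that the $C(\bnu)$ recursion references the fixed flow $\bnu$ rather than the candidate $\bnu'$ — precisely the feature that lets the argument avoid any circularity and dispense with an inductive passage from step $t$ to step $t+1$.
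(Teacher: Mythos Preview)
Your proof is correct and follows essentially the same approach as the paper: reduce to $C(\bnu) \subset \Xi$ and bound the $w$-moment of $\nu'_{t+1,1}$ via Assumption~1-(c) together with $\nu_{t,1} \in \P_v^t(\sX)$. The paper phrases this as an induction on $t$, but, as you correctly observe, the inductive hypothesis on $\nu'_{t,1}$ is never actually invoked because the recursion in $C(\bnu)$ references the fixed flow $\bnu$; your direct (non-inductive) presentation is slightly cleaner but otherwise identical in substance.
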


\begin{proof}
Fix any $\bnu \in \Xi$. It is sufficient to prove that $C(\bnu) \subset \Xi$. Let $\bnu' \in C(\bnu)$. We prove by induction that $\nu'_{t,1} \in \P^t_v(\sX)$ for all $t\geq0$. The claim trivially holds for $t=0$ as $\nu'_{0,1} = \mu_0$. Assume the claim holds for $t$ and consider $t+1$. We have
\begin{align}
\int_{\sX} w(y) \nu'_{t+1,1}(dy) &= \int_{\sX \times \sA} \int_{\sX} w(y) p(dy|x,a,\nu_{t,1}) \nu_{t}(dx,da) \nonumber \\
&\leq \int_{\sX} \alpha w(x) \nu_{t,1}(dx) \nonumber \text{ }(\text{by Assumption~1-(c)}) \\
&\leq \alpha^{t+1} M \text{ }(\text{as $\nu_{t,1} \in \P^t_v(\sX)$}). \nonumber
\end{align}
Hence, $\nu'_{t+1,1} \in \P^{t+1}_v(\sX)$. This completes the proof.
\end{proof}

We say that $\nu \in \Xi$ is a fixed point of $\Gamma$ if $\bnu \in \Gamma(\bnu)$. The following proposition makes the connection between mean-field equilibria and the fixed points of $\Gamma$.

\begin{proposition}\label{prop1}
Suppose that $\Gamma$ has a fixed point $\bnu = (\nu_t)_{t \ge 0}$. Construct a Markov policy $\pi = (\pi_t)_{t \ge 0}$ by disintegrating each $\nu_t$ as $\nu_t(dx,da) = \nu_{t,1}(dx) \pi_t(da|x)$, and let $\bnu_1 = (\nu_{t,1})_{t \ge 0}$. Then the pair $(\pi,\bnu_1)$ is a mean-field equilibrium.
\end{proposition}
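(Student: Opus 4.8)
The plan is to unwind the definition of a mean-field equilibrium and match its two requirements against the two defining conditions of $\Gamma(\bnu)=C(\bnu)\cap B(\bnu)$. Concretely, I must show (i) $\bnu_1 = \Lambda(\pi)$ and (ii) $\pi \in \Phi(\bnu_1)$, i.e., $\pi$ is optimal for the flow $\bnu_1$. The disintegration $\nu_t(dx,da)=\nu_{t,1}(dx)\pi_t(da|x)$ exists as a measurable stochastic kernel because $\sX$ and $\sA$ are Polish, so $\pi=(\pi_t)_{t\ge0}$ is a bona fide Markov policy. Since $\bnu\in\Xi$, each $\nu_{t,1}$ lies in $\P_v^t(\sX)$, so the value functions $J_{*,t}^{\bnu}$ and the operators $T_t^{\bnu}$ appearing in $B(\bnu)$ are well defined.

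First I would establish $\bnu_1=\Lambda(\pi)$ by induction. The initial condition $\nu_{0,1}=\mu_0$ holds because $\bnu\in C(\bnu)$. Writing $\bmu:=\Lambda(\pi)$ and assuming $\mu_t=\nu_{t,1}$, the defining recursion of $\Lambda$, together with $\rP^{\pi}(da|x)=\pi_t(da|x)$ (valid since $\pi$ is Markov) and the disintegration, gives
\[
\mu_{t+1}(\,\cdot\,)=\int_{\sX\times\sA}p(\,\cdot\,|x,a,\nu_{t,1})\pi_t(da|x)\nu_{t,1}(dx)=\int_{\sX\times\sA}p(\,\cdot\,|x,a,\nu_{t,1})\nu_t(dx,da)=\nu_{t+1,1}(\,\cdot\,),
\]
where the last equality is exactly the $C(\bnu)$ constraint. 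Hence $\Lambda(\pi)=\bnu_1$.

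Next I would show that $\pi$ is optimal for $\bnu_1$. The crucial observation is that the state-action law $\nu_t^{\pi}={\cal L}(x(t),a(t))$ computed under $\pi$ and the \emph{external} flow $\bnu_1$ coincides with the fixed-point measure $\nu_t$ for every $t$. This follows by a second induction: $\nu_0^{\pi}=\mu_0(dx)\pi_0(da|x)=\nu_0$, and if $\nu_t^{\pi}=\nu_t$ then its state marginal satisfies $\nu_{t+1,1}^{\pi}(\,\cdot\,)=\int p(\,\cdot\,|x,a,\nu_{t,1})\nu_t(dx,da)=\nu_{t+1,1}$ by $C(\bnu)$, after which the Markov structure yields $\nu_{t+1}^{\pi}=\nu_{t+1,1}(dx)\pi_{t+1}(da|x)=\nu_{t+1}$. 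With $\nu_t^{\pi}=\nu_t$ in hand, membership $\bnu\in B(\bnu)$ states precisely that $\nu_t^{\pi}$ assigns full mass to the set on which the Bellman optimality equation for the nonhomogeneous MDP with flow $\bnu_1$ is attained, so Theorem~\ref{theorem1} gives that $\pi$ is optimal in $\sM$ for $\bnu_1$. Proposition~\ref{lemma1} then upgrades this to optimality over all of $\Pi$, i.e.\ $\pi\in\Phi(\bnu_1)$. Combined with $\bnu_1=\Lambda(\pi)$, this exhibits $(\pi,\bnu_1)$ as a mean-field equilibrium.

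The computations are routine once the correspondence is set up; the one point that will require care is keeping the two distinct flows straight --- the self-consistent flow defining $\Lambda(\pi)$ versus the external flow $\bnu_1$ used to define $\nu_t^{\pi}$ --- and recognizing that the $C(\bnu)$ condition is exactly what forces them to agree. This is what lets me transport the $B(\bnu)$ condition, which is phrased in terms of the fixed-point measures $\nu_t$, to the optimality criterion of Theorem~\ref{theorem1}, which is phrased in terms of $\nu_t^{\pi}$. A secondary technical point is the existence and measurability of the disintegrating kernels $\pi_t$, which is guaranteed by the Polish assumption on $\sX$ and $\sA$.
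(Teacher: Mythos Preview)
Your proposal is correct and follows the same approach as the paper: use $\bnu\in C(\bnu)$ to get $\Lambda(\pi)=\bnu_1$, and use $\bnu\in B(\bnu)$ together with Theorem~\ref{theorem1} to get $\pi\in\Phi(\bnu_1)$. The paper's proof is a two-line sketch that takes for granted the identification $\nu_t^{\pi}=\nu_t$ you carefully verify by induction; your explicit treatment of this point (and of the disintegration's existence) simply fills in details the paper leaves implicit.
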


\begin{proof}
If $\bnu \in \Gamma(\bnu)$, then corresponding Markov policy $\pi$ satisfies (\ref{eq5}) for $\bnu$. Therefore, by Theorem~\ref{theorem1}, $\pi \in \Phi(\bnu_1)$. Moreover, since $\bnu \in C(\bnu)$, we have $\Lambda(\pi) = \nu_1$. This completes the proof.
\end{proof}

\noindent By Proposition~\ref{prop1}, it suffices to prove that $\Gamma$ has a fixed point in order to establish the existence of a mean-field equilibrium. To that end, we will use Kakutani's fixed point theorem \cite[Corollary 17.55]{AlBo06}. Note that, since $w$ is a continuous moment function, the set $\P^t_v(\sX)$ is compact with respect to the weak topology \cite[Proposition E.8, p. 187]{HeLa96}, and so, $\P^t_v(\sX \times \sA)$ is tight as $\sA$ is compact. Furthermore, $\P^t_v(\sX \times \sA)$ is closed with respect to the weak topology. Hence, $\P^t_v(\sX \times \sA)$ is compact with respect to the weak topology. Therefore, $\Xi$ is compact with respect to the product topology. In addition, $\Xi$ is also convex.

Note that it can be proved in the same way as in \cite[Theorem 1]{JoRo88} that $C(\bnu) \cap B(\bnu) \neq \emptyset$ for any $\bnu \in \Xi$. Furthermore, we can show that both $C(\bnu)$ and $B(\bnu)$ are convex, and thus their intersection is also convex. Moreover, $\Xi$ is a convex compact subset of a locally convex topological space $\M(\sX \times \sA)^{\infty}$, where $\M(\sX \times \sA)$ denotes the set of finite signed measures on $\sX \times \sA$. The final piece we need in order to deduce the existence of a fixed point of $\Gamma$ by an appeal to Kakutani's fixed point theorem is the following:

\begin{proposition}\label{prop3}
The graph of $\Gamma$, i.e., the set
	$$
	\Gr(\Gamma) := \left\{ (\bnu,\bxi) \in \Xi \times \Xi : \bxi \in \Gamma(\bnu)\right\},
	$$
is closed.
\end{proposition}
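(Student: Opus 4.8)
The plan is to prove sequential closedness, which suffices since $\Xi$ is a countable product of the weakly compact metrizable spaces $\P_v^t(\sX\times\sA)$ and is therefore itself compact and metrizable. So I fix a sequence $(\bnu^{(n)},\bxi^{(n)})\in\Gr(\Gamma)$ with $(\bnu^{(n)},\bxi^{(n)})\to(\bnu,\bxi)$; convergence in the product topology means $\nu^{(n)}_t\to\nu_t$ and $\xi^{(n)}_t\to\xi_t$ weakly in $\P(\sX\times\sA)$ for every $t$, so in particular the marginals $\nu^{(n)}_{t,1}\to\nu_{t,1}$ weakly. I must show $\bxi\in C(\bnu)$ and $\bxi\in B(\bnu)$. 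The workhorse will be a generalized convergence lemma: if $q_n\to q$ weakly in $\P(\sX)$ with $\int w\,dq_n\to\int w\,dq<\infty$, and $\phi_n\to\phi$ in the sense of continuous convergence (i.e. $\phi_n(x_n)\to\phi(x)$ whenever $x_n\to x$) with $\sup_n\|\phi_n\|_v<\infty$, then $\int\phi_n\,dq_n\to\int\phi\,dq$. This follows from the Skorokhod representation theorem (legitimate since $\sX$ is Polish) after truncating $\phi_n$ by a continuous cutoff of $w$, using that $\int w\,dq_n\to\int w\,dq$ forces uniform $w$-integrability of $\{q_n\}$.

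To show $\bxi\in C(\bnu)$ I pass to the limit in the defining relations of $C(\bnu^{(n)})$. The condition $\xi^{(n)}_{0,1}=\mu_0$ gives $\xi_{0,1}=\mu_0$ immediately. For the recursion I test the identity $\xi^{(n)}_{t+1,1}(\cdot)=\int_{\sX\times\sA}p(\cdot|x,a,\nu^{(n)}_{t,1})\,\xi^{(n)}_t(dx,da)$ against an arbitrary $g\in C_b(\sX)$. The left side tends to $\xi_{t+1,1}(g)$ by weak convergence of marginals. On the right side the integrand $(x,a)\mapsto\int g\,dp(\cdot|x,a,\nu^{(n)}_{t,1})$ is bounded by $\|g\|_\infty$ and converges continuously to $\int g\,dp(\cdot|x,a,\nu_{t,1})$ by weak continuity of $p$ (Assumption~1-(d)); realizing $\xi^{(n)}_t\to\xi_t$ via Skorokhod's theorem and applying bounded convergence yields the limit, and as $g$ is arbitrary, $\bxi\in C(\bnu)$. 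This step uses only weak continuity of $p$.

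The crux is the continuity of the value functions in the flow: I claim that for every $t$, $J^{\bnu^{(n)}}_{*,t}\to J^{\bnu}_{*,t}$ in the sense of continuous convergence. I establish this through the finite-horizon truncations $J^{\bnu,(k)}_{*,t}$ defined by $J^{\bnu,(0)}_{*,t}\equiv0$ and $J^{\bnu,(k)}_{*,t}=T_t^{\bnu}J^{\bnu,(k-1)}_{*,t+1}$, each continuous by \cite[Proposition 7.32]{BeSh78}. Since costs are nonnegative and $\int w\,dp\le\alpha w$, one has $0\le J^{\bnu}_{*,t}-J^{\bnu,(k)}_{*,t}\le\eta_k^t\,w$ with $\eta_k^t:=\sum_{m\ge t+k}(\beta\alpha)^{m-t}M_m\to0$ as $k\to\infty$, uniformly in $\bnu$ (Assumption~1-(g)). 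For fixed $k$ I prove $J^{\bnu^{(n)},(k)}_{*,t}\to J^{\bnu,(k)}_{*,t}$ continuously by induction on $k$: the step from $k-1$ to $k$ requires the continuous convergence of $(x,a)\mapsto c(x,a,\nu^{(n)}_{t,1})+\beta\int J^{\bnu^{(n)},(k-1)}_{*,t+1}\,dp(\cdot|x,a,\nu^{(n)}_{t,1})$, where the cost term is handled by weak continuity of $c$ (Assumption~1-(a)) and the integral term by the generalized convergence lemma — the kernels $p(\cdot|x,a,\nu^{(n)}_{t,1})$ converge weakly with $\int w$ also converging, both by Assumption~1-(d), and the $v$-dominated integrands converge continuously by the inductive hypothesis. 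The outer $\min_{a\in\sA}$ preserves continuous convergence by a routine subsequence argument using compactness of $\sA$. Combining the uniform-in-$\bnu$ horizon estimate with the fixed-$k$ convergence, and using $w(x_n)\to w(x)$ along $x_n\to x$, gives the claim for the infinite-horizon value functions. I expect this to be the main obstacle, since it is here that the unboundedness of $c$ and the merely weak (rather than $v$-topology) convergence of the marginals must be reconciled, everything hinging on the moment controls in Assumption~1-(c),(d),(f),(g).

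Finally, to show $\bxi\in B(\bnu)$, set $h^{\bnu}_t(x,a):=c(x,a,\nu_{t,1})+\beta\int J^{\bnu}_{*,t+1}\,dp(\cdot|x,a,\nu_{t,1})-J^{\bnu}_{*,t}(x)\ge0$, so that membership in $B(\bnu)$ is the statement $\int h^{\bnu}_t\,d\xi_t=0$ for all $t$, while $\bxi^{(n)}\in B(\bnu^{(n)})$ reads $\int h^{\bnu^{(n)}}_t\,d\xi^{(n)}_t=0$. The three ingredients of $h^{\bnu^{(n)}}_t$ converge continuously to those of $h^{\bnu}_t$ — the cost term by weak continuity of $c$, the integral term by the generalized convergence lemma together with the value-function convergence of the previous paragraph, and the term $J^{\bnu^{(n)}}_{*,t}$ likewise — so $h^{\bnu^{(n)}}_t\to h^{\bnu}_t$ continuously. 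Invoking Skorokhod's theorem to realize $(X_n,A_n)\sim\xi^{(n)}_t$ and $(X,A)\sim\xi_t$ on a common space with $(X_n,A_n)\to(X,A)$ almost surely, continuous convergence gives $h^{\bnu^{(n)}}_t(X_n,A_n)\to h^{\bnu}_t(X,A)$ a.s.; since these are nonnegative, Fatou's lemma yields $\int h^{\bnu}_t\,d\xi_t=E[h^{\bnu}_t(X,A)]\le\liminf_n E[h^{\bnu^{(n)}}_t(X_n,A_n)]=\liminf_n\int h^{\bnu^{(n)}}_t\,d\xi^{(n)}_t=0$, whence $\int h^{\bnu}_t\,d\xi_t=0$. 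Thus $\bxi\in C(\bnu)\cap B(\bnu)=\Gamma(\bnu)$, and $\Gr(\Gamma)$ is closed.
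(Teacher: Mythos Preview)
Your argument is correct and, in one key step, cleaner than the paper's.  Two remarks:

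\textbf{A harmless slip.} In the $C(\bnu)$ step you write the recursion as $\xi^{(n)}_{t+1,1}(\cdot)=\int p(\cdot|x,a,\nu^{(n)}_{t,1})\,\xi^{(n)}_t(dx,da)$, but by the definition of $C(\bnu)$ the integrating measure is $\nu^{(n)}_t$, not $\xi^{(n)}_t$.  Since $\nu^{(n)}_t\to\nu_t$ weakly as well, your argument goes through verbatim once the formula is corrected.  Otherwise your treatment of $C(\bnu)$ and of the value-function convergence matches the paper's: where you invoke your ``generalized convergence lemma'' via Skorokhod, the paper cites \cite[Theorem~3.3]{Ser82}; where you use finite-horizon truncations with a tail bound $\eta_k^t\,w$, the paper phrases the same estimate as a contraction/successive-approximation argument (Proposition~\ref{prop4}).

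\textbf{The $B(\bnu)$ step is where you diverge, to your advantage.}  You reformulate the support condition as $\int h^{\bnu^{(n)}}_t\,d\xi^{(n)}_t=0$ with $h^{\bnu}_t\ge0$, establish continuous convergence $h^{\bnu^{(n)}}_t\to h^{\bnu}_t$, and close with Fatou via a Skorokhod coupling.  The paper instead works with the indicator of the closed set $A_t=\{h^{\bnu}_t=0\}$: it introduces an exhausting family of closed sets $B_t^M=\{h^{\bnu}_t\ge\epsilon(M)\}$ with $\xi_t$-null boundaries, argues that $1_{A_t^{(n)}\cap B_t^M}\to0$ continuously, appeals to \cite{Ser82} for the restricted measures, and finishes with the Portmanteau theorem for the closed set $A_t$.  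Your integral-plus-Fatou route avoids all the boundary bookkeeping and the Portmanteau step, at the modest cost of needing the continuous convergence of the full function $h^{\bnu^{(n)}}_t$ rather than just the level sets; but that is exactly what you (and the paper) have already established.
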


\begin{proof}
Let $\bigl\{(\bnu^{(n)},\bxi^{(n)})\bigr\}_{n\geq1} \subset \Xi \times \Xi$ be such that $\bxi^{(n)} \in \Gamma(\bnu^{(n)})$ for all $n$ and $(\bnu^{(n)},\bxi^{(n)}) \rightarrow (\bnu,\bxi)$ as $n\rightarrow\infty$ for some $(\bnu,\bxi) \in \Xi \times \Xi$. To prove $\Gr(\Gamma)$ is closed, it is sufficient to prove $\bxi \in \Gamma(\bnu)$.

We first prove that $\bxi \in C(\bnu)$. For all $n$ and $t$, we have
\begin{align}
\xi^{(n)}_{t+1,1}(\,\cdot\,) = \int_{\sX \times \sA} p(\,\cdot\,|x,a,\nu^{(n)}_{t,1}) \nu^{(n)}_t(dx,da). \label{eq6}
\end{align}
Since $\bxi^{(n)} \rightarrow \bxi$ in $\Xi$, $\xi^{(n+1)}_{t+1} \rightarrow \xi_{t+1}$ weakly. Let $g \in C_b(\sX)$. Then, by \cite[Theorem 3.3]{Ser82}, we have
\begin{align}
\lim_{n\rightarrow\infty} \int_{\sX \times \sA} \int_{\sX}& g(y) p(dy|x,a,\nu^{(n)}_{t,1}) \nu^{(n)}_t(dx,da) =\int_{\sX \times \sA} \int_{\sX} g(y) p(dy|x,a,\nu_{t,1}) \nu_t(dx,da) \nonumber
\end{align}
since $\bnu^{(n)}_t \rightarrow \bnu_t$ weakly and $\int_{\sX} g(y) p(dy|x,a,\nu^{(n)}_{t,1})$ converges to $\int_{\sX} g(y) p(dy|x,a,\nu_{t,1})$ continuously\footnote{Suppose $g$, $g_n$ ($n\geq1$) are measurable functions on metric space $\sE$. The sequence $g_n$ is said to converge to $g$ continuously if $\lim_{n\rightarrow\infty}g_n(e_n)=g(e)$ for any $e_n\rightarrow e$ where $e \in \sE$.} (see \cite[p. 388]{Ser82}). This implies that the sequence of measures on the right-hand side of (\ref{eq6}) converges weakly to $\int_{\sX \times \sA} p(\,\cdot\,|x,a,\nu_{t,1}) \nu_t(dx,da)$. Therefore, we have
\begin{align}
\xi_{t+1,1}(\,\cdot\,) = \int_{\sX \times \sA} p(\,\cdot\,|x,a,\nu_{t,1}) \nu_t(dx,da), \nonumber
\end{align}
from which we deduce that $\bxi \in C(\bnu)$.

It remains to prove that $\bxi \in B(\bnu)$. For each $n$ and $t$, let us define
\begin{align}
F^{(n)}_t(x,a) &= c(x,a,\nu^{(n)}_{t,1}) + \beta \int_{\sX} J^{\bnu^{(n)}}_{*,t+1}(y) p(dy|x,a,\nu^{(n)}_{t,1}) \nonumber \\
\intertext{and}
F_t(x,a) &= c(x,a,\nu_{t,1}) + \beta \int_{\sX} J^{\bnu}_{*,t+1}(y) p(dy|x,a,\nu_{t,1}). \nonumber
\end{align}
Recall that, by definition,
\begin{align}
J^{\bnu^{(n)}}_{*,t}(x) = \min_{a \in \sA} F^{(n)}_t(x,a) \text{ } \text{ and } \text{ } J^{\bnu}_{*,t}(x) = \min_{a \in \sA} F_t(x,a). \nonumber
\end{align}
By assumption, we have
\begin{align}
1 = \xi^{(n)}_t\biggl( \biggl\{ (x,a): F^{(n)}_t(x,a) = J^{\bnu^{(n)}}_{*,t}(x) \biggr\} \biggr), \text{ } \text{for all $n$}. \nonumber
\end{align}
Let $A_t^{(n)} \coloneqq \bigl\{ (x,a): F^{(n)}_t(x,a) = J^{\bnu^{(n)}}_{*,t}(x) \bigr\}$. Since both $F^{(n)}_t$ and $J^{\bnu^{(n)}}_{*,t}$ are continuous, $A_t^{(n)}$ is closed. Define $A_t \coloneqq \bigl\{ (x,a): F_t(x,a) = J^{\bnu}_{*,t}(x) \bigr\}$ which is also closed as both $F_t$ and $J^{\bnu}_{*,t}$ are continuous.

Suppose that $F_t^{(n)}$ converges to $F_t$ continuously and $J^{\bnu^{(n)}}_{*,t}$ converges to $J^{\bnu}_{*,t}$ continuously, as $n\rightarrow\infty$.

For each $M\geq1$, define $B_t^M \coloneqq \bigl\{ (x,a): F_t(x,a) \geq J^{\bnu}_{*,t}(x) + \epsilon(M) \bigr\}$ which is closed, where $\epsilon(M) \rightarrow 0$ as $M\rightarrow \infty$. Since both $F_t$ and $J^{\bnu}_{*,t}$ is continuous, we can choose $\{\epsilon(M)\}_{M\geq1}$ so that $\xi_t(\partial B_t^M) = 0$ for each $M$. Since $A_t^c = \bigcup_{M=1}^{\infty} B_t^M$ and $B_t^M \subset B_t^{M+1}$, we have by monotone convergence theorem
\begin{align}
\xi^{(n)}_t\big(A_t^c \cap A_t^{(n)}\big) = \liminf_{M\to\infty} \xi^{(n)}_t\big(B^M_t \cap A_t^{(n)}). \nonumber
%\int_{\{1_{A_t} = 0\}} &1_{A_t^{(n)}}(x,a) \xi^{(n)}_t(dx,da) \nonumber \\
%&\phantom{xxxxxx}= \liminf_{M\rightarrow\infty} \int_{B_t^M} 1_{A_t^{(n)}} (x,a) \xi^{(n)}_t(dx,da). \nonumber
\end{align}
Hence, we have
\begin{align}
1 &= \limsup_{n\rightarrow\infty} \liminf_{M\rightarrow\infty} \biggl\{ \xi^{(n)}_t\big(A_t \cap A^{(n)}_t\big) + \xi^{(n)}_t\big(B^M_t \cap A_t^{(n)}\big)\biggr\} \nonumber\\
%\int_{A_t} 1_{A_t^{(n)}}(x,a) \xi^{(n)}_t(dx,da) \nonumber \\
%&\phantom{xxxxxxxxx}+\int_{B_t^M} 1_{A_t^{(n)}} (x,a) \xi^{(n)}_t(dx,da) \biggr\} \nonumber \\
&\leq \liminf_{M\rightarrow\infty} \limsup_{n\rightarrow\infty}  \biggl\{\xi^{(n)}_t\big(A_t \cap A^{(n)}_t\big) + \xi^{(n)}_t\big(B^M_t \cap A_t^{(n)}\big)\biggr\}. \nonumber
% \biggl\{ \int_{A_t} 1_{A_t^{(n)}}(x,a) \xi^{(n)}_t(dx,da) \nonumber \\
%&\phantom{xxxxxxxxx}+\int_{B_t^M} 1_{A_t^{(n)}} (x,a) \xi^{(n)}_t(dx,da) \biggr\} \nonumber
\end{align}
\noindent For fixed $M$, let us evaluate the limit of the second term in the last expression as $n\rightarrow\infty$. First, note that $\xi^{(n)}_t$ converges weakly to $\xi_t$ as $n\rightarrow\infty$ when both measures are restricted to $B_t^M$, as $B_t^M$ is closed and $\xi_t(\partial B_t^M)=0$ \cite[Theorem 8.2.3]{Bog07}. Furthermore, $1_{A^{(n)}_t \cap B^M_t}$ converges continuously to $0$: if $(x^{(n)},a^{(n)}) \rightarrow (x,a)$ in $B_t^M$, then
\begin{align}
\lim_{n\rightarrow\infty} F_t^{(n)}(x^{(n)},a^{(n)}) &= F_t(x,a) \nonumber \\
&\geq J_{*,t}(x) + \epsilon(M) \nonumber \\
&= \lim_{n\rightarrow\infty} J^{(n)}_{*,t}(x^{(n)}) + \epsilon(M). \nonumber
\end{align}
Hence, for large enough $n$'s, we have $F_t^{(n)}(x^{(n)},a^{(n)}) > J^{(n)}_{*,t}(x^{(n)})$ which implies that $(x^{(n)},a^{(n)}) \not\in A^{(n)}_t$. Then, by \cite[Theorem 3.3]{Ser82}, for each $M$ we have
\begin{align*}
\limsup_{n\rightarrow\infty} \xi^{(n)}_t\big(B^M_t \cap A^{(n)}_t\big) = 0.
%\int_{B_t^M} 1_{A_t^{(n)}} (x,a) \xi^{(n)}_t(dx,da) = 0. \nonumber
\end{align*}
Therefore, we obtain
\begin{align*}
1 &\leq  \limsup_{n\rightarrow\infty} \xi^{(n)}_t\big(A_t \cap A_t^{(n)}\big)\\
% \int_{A_t} 1_{A_t^{(n)}}(x,a) \xi^{(n)}_t(dx,da)  \nonumber \\
%&\leq \liminf_{M\rightarrow\infty} \limsup_{n\rightarrow\infty} \xi^{(n)}_t\big(A_t \cap \big)\int_{A_t} 1_{A_t}(x,a) \xi^{(n)}_t(dx,da) \nonumber \\
&\le \limsup_{n\rightarrow\infty} \xi_t^{(n)}(A_t) \nonumber \\
&\leq \xi_t(A_t), \nonumber
\end{align*}
where the last inequality follows from Portmanteau theorem \cite[Theorem 2.1]{Bil99} and the fact that $A_t$ is closed. Hence, $\xi_t(A_t)=1$. Since $t$ is arbitrary, this is true for all $t$. This means that $\bxi \in B(\bnu)$. Therefore, $\bxi \in \Gamma(\bnu)$ which completes the proof under the assumption that $F_t^{(n)}$ converges to $F_t$ continuously and $J^{\bnu^{(n)}}_{*,t}$ converges to $J^{\bnu}_{*,t}$ continuously, as $n\rightarrow\infty$, which we prove next.
\end{proof}

%To complete the proof of Proposition~\ref{prop3}, we still have to show that $F^{(n)}_t$ converges to $F_t$ continuously and $J^{\bnu^{(n)}}_{*,t}$ converges to $J^{\bnu}_{*,t}$ continuously, for all $t$.
Note that, for continuous functions, continuous convergence coincides with the uniform convergence over compact sets (see \cite[Lemma 2.1]{Lan81}). Therefore, it is equivalent to establish that $F^{(n)}_t$ uniformly converges to $F_t$ over compact sets and $J^{\bnu^{(n)}}_{*,t}$ uniformly converges to $J^{\bnu}_{*,t}$ over compact sets, as these functions are all continuous.
Furthermore, if $J^{\bnu^{(n)}}_{*,t}$ converges to $J^{\bnu}_{*,t}$ continuously for all $t$, then $F^{(n)}_t$ also converges to $F_t$ continuously for all $t$. Indeed, let $(x^{(n)},a^{(n)}) \rightarrow (x,a)$. Since $J^{\bnu^{(n)}}_{*,t+1}(y) \leq L_{t+1} v(y)$ for all $n\geq1$ and $\int_{\sX} v(y) p(dy|x^{(n)},a^{(n)},\nu^{(n)}_{t,1}) \rightarrow \int_{\sX} v(y) p(dy|x,a,\nu_{t,1})$, by \cite[Theorem 3.3]{Ser82} we have
\begin{align}
\lim_{n\rightarrow\infty} F_t^{(n)}(x^{(n)},a^{(n)}) &= \lim_{n\rightarrow\infty} \biggl[ c(x^{(n)},a^{(n)},\nu^{(n)}_{t,1}) + \beta \int_{\sX} J^{\bnu^{(n)}}_{*,t+1}(y) p(dy|x^{(n)},a^{(n)},\nu^{(n)}_{t,1}) \biggr] \nonumber \\
&= c(x,a,\nu_{t,1}) + \beta \int_{\sX} J^{\bnu}_{*,t+1}(y) p(dy|x,a,\nu_{t,1}) \nonumber \\
&= F_t(x,a). \nonumber
\end{align}
Therefore, it is sufficient to prove that $J^{\bnu^{(n)}}_{*,t}$ uniformly converges to $J^{\bnu}_{*,t}$ over compact sets, for all $t$.

\begin{proposition}\label{prop4}
For any compact $K\subset \sX$, we have
\begin{align}
\lim_{n\rightarrow\infty} \sup_{x \in K} \bigl| J^{(n)}_{*,t}(x) - J_{*,t}(x)| = 0 \nonumber
\end{align}
for all $t\geq0$. Therefore, $J^{(n)}_{*,t}$ converges to $J_{*,t}$ continuously as $n\rightarrow\infty$, for all $t$.
\end{proposition}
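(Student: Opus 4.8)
The plan is to bypass the absence of a terminal time in the infinite-horizon problem by approximating each $J^{\bnu}_{*,t}$ by finite-horizon value functions, for which a genuine backward induction in the time index is available. For a horizon $H\geq t$ define
\begin{align}
J^{\bnu,H}_{*,t}(x) \coloneqq \inf_{\pi\in\sM} E^{\pi}\biggl[\sum_{k=t}^{H}\beta^{k-t}c(x(k),a(k),\nu_{k,1})\,\Big|\,x(t)=x\biggr], \nonumber
\end{align}
with the convention $J^{\bnu,H}_{*,H+1}\equiv0$. These satisfy the finite backward recursion $J^{\bnu,H}_{*,t}=T^{\bnu}_t J^{\bnu,H}_{*,t+1}$, so by \cite[Proposition 7.32]{BeSh78} (exactly as in Lemma~\ref{newlemma1}) each $J^{\bnu,H}_{*,t}$ is continuous, and the computation of Lemma~\ref{newlemma2} gives $0\le J^{\bnu,H}_{*,t}\le J^{\bnu}_{*,t}\le L_t v$.

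First I would record a tail estimate that is \emph{uniform} in $\bnu$. Iterating Assumption~1-(c) gives $E^{\pi}[v(x(k))\mid x(t)=x]\le\alpha^{k-t}v(x)$, while Assumption~1-(f) together with $\nu_{k,1}\in\P_v^k(\sX)$ gives $c(x(k),a(k),\nu_{k,1})\le M_k v(x(k))$. Hence for every policy $\pi$ the discarded tail is at most $\sum_{k=H+1}^{\infty}(\beta\alpha)^{k-t}M_k\,v(x)$, and since $J^{\bnu}_{*,t}(\pi)=J^{\bnu,H}_{*,t}(\pi)+\text{tail}(\pi)$, taking the infimum over $\pi$ yields
\begin{align}
0\le J^{\bnu}_{*,t}(x)-J^{\bnu,H}_{*,t}(x)\le R_{t,H}\,v(x), \qquad R_{t,H}\coloneqq\sum_{k=H+1}^{\infty}(\beta\alpha)^{k-t}M_k. \nonumber
\end{align}
Since $M_k=\gamma^k R$ and $\alpha\beta\gamma<1$ (Assumption~1-(g)), $R_{t,H}=\gamma^t R\sum_{j\ge H+1-t}(\alpha\beta\gamma)^j\to0$ as $H\to\infty$, and crucially $R_{t,H}$ does not depend on $\bnu$.

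Next I would prove, for each fixed $H$, that $J^{(n),H}_{*,t}\coloneqq J^{\bnu^{(n)},H}_{*,t}$ converges to $J^{H}_{*,t}\coloneqq J^{\bnu,H}_{*,t}$ continuously, by backward induction on $t$ from $t=H+1$ (where both vanish) down to $t=0$. For the inductive step, writing $G^{(n)}_t(x,a)\coloneqq c(x,a,\nu^{(n)}_{t,1})+\beta\int_\sX J^{(n),H}_{*,t+1}(y)\,p(dy|x,a,\nu^{(n)}_{t,1})$, I take $(x^{(n)},a^{(n)})\to(x,a)$ and note $\nu^{(n)}_{t,1}\to\nu_{t,1}$ weakly, so $c$ converges by Assumption~1-(a) and $p(\cdot|x^{(n)},a^{(n)},\nu^{(n)}_{t,1})\to p(\cdot|x,a,\nu_{t,1})$ weakly by Assumption~1-(d). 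Because the inductive hypothesis gives $J^{(n),H}_{*,t+1}\to J^{H}_{*,t+1}$ continuously, $|J^{(n),H}_{*,t+1}|\le L_{t+1}v$, and $\int_\sX v\,p(dy|x^{(n)},a^{(n)},\nu^{(n)}_{t,1})\to\int_\sX v\,p(dy|x,a,\nu_{t,1})$, the generalized convergence theorem \cite[Theorem 3.3]{Ser82} (applied exactly as in the final display of the proof of Proposition~\ref{prop3}) yields $G^{(n)}_t\to G_t$ continuously on $\sX\times\sA$. A routine argument using compactness of $\sA$ then passes continuous convergence through the minimum, giving $J^{(n),H}_{*,t}=\min_a G^{(n)}_t(\cdot,a)\to\min_a G_t(\cdot,a)=J^{H}_{*,t}$ continuously. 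Since continuous convergence of continuous functions coincides with uniform convergence on compacts \cite[Lemma 2.1]{Lan81}, this gives $\sup_{x\in K}|J^{(n),H}_{*,t}-J^{H}_{*,t}|\to0$ for every compact $K$.

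Finally I would combine the two ingredients by an $\epsilon/3$ argument: on a compact $K$, with $C_K\coloneqq\sup_{x\in K}v(x)<\infty$,
\begin{align}
\sup_{x\in K}|J^{(n)}_{*,t}-J_{*,t}|\le \sup_{x\in K}|J^{(n)}_{*,t}-J^{(n),H}_{*,t}| + \sup_{x\in K}|J^{(n),H}_{*,t}-J^{H}_{*,t}| + \sup_{x\in K}|J^{H}_{*,t}-J_{*,t}|. \nonumber
\end{align}
The first and third terms are bounded by $R_{t,H}C_K$ uniformly in $n$, hence made smaller than any prescribed $\epsilon$ by choosing $H$ large; with $H$ fixed, the middle term tends to $0$ as $n\to\infty$. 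This proves $\sup_{x\in K}|J^{(n)}_{*,t}-J_{*,t}|\to0$, and the stated continuous convergence follows once more by \cite[Lemma 2.1]{Lan81}. The main obstacle is exactly the treatment of the infinite horizon: a naive backward induction has no base case, so the crux is the uniform-in-$\bnu$ tail estimate that lets the finite-horizon approximation control the infinite-horizon value functions, after which the finite-horizon step is just the $v$-weighted Serfozo continuity argument already used in Proposition~\ref{prop3}.
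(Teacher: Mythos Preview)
Your proof is correct and shares the paper's three-term $\epsilon/3$ skeleton, but the approximating sequence and the source of the uniform-in-$n$ error bound are organized differently. The paper iterates the contraction $T^{\bnu}$ on the product space $\C=\prod_t C_v^t(\sX)$ with the weighted metric $\rho$: setting $\bu_k^{(n)}=(T^{\bnu^{(n)}})^k\mathbf{0}$, the common contraction modulus $\sigma\beta\alpha$ gives $\|u^{(n)}_{k,t}-J^{(n)}_{*,t}\|_v\le\sigma^t(\sigma\beta\alpha)^kL_0$ uniformly in $n$, and $u^{(n)}_{k,t}\to u_{k,t}$ uniformly on compacts is proved by forward induction on $k$ via the same Serfozo argument you use. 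Your finite-horizon value functions are in fact the same objects reindexed (one checks $u_{k,t}=J^{\bnu,t+k-1}_{*,t}$), but by fixing a terminal time $H$ you can run a genuine backward induction in $t$ from the base case $J^{\bnu,H}_{*,H+1}=0$, and your uniform error is a direct tail estimate on the discounted cost rather than a consequence of the abstract contraction machinery. Your route is a bit more elementary---it sidesteps the metric $\rho$ and the auxiliary parameter $\sigma$ entirely---while the paper's route exploits the Banach fixed-point framework already established for Theorem~\ref{theorem1} and delivers the estimate for all $t$ simultaneously.
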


\begin{proof}
The proof of the proposition is given in Appendix~\ref{app1}.
\end{proof}

\noindent Theorem~\ref{thm:MFE} is now a consequence of the following:

\begin{theorem}\label{theorem2}
Under Assumption~1, there exists a fixed point $\bnu$ of the set valued mapping $\Gamma: \Xi \rightarrow 2^{\Xi}$. Therefore, the pair $(\pi,\bnu_1)$ is a mean field equilibrium, where $\pi$ and $\bnu_1$ are constructed as in the statement of Proposition~\ref{prop1}.
\end{theorem}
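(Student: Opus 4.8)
The plan is to obtain the fixed point of $\Gamma$ by a direct appeal to Kakutani's fixed point theorem for correspondences on locally convex spaces \cite[Corollary 17.55]{AlBo06}, since by this point in the development every hypothesis of that theorem has essentially been verified for $\Gamma:\Xi\to 2^{\Xi}$. Concretely, I would first recall, from the discussion preceding the statement, that $\Xi$ is a nonempty, convex, and compact subset of the locally convex Hausdorff space $\M(\sX\times\sA)^{\infty}$: each factor $\P_v^t(\sX\times\sA)$ is weakly compact (because $\P_v^t(\sX)$ is weakly compact by \cite[Proposition E.8]{HeLa96} and $\sA$ is compact, which yields tightness, while weak closedness is immediate), so $\Xi$ is compact in the product topology by Tychonoff, and convexity of $\Xi$ is clear.

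Next I would check that $\Gamma$ has nonempty, convex values contained in $\Xi$. Proposition~\ref{prop2} gives $\Gamma(\bnu)\subset\Xi$; nonemptiness of $C(\bnu)\cap B(\bnu)$ for every $\bnu\in\Xi$ follows exactly as in \cite[Theorem 1]{JoRo88}; and convexity holds because the defining constraints are convex in $\bnu'$ --- those of $C(\bnu)$ fix the $\sX$-marginals of $\bnu'$ to prescribed measures, while that of $B(\bnu)$ forces each $\nu_t'$ to assign full mass to the fixed closed set $A_t$, and both properties are stable under convex combinations. The remaining and decisive hypothesis is that $\Gamma$ has closed graph, which is precisely the content of Proposition~\ref{prop3}.

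With all four ingredients in hand, Kakutani's theorem yields a point $\bnu\in\Xi$ with $\bnu\in\Gamma(\bnu)$, i.e.\ a fixed point of $\Gamma$. Finally, I would disintegrate each $\nu_t$ as $\nu_t(dx,da)=\nu_{t,1}(dx)\pi_t(da|x)$ to build the Markov policy $\pi=(\pi_t)_{t\ge0}$ and set $\bnu_1=(\nu_{t,1})_{t\ge0}$; Proposition~\ref{prop1} then immediately certifies that $(\pi,\bnu_1)$ is a mean-field equilibrium, which completes the proof.

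I expect the only genuine obstacle in this argument to be the closed-graph property, and that obstacle has already been discharged in Proposition~\ref{prop3}. Its real engine is the continuous convergence of the nonhomogeneous value functions $J^{\bnu^{(n)}}_{*,t}\to J^{\bnu}_{*,t}$ as $\bnu^{(n)}\to\bnu$ (Proposition~\ref{prop4}), together with the continuous convergence of the transition and cost data furnished by Assumption~1; everything else in the Kakutani verification is routine compactness and convexity bookkeeping.
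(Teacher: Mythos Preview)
Your proposal is correct and follows essentially the same approach as the paper: both apply Kakutani's fixed point theorem \cite[Corollary 17.55]{AlBo06} after assembling the same ingredients---compactness and convexity of $\Xi$ in $\M(\sX\times\sA)^{\infty}$, nonempty convex values of $\Gamma$ (via Proposition~\ref{prop2} and \cite[Theorem 1]{JoRo88}), and the closed-graph property from Proposition~\ref{prop3}---and then invoke Proposition~\ref{prop1} to pass from the fixed point to a mean-field equilibrium. Your write-up simply makes explicit the bookkeeping that the paper leaves to the preceding discussion.
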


\begin{proof}
Recall that $\Xi$ is a compact convex subset of the locally convex topological space $\M(\sX \times \sA)^{\infty}$. Furthermore, $\Gamma$ has closed graph by Proposition~\ref{prop3}, and it takes nonempty convex values. Therefore, by Kakutani's fixed point theorem \cite[Corollary 17.55]{AlBo06}, $\Gamma$ has a fixed point. The second statement follows from Proposition~\ref{prop1}.
\end{proof}

\section{Existence of Approximate Markov-Nash Equilibria}\label{sec4}

Now we are in a position to prove the main result of the paper --- namely, the existence of approximate Markov-Nash equilibria in games with sufficiently many agents. Let $(\pi,\bmu)$ denote the mean-field equilibrium, which exists by Theorem~\ref{thm:MFE}. In a nutshell, the proof boils down to showing that, if each of the $N$ agents adopts the mean-field equilibrium policy $\pi$, then the resulting policy $\bpi^{(N)} = \{\pi,\pi,\ldots,\pi\}$ is an $\varepsilon$-Markov-Nash equilibrium for all sufficiently large $N$.

In addition to Assumption~1, we impose Assumption~2 in this section. Furthermore, we assume that
\begin{itemize}
\item [(j)] For each $t\geq0$, $\pi_t: \sX \rightarrow \P(\sA)$ is weakly continuous.
\end{itemize}

\noindent The following theorem is the main result of the paper:
%It states that the policy ${\boldsymbol \pi}^{(N)} = (\pi,\ldots,\pi)$, where $\pi$ is repeated $N$ times, which itself is obtained from the mean-field equilibrium, is an $\varepsilon$-Markov-Nash equilibrium for sufficiently large $N$.

\begin{theorem}\label{theorem4} Suppose that Assumptions~1 and 2, and (j) hold. Then, for
any $\varepsilon>0$, there exists a positive integer $N(\varepsilon)$, such that, for each $N\geq N(\varepsilon)$, the policy ${\boldsymbol \pi}^{(N)} = \{\pi,\pi,\ldots,\pi\}$ is an $\varepsilon$-Markov-Nash equilibrium for the game with $N$ agents.
\end{theorem}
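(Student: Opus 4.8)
The plan is to exploit the symmetry of the problem together with the propagation-of-chaos phenomenon: when $N$ is large and all but one agent use the mean-field equilibrium policy $\pi$, the empirical state distribution $e_t^{(N)}$ seen by the remaining agent stays close to the deterministic equilibrium flow $\bmu$, so that this agent effectively faces the single-agent control problem with cost $J_{\bmu}$, for which $\pi$ is optimal. Since the candidate policy $\bpi^{(N)}=(\pi,\ldots,\pi)$ and the dynamics are invariant under permutations of the agents, it suffices to verify the $\varepsilon$-Markov-Nash inequality for Agent~$1$. Moreover, by Theorem~\ref{theorem0} we may replace the infimum over $\sM_1$ by the infimum over the weakly continuous Markov policies $\sM_1^c$. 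The goal then reduces to the two limits
\begin{align}
J_1^{(N)}(\pi,\ldots,\pi) &\xrightarrow[N\to\infty]{} J_{\bmu}(\pi), \label{plan-i}\\
\inf_{\pi^1\in\sM_1^c} J_1^{(N)}(\pi^1,\pi,\ldots,\pi) &\xrightarrow[N\to\infty]{} J_{\bmu}(\pi). \label{plan-ii}
\end{align}
Granting these, the $\varepsilon$-Markov-Nash inequality for Agent~$1$ follows for all sufficiently large $N$, and by symmetry it holds for every agent.

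The engine of the proof is a uniform mean-field estimate. I would first establish that there are constants $\{\delta_N(t)\}$, independent of Agent~$1$'s policy, with $\delta_N(t)\to0$ as $N\to\infty$ for every fixed $t$, such that
\begin{align}
\sup_{\pi^1\in\sM_1^c}\, E^{(\pi^1,\pi,\ldots,\pi)}\bigl[\tilde{\rho}_v(e_t^{(N)},\mu_t)\bigr] \leq \delta_N(t), \quad t\geq0. \nonumber
\end{align}
This is proved by induction on $t$. The base case $t=0$ is an $L^1$ law of large numbers for the i.i.d.\ samples $x_i^N(0)\sim\mu_0$, for which the first-moment bound $\mu_0(v)<\infty$ of Assumption~1-(e) already suffices to force both the $W_p$- and the $\mu(v)$-components of $\tilde{\rho}_v=\beta_v$ to zero in expectation. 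For the inductive step I would decompose $e_{t+1}^{(N)}-\mu_{t+1}$ into a \emph{bias} term, arising from feeding the shared empirical measure $e_t^{(N)}$ (rather than $\mu_t$) into the kernel $p(\,\cdot\,|x,a,e_t^{(N)})$ and controlled by the modulus $\omega_p$ of Assumption~2-(h) applied to the inductive bound, and a \emph{fluctuation} term, arising from the conditionally independent transitions and controlled through the second-moment bound $\int_{\sX}v^2\,dp\leq Bv^2$ of Assumption~2-(i). The decisive point is that Agent~$1$'s own contribution to $e_t^{(N)}$ is the single atom $\tfrac1N\delta_{x_1^N(t)}$, whose $v$-mass has expectation at most $\alpha^t M/N$ uniformly over $\pi^1$ by the drift bound of Assumption~1-(c); hence the deviating agent's influence on the empirical measure vanishes uniformly in $N$, which is precisely what renders the estimate policy-independent.

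With the measure estimate in hand, the second step converts it into cost convergence, uniformly over deviations. Coupling Agent~$1$'s state process in the $N$-agent system with a mean-field copy driven by the fixed flow $\bmu$ under the same policy $\pi^1$, the per-stage cost discrepancy $|c(x,a,e_t^{(N)})-c(x,a,\mu_t)|$ and the one-step transition discrepancy are bounded through $\omega_c$ and $\omega_p$ evaluated at $\tilde{\rho}_v(e_t^{(N)},\mu_t)$. Assumption~2-(h) guarantees that these moduli have finite $v$-norm, so the bounds are integrable against the $v$-moment of the state, and Assumption~2-(i) supplies the uniform integrability needed to pass from convergence in probability to convergence in expectation. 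Summing over $t$ with the discount factor $\beta$, and using the geometric moment growth $\alpha^t M$ together with $\alpha\beta\gamma<1$ (Assumptions~1-(c),(f),(g)) to make the tail uniformly small, I would obtain
\begin{align}
\lim_{N\to\infty}\ \sup_{\pi^1\in\sM_1^c}\bigl|J_1^{(N)}(\pi^1,\pi,\ldots,\pi)-J_{\bmu}(\pi^1)\bigr| = 0. \nonumber
\end{align}
Specializing to $\pi^1=\pi$ gives \eqref{plan-i}. Taking the infimum over $\pi^1\in\sM_1^c$, and noting that by assumption~(j) the policy $\pi$ itself lies in $\sM_1^c$ while $\inf_{\pi^1\in\sM_1^c}J_{\bmu}(\pi^1)\geq\inf_{\pi'\in\sM}J_{\bmu}(\pi')=J_{\bmu}(\pi)$ by Proposition~\ref{lemma1} and optimality of $\pi$, gives $\inf_{\pi^1\in\sM_1^c}J_{\bmu}(\pi^1)=J_{\bmu}(\pi)$ and hence \eqref{plan-ii}.

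I expect the main obstacle to be the uniform mean-field estimate of the second paragraph. The difficulty is threefold: the transition kernel depends on the measure argument, so the error recursion need not contract and must instead be tracked time by time; the cost is unbounded, so every estimate must be carried in the weighted $v$-norm rather than the supremum norm; and the infimum in \eqref{plan-ii} forces all bounds to hold uniformly over Agent~$1$'s possibly adversarial deviation $\pi^1$, which is only possible because its self-contribution to $e_t^{(N)}$ is $O(1/N)$ in $v$-mean by Assumption~1-(c). Interchanging the infimum over $\pi^1$ with the limit $N\to\infty$ is exactly what the uniform cost convergence licenses, while exchanging the infinite discounted sum with the limit relies on the uniform integrability furnished by the second-moment bound in Assumption~2-(i).
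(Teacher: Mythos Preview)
Your proposal is correct and follows essentially the same route as the paper: symmetry reduction to Agent~$1$ via Theorem~\ref{theorem0}, propagation of chaos for the empirical measure (Propositions~\ref{prop5} and~\ref{prop8}), coupling of Agent~$1$'s state to a mean-field copy driven by $\bmu$ (Proposition~\ref{prop9}), and the resulting cost convergence (Proposition~\ref{prop6} and Theorem~\ref{theorem3}). The only packaging difference is that the paper states the key cost estimate for arbitrary \emph{sequences} $\{\tpi^{(N)}\}\subset\sM_1^c$ rather than as a supremum over $\pi^1$ --- the two formulations are equivalent, and the paper's final step simply applies the sequence version to a near-optimal $\tpi^{(N)}$ --- and the propagation-of-chaos induction is run against test functions $g\in C_v(\sX)$ (via Lemma~\ref{lemma5}) rather than directly in the metric $\tilde{\rho}_v$; the bias/fluctuation decomposition, the $O(1/N)$ bound on the deviating agent's contribution, and the roles of $\omega_p$, $\omega_c$, and Assumption~2-(i) are exactly as you describe.
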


The remainder of the section is devoted to the proof of Theorem~\ref{theorem4}. In a nutshell, the logic of the proof can be described as follows: We first show that, as $N \to \infty$, the empirical distribution of the agents' states at each time $t$ converges to a deterministic limit given by the mean-field equilibrium distribution of the state at time $t$. This allows us to deduce that the evolution of the state of a generic agent closely tracks the equilibrium state-measure flow in the infinite-population limit. We then show that the infinite-population limit is insensitive to individual-agent deviations from the mean-field equilibrium policy.

 We start by defining a sequence of stochastic kernels $\bigl\{P_t^{\pi}(\,\cdot\,|x,\mu)\bigr\}_{t\geq0}$ on $\sX$ given $\sX \times \P(\sX)$ as
\begin{align}
P_t^{\pi}(\,\cdot\,|x,\mu) \coloneqq \int_{\sA} p(\,\cdot\,|x,a,\mu) \pi_t(da|x). \nonumber
\end{align}
Since $\pi_t$ is assumed to be weakly continuous, $P_t^{\pi}(\,\cdot\,|x,\mu)$ is also weakly continuous in $(x,\mu)$. In the sequel, to ease the notation, we will also write $P_t^{\pi}(\,\cdot\,|x,\mu)$ as $P_{t,\mu}^{\pi}(\,\cdot\,|x)$. Recall that measure flow $\bmu$ in the mean-field equilibrium satisfies
\begin{align}
\mu_{t+1}(\,\cdot\,) &= \int_{\sX} P_t^{\pi}(\,\cdot\,|x,\mu_t) \mu_t(dx) = \mu_t P_{t,\mu_t}^{\pi}(\,\cdot\,). \nonumber
\end{align}
For each $N\geq1$, let $\bigl\{x_i^{N}(t)\bigr\}_{1\leq i\leq N}$ denote the state configuration at time $t$ in the $N$-person game under the policy ${\boldsymbol \pi}^{(N)} $, and let $e^{(N)}_t$ denote the corresponding empirical distribution.

\begin{lemma}\label{newlemma3}
Fix an arbitrary $N\geq1$. Let $\bpi^{(N')}$ be arbitrary policy for the $N$-agent game problem. Then, under $\bpi^{(N')}$, for all $t\geq0$ and $i=1,\ldots,N$, we have
\begin{align}
{\cal L}(x_i^N(t)) \in \P_v^t(\sX). \nonumber
\end{align}
\end{lemma}
\begin{proof}
The proof of the lemma is similar to the proof of Proposition~\ref{prop2}.
\end{proof}

Recall the Polish space $(\P_v(\sX),\rho_v)$. Define the Wasserstein distance of order 1 on the set of probability measures $\P(\P_v(\sX))$ over $\P_v(\sX)$ as follows \cite[Definition 6.1]{Vil09}:
\begin{align}
W_1(\Phi,\Psi) \coloneqq \inf \bigl\{ E[\rho_v(X,Y)]: {\cal L}(X) = \Phi \text{ and } {\cal L}(Y) = \Psi \bigr\}. \nonumber
\end{align}
Moreover, define the following spaces:
\begin{align}
\P_1(\P_v(\sX)) &\coloneqq \biggl\{ \Phi \in \P(\P_v(\sX)): \int_{\P_v(\sX)} \rho_v(\mu,\mu_0) \Phi(d\mu) <\infty \biggr\} \nonumber \\
\intertext{and}
C_v(\P_v(\sX)) &\coloneqq \bigl\{ \Upsilon: \P_v(\sX) \rightarrow \R; \Upsilon \text{ is continuous and } \|\Upsilon\|^*_v <\infty \bigr\}. \nonumber
\end{align}
Lemma~\ref{newlemma3} implies that ${\cal L}(e_t^{(N)})(\P_v(\sX)) = 1$ and ${\cal L}(e_t^{(N)})(\,\cdot\,) \in \P_1(\P_v(\sX))$, for all $N\geq1$ and $t\geq0$.

\begin{lemma}\label{lemma5}
Let $\{\Phi_n\}_{n\geq1} \subset \P_1(\P_v(\sX))$ and $\delta_{\mu} \in \P_1(\P_v(\sX))$. Then the following are equivalent.
\begin{itemize}
\item[(i)] $W_1(\Phi_n,\delta_{\mu}) \rightarrow 0$ as $n\rightarrow\infty$.
\item[(ii)] $E\bigl[|F(X_n) - F(X)|\bigr] \rightarrow 0$ as $n\rightarrow\infty$, for any $F \in C_v(\P_v(\sX))$ and for any sequence of $\P_v(\sX)$-valued random elements $\{X_n\}_{n\geq1}$ and a $\P_v(\sX)$-valued random element $X$ such that ${\cal L}(X_n) = \Phi_n$ and ${\cal L}(X) = \delta_{\mu}$.
\item[(iii)] $E\bigl[|X_n(f) - X(f)|\bigr] \rightarrow 0$ as $n\rightarrow\infty$, for any $f \in C_b(\sX) \cup \{v\}$ and for any sequence of $\P_v(\sX)$-valued random elements $\{X_n\}_{n\geq1}$ and a $\P_v(\sX)$-valued random element $X$ such that ${\cal L}(X_n) = \Phi_n$ and ${\cal L}(X) = \delta_{\mu}$.
\end{itemize}
\end{lemma}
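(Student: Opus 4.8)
The plan is to establish the cyclic chain of implications (i)$\Rightarrow$(ii)$\Rightarrow$(iii)$\Rightarrow$(i). The organizing observation, which I would record first, is that the target measure is a Dirac: any coupling of $\Phi_n$ with $\delta_{\mu}$ must place its second marginal at $\mu$ almost surely, so there is a unique coupling and the Wasserstein distance collapses to an ordinary expectation,
\begin{align}
W_1(\Phi_n,\delta_{\mu}) = \int_{\P_v(\sX)} \rho_v(\lambda,\mu)\, \Phi_n(d\lambda) = E\bigl[\rho_v(X_n,\mu)\bigr]. \nonumber
\end{align}
This identity recasts (i) as $E[\rho_v(X_n,\mu)]\to 0$ and will be used in both (i)$\Rightarrow$(ii) and (iii)$\Rightarrow$(i).

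For (i)$\Rightarrow$(ii), I would first note that the finiteness of the $v$-norm forces every $F\in C_v(\P_v(\sX))$ to grow at most linearly in $\rho_v(\,\cdot\,,\mu_0)$: from $|F(\lambda)|\le\|F\|^*_v\,\lambda(v)$ together with the elementary bound $\lambda(v)\le \mu_0(v)+\rho_v(\lambda,\mu_0)$ (valid because $\rho_v(\lambda,\mu_0)\ge|\lambda(v)-\mu_0(v)|$), one obtains $|F(\lambda)|\le \|F\|^*_v\bigl(\mu_0(v)+\rho_v(\lambda,\mu_0)\bigr)$. Consequently the continuous function $\psi(\lambda)\coloneqq|F(\lambda)-F(\mu)|$ also has at most linear growth in $\rho_v(\,\cdot\,,\mu_0)$. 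I would then invoke the characterization of order-$1$ Wasserstein convergence \cite[Theorem 6.9]{Vil09}, by which $W_1(\Phi_n,\delta_{\mu})\to 0$ implies $\int \psi\,d\Phi_n\to\psi(\mu)$ for every such $\psi$; since $\psi(\mu)=0$, this is exactly $E[|F(X_n)-F(\mu)|]\to 0$, which is (ii). (Equivalently, from the collapse identity $X_n\to\mu$ in probability, hence $F(X_n)\to F(\mu)$ in probability, and the linear-growth bound combined with $E[\rho_v(X_n,\mu)]\to0$ renders $\{F(X_n)\}$ uniformly integrable, so Vitali's theorem gives $L^1$ convergence.)

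For (ii)$\Rightarrow$(iii), it suffices to verify that for each $f\in C_b(\sX)\cup\{v\}$ the evaluation map $F_f(\lambda)\coloneqq\lambda(f)$ lies in $C_v(\P_v(\sX))$, since applying (ii) to $F=F_f$ yields precisely $E[|X_n(f)-X(f)|]\to0$ (recall $X=\mu$ almost surely). Continuity of $F_f$ in the $v$-topology is immediate from the equivalence $\rho_v(\lambda_k,\lambda)\to0\iff\lambda_k(g)\to\lambda(g)$ for all $g\in C_v(\sX)$ established in Section~\ref{assumptions}, because $C_b(\sX)\subset C_v(\sX)$ and $v\in C_v(\sX)$; finiteness of $\|F_f\|^*_v$ follows from $|\lambda(f)|\le\|f\|_{\infty}\,\lambda(v)$ when $f\in C_b(\sX)$ (using $v\ge1$) and from $|\lambda(v)|=\lambda(v)$ when $f=v$. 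For (iii)$\Rightarrow$(i), I would apply the collapse identity and expand $\rho_v$ into its defining series,
\begin{align}
E\bigl[\rho_v(X_n,\mu)\bigr] = \sum_{m=1}^{\infty} 2^{-(m+1)}\, E\bigl[|X_n(f_m)-\mu(f_m)|\bigr] + E\bigl[|X_n(v)-\mu(v)|\bigr], \nonumber
\end{align}
where the $f_m\in C_b(\sX)$ are the metrizing functions with $\|f_m\|\le1$. By (iii) each summand and the final term tend to $0$; since each series term is dominated by $2^{-(m+1)}\cdot2=2^{-m}$ (as $X_n,\mu$ are probability measures and $\|f_m\|\le1$), a dominated-convergence argument over the counting measure on $\N$ permits interchanging limit and sum, giving $E[\rho_v(X_n,\mu)]\to0$, i.e.\ $W_1(\Phi_n,\delta_{\mu})\to0$.

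I expect the main obstacle to be the step (i)$\Rightarrow$(ii): the hypothesis only controls distances, whereas the conclusion demands convergence of the expectation of the \emph{unbounded} functional $|F(X_n)-F(\mu)|$. The decisive ingredient is the finiteness of $\|F\|^*_v$, which bounds $F$ linearly in $\rho_v(\,\cdot\,,\mu_0)$ and thereby supplies either the admissible test function for \cite[Theorem 6.9]{Vil09} or the uniform integrability needed for Vitali's theorem. Once the Dirac-collapse identity is in hand, the remaining two implications reduce to verifying membership in $C_v(\P_v(\sX))$ and a routine term-by-term interchange.
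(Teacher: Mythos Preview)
Your proof is correct; the organization and the key step differ from the paper's. You run a cyclic chain (i)$\Rightarrow$(ii)$\Rightarrow$(iii)$\Rightarrow$(i), whereas the paper establishes (i)$\Rightarrow$(iii) and (i)$\Rightarrow$(ii) directly and then observes that (ii),(iii)$\Rightarrow$(i) is immediate from the definition of $\rho_v$. The substantive divergence is in (i)$\Rightarrow$(ii): you bound $|F(\lambda)|$ by $\|F\|^*_v\,\lambda(v)\le \|F\|^*_v\bigl(\mu_0(v)+\rho_v(\lambda,\mu_0)\bigr)$ to obtain linear $\rho_v$-growth and then invoke the Villani characterization of $W_1$ convergence (or, equivalently, uniform integrability plus Vitali). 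The paper instead argues elementarily: with $l(\nu)=|F(\nu)-F(\mu)|$ it applies the lower-semicontinuity of integrals of nonnegative continuous functions to both $l(\nu)+\|l\|^*_v\,\nu(v)$ and $-l(\nu)+\|l\|^*_v\,\nu(v)$, and cancels the $\nu(v)$ contribution using $\int\nu(v)\,\Phi_n(d\nu)\to\mu(v)$, yielding a liminf/limsup sandwich. Your route is shorter and packages the moment control into a single citation; the paper's is more self-contained, relying only on Portmanteau-type facts already used elsewhere. Your (ii)$\Rightarrow$(iii) via $F_f(\lambda)=\lambda(f)\in C_v(\P_v(\sX))$ is a clean reduction the paper does not spell out (it handles $C_b(\sX)$ and $\{v\}$ separately under (i)$\Rightarrow$(iii)), and your (iii)$\Rightarrow$(i) dominated-convergence expansion of $\rho_v$ is exactly what the paper means by ``clear by definition of $\rho_v$.''
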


\begin{proof}
$(i) \Rightarrow (iii)$ \\
Fix any $\{X_n\}_{n\geq1}$ and $X$ that satisfy the hypothesis of (iii). We first prove the result for $f \in C_b(\sX)$. Define $\Delta_n(\,\cdot\,) \coloneqq {\cal L}(X_n,X)(\,\cdot\,)$. Since ${\cal L}(X) = \delta_{\mu}$, the only coupling between $\Phi_n$ and $\delta_{\mu}$ is given by $\Phi_n \otimes \delta_{\mu}$. Therefore, $\Delta_n = \Phi_n \otimes \delta_{\mu}$. Then, we have
\begin{align}
\lim_{N\rightarrow\infty} E\bigl[ |X_n(f) - X(f)| \bigr] &=\lim_{N\rightarrow\infty} \int_{\P_v(\sX)^2} |\nu(f) - \zeta(f)| \Delta_n(d\nu,d\zeta) \nonumber \\
&=\lim_{N\rightarrow\infty} \int_{\P_v(\sX)} |\nu(f) - \mu(f)| \Phi_n(d\nu) \nonumber \\
&= \int_{\P_v(\sX)} |\nu(f) - \mu(f)| \delta_{\mu}(d\nu) \label{auxxx} \\
&=0, \nonumber
\end{align}
where (\ref{auxxx}) follows from the fact that $h(\nu) \coloneqq |\nu(f) - \mu(f)| \in C_b(\P_v(\sX))$ and $\Phi_n$ converges to $\delta_{\mu}$ weakly. This establishes the result for $f \in C_b(\sX)$. For $f=v$, the result follows from the definition of $\rho_v$ and the fact that $W_1(\Phi_n,\Phi) = E\bigl[ \rho_v(X_n,X) \bigr]$ since there is only one coupling between $\Phi_n$ and $\Phi$.

Note that $(iii), (ii) \Rightarrow (i)$ is clear by definition of $\rho_v$.

$(i) \Rightarrow (ii)$ \\
Let $F \in C_v(\P_v(\sX))$. Since the coupling between $\Phi_n$ and $\delta_{\mu}$ is unique, we can write
\begin{align}
E\bigl[ |F(X_n) - F(X)| \bigr] = \int_{\P_v(\sX)} |F(\nu) - F(\mu)| \Phi_n(d\nu). \nonumber
\end{align}
Define $l(\nu) \coloneqq |F(\nu)-F(\mu)|$. Since $\Phi_n \rightarrow \delta_{\mu}$ weakly, we have \cite[Proposition E.2]{HeLa96}
\begin{align}
\liminf_{n\rightarrow\infty} \int_{\P_v(\sX)} l(\nu) + \|l\|^*_v \nu(v) \Phi_n(d\nu) \geq l(\mu) + \|l\|^*_v \mu(v) \nonumber \\
\intertext{and}
\liminf_{n\rightarrow\infty} \int_{\P_v(\sX)} -l(\nu) + \|l\|^*_v \nu(v) \Phi_n(d\nu) \geq -l(\mu) + \|l\|^*_v \mu(v) \nonumber
\end{align}
as both $l(\nu) + \|l\|^*_v \nu(v)$ and $-l(\nu) + \|l\|^*_v \nu(v)$ are nonnegative continuous functions on $\P_v(\sX)$.
Note that
\begin{align}
\lim_{n\rightarrow\infty} \biggl| \int_{\P_v(\sX)} \nu(v) \Phi_n(d\nu) - \mu(v) \biggr| =0 \nonumber
\end{align}
by definition of $\rho_v$ and (i). Thus, we have
\begin{align}
\liminf_{n\rightarrow\infty} \int_{\P_v(\sX)} l(\nu) \Phi_n(d\nu) &\geq l(\mu) \nonumber \\
&\geq \limsup_{n\rightarrow\infty} \int_{\P_v(\sX)} l(\nu) \Phi_n(d\nu). \nonumber
\end{align}
Therefore,
\begin{align}
\lim_{n\rightarrow\infty} \int_{\P_v(\sX)} l(\nu) \Phi_n(d\nu) = l(\mu) = |F(\mu) - F(\mu)| = 0. \nonumber
\end{align}
This completes the proof.
\end{proof}

The following proposition states that, at each time $t$, the sequence of random measures $e^{(N)}_t$ converges to the mean-field equilibrium distribution $\mu_t$ of the state at time $t$ as $N \to \infty$:
% Define the empirical distribution
%\begin{align}
%e_t^{(N)}(\,\cdot\,) \coloneqq \frac{1}{N} \sum_{i=1}^N \delta_{x_i^{N}(t)}(\,\cdot\,). \nonumber
%\end{align}

\begin{proposition}\label{prop5}
For all $t\geq0$, $\lim_{N\rightarrow\infty} W_1\bigl({\cal L}(e_t^{(N)}),\delta_{\mu_t}\bigr)=0$ in $\P_1\bigl(\P_v(\sX)\bigr)$.
\end{proposition}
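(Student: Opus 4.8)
The plan is to prove the claim by induction on $t$. At each step I want to show that ${\cal L}(e_t^{(N)})$ concentrates on the deterministic measure $\mu_t$, in the sense of $W_1$-convergence in $\P_1(\P_v(\sX))$. By Lemma~\ref{lemma5}, this is equivalent to showing that $E\bigl[|e_t^{(N)}(f) - \mu_t(f)|\bigr] \to 0$ for every $f \in C_b(\sX) \cup \{v\}$, so the entire argument reduces to controlling expectations of the form $E\bigl[|e_t^{(N)}(f) - \mu_t(f)|\bigr]$. This is the reformulation I would exploit throughout.

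The base case $t=0$ is essentially the law of large numbers: since $x_1^N(0),\ldots,x_N^N(0)$ are i.i.d.\ with common law $\mu_0$, the empirical mean $e_0^{(N)}(f) = \frac{1}{N}\sum_i f(x_i^N(0))$ converges in $L^1$ to $\mu_0(f)$ for each fixed $f \in C_b(\sX) \cup \{v\}$; the uniform integrability needed for $f=v$ is supplied by Assumption~1-(e) together with the second-moment bound in Assumption~2-(i). For the inductive step, assume the claim holds at time $t$. The natural decomposition is
\begin{align}
e_{t+1}^{(N)}(f) - \mu_{t+1}(f) = \Bigl( e_{t+1}^{(N)}(f) - e_t^{(N)} P_{t,e_t^{(N)}}^{\pi}(f) \Bigr) + \Bigl( e_t^{(N)} P_{t,e_t^{(N)}}^{\pi}(f) - \mu_t P_{t,\mu_t}^{\pi}(f) \Bigr). \nonumber
\end{align}
The first bracket is a martingale-difference (conditional fluctuation) term: conditioned on the states at time $t$, the next states are independent with respective laws $P_{t,e_t^{(N)}}^{\pi}(\,\cdot\,|x_i^N(t))$, so $e_{t+1}^{(N)}(f)$ is an average of conditionally independent random variables whose conditional mean is exactly $e_t^{(N)} P_{t,e_t^{(N)}}^{\pi}(f)$; its conditional variance is $O(1/N)$, giving an $L^1$ (or $L^2$) bound of order $N^{-1/2}$ that vanishes.

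The second bracket is where the continuity hypotheses enter, and this is the main obstacle. I would split it further, writing $e_t^{(N)} P_{t,e_t^{(N)}}^{\pi}(f) - \mu_t P_{t,\mu_t}^{\pi}(f)$ as the sum of $e_t^{(N)}\bigl(P_{t,e_t^{(N)}}^{\pi}(f) - P_{t,\mu_t}^{\pi}(f)\bigr)$ and $\bigl(e_t^{(N)} - \mu_t\bigr)\bigl(P_{t,\mu_t}^{\pi}(f)\bigr)$. The second of these is handled directly by the inductive hypothesis, since $x \mapsto P_{t,\mu_t}^{\pi}(f)(x) = \int_{\sA} \int_{\sX} f(y)\,p(dy|x,a,\mu_t)\,\pi_t(da|x)$ is bounded and continuous (by weak continuity of $p$ and of $\pi_t$ under (j)), hence lies in $C_b(\sX)$ when $f$ does, and has finite $v$-norm when $f=v$ by Assumption~1-(c); applying Lemma~\ref{lemma5} to this test function makes the term vanish. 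The genuinely delicate term is $e_t^{(N)}\bigl(P_{t,e_t^{(N)}}^{\pi}(f) - P_{t,\mu_t}^{\pi}(f)\bigr)$, which captures the discrepancy coming from evaluating the transition kernel at the random empirical measure $e_t^{(N)}$ rather than at $\mu_t$. Here I would bound the integrand uniformly using the modulus of continuity $\omega_p$ from Assumption~2-(h): for $\|f\|_v \le 1$ the difference of the two kernels is controlled by $\omega_p\bigl(\tilde\rho_v(e_t^{(N)},\mu_t)\bigr)$, and the expectation $E\bigl[e_t^{(N)}(v)\,\omega_p(\tilde\rho_v(e_t^{(N)},\mu_t))\bigr]$ must be shown to tend to zero. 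Since $\omega_p(r) \to 0$ as $r \to 0$ and $\tilde\rho_v(e_t^{(N)},\mu_t) \to 0$ in probability by the inductive hypothesis, the dominated-convergence argument closes — but it requires the finite $v$-norm of $\omega_p(\tilde\rho_v(\,\cdot\,,\mu_t))$ from Assumption~2-(h) and the uniform integrability of $e_t^{(N)}(v)$ supplied by Assumption~2-(i) to justify passing the limit through the expectation. Assembling the three pieces gives $E\bigl[|e_{t+1}^{(N)}(f) - \mu_{t+1}(f)|\bigr] \to 0$ for all admissible $f$, which by Lemma~\ref{lemma5} completes the induction and hence the proof.
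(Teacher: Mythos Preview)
Your overall architecture matches the paper: induction on $t$, reduction via Lemma~\ref{lemma5} to $E\bigl[|e_t^{(N)}(f)-\mu_t(f)|\bigr]\to 0$, the two-term split of $e_{t+1}^{(N)}(f)-\mu_{t+1}(f)$, and a conditional-variance bound (the paper's Lemma~\ref{auxlemma}) for the fluctuation term using Assumption~2-(i). Where you diverge is in the second bracket. You split $e_t^{(N)} P_{t,e_t^{(N)}}^{\pi}(f)-\mu_t P_{t,\mu_t}^{\pi}(f)$ once more and invoke the modulus $\omega_p$ from Assumption~2-(h) to control $e_t^{(N)}\bigl(P_{t,e_t^{(N)}}^{\pi}(f)-P_{t,\mu_t}^{\pi}(f)\bigr)$. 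The paper instead treats this term in one stroke: it defines $F(\mu)\coloneqq\mu P_{t,\mu}^{\pi}(g)$, checks directly that $F\in C_v(\P_v(\sX))$ using only the weak-continuity assumptions on $p$ and $\pi_t$ (Assumption~1-(d) and condition~(j), via \cite[Theorem~3.3]{Ser82}), and applies the $(\mathrm{i})\Rightarrow(\mathrm{ii})$ implication of Lemma~\ref{lemma5}. This is cleaner and avoids Assumption~2-(h) altogether at this stage; your route works too but imports a quantitative hypothesis that is not actually needed for Proposition~\ref{prop5}.

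One small wrinkle in your write-up: you carry the induction over test functions $f\in C_b(\sX)\cup\{v\}$, but then need the hypothesis for $P_{t,\mu_t}^{\pi}(v)$, which is in $C_v(\sX)$ but not in $C_b(\sX)\cup\{v\}$. This is harmless---the inductive hypothesis is $W_1\to 0$, and Lemma~\ref{lemma5}$(\mathrm{i})\Rightarrow(\mathrm{ii})$ immediately upgrades it to all of $C_v(\sX)$---but the paper sidesteps the issue by running the induction over $f\in C_v(\sX)$ from the start.
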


\begin{proof}
We prove that
\begin{align}
\lim_{N\rightarrow\infty} E\bigl[|e_t^{(N)}(f) - \mu_t(f)|\bigr] = 0 \nonumber
\end{align}
for any $f \in C_v(\sX)$ by induction on $t$. Since $C_v(\sX) \supset C_b(\sX) \cup \{v\}$, this will complete the proof by Lemma~\ref{lemma5}.

Note that since $\{x_i^N(0)\}_{1\leq i\leq N} \sim \prod_{i=1}^N \mu_0$, the claim is true for $t=0$ as any $f \in C_v(\sX)$ is $\mu_0$-integrable by Assumption~1-(e). Suppose the claim holds for $t$ and consider $t+1$. Fix any $g \in C_v(\sX)$. Then, we have
\begin{align}
|e_{t+1}^{(N)}(g) - \mu_{t+1}(g)| \leq |e_{t+1}^{(N)}(g) - e_{t}^{(N)} P^{\pi}_{t,e_t^{(N)}}(g)| + |e_t^{(N)} P^{\pi}_{t,e_t^{(N)}}(g) - \mu_t P^{\pi}_{t,\mu_t}(g) |, \label{eq8}
\end{align}
where $\mu_{t+1} = \mu_t P^{\pi}_{t,\mu_t}$ since $(\mu_t)_{t\geq0}$ is the measure flow in the mean field equilibrium.

First, let us consider the second term in (\ref{eq8}). Define $F: \P_v(\sX) \rightarrow \R$ as
\begin{align}
F(\mu) = \mu P^{\pi}_{t,\mu}(g) \coloneqq \int_{\sX} \int_{\sX} g(y) P^{\pi}_t(dy|x,\mu) \mu(dx). \nonumber
\end{align}
Note that
\begin{align}
\|F\|^*_v &= \sup_{\mu \in \P_v(\sX)} \frac{|F(\mu)|}{\mu(v)} \nonumber \\
&= \sup_{\mu \in \P_v(\sX)} \frac{\bigl|\int_{\sX} \int_{\sX} g(y) P^{\pi}_t(dy|x,\mu) \mu(dx)\bigr|}{\mu(v)} \nonumber \\
&= \sup_{\mu \in \P_v(\sX)} \frac{\int_{\sX} \int_{\sX} |g(y)| P^{\pi}_t(dy|x,\mu) \mu(dx)}{\mu(v)} \nonumber \\
&= \sup_{\mu \in \P_v(\sX)} \frac{\int_{\sX} \int_{\sX} \|g\|_v v(y) P^{\pi}_t(dy|x,\mu) \mu(dx)}{\mu(v)} \nonumber \\
&= \|g\|_v \sup_{\mu \in \P_v(\sX)} \frac{\alpha \mu(v)}{\mu(v)} = \alpha \|g\|_v. \nonumber
\end{align}
Hence, $\|F\|^*_v < \infty$. If we can prove that $F$ is also continuous, then by Lemma~\ref{lemma5} the expectation of the second term in (\ref{eq8}) goes to zero. To this end, let $\mu_n \rightarrow \mu$ in $\P_v(\sX)$ with respect to $v$-topology. Define
\begin{align}
l_n(x) &\coloneqq \int_{\sX} g(y) P_t^{\pi}(dy|x,\mu_n) \nonumber \\
\intertext{and}
l(x) &\coloneqq \int_{\sX} g(y) P_t^{\pi}(dy|x,\mu). \nonumber
\end{align}
Note that $F(\mu_n) = \int_{\sX} l_n(x) \mu_n(dx)$ and $F(\mu) = \int_{\sX} l(x) \mu(dx)$. We first prove that $l_n$ converges continuously to $l$. Let $x_n \rightarrow x$. Since $\pi_t$ is assumed to be continuous, we have $\pi_t(\,\cdot\,|x_n) \rightarrow \pi_t(\,\cdot\,|x)$ weakly. Note that
\begin{align}
\sup_{n\geq1} \biggl| \int_{\sX} g(y) p(dy|x_n,a,\mu_n) \biggr| \leq \|g\|_v \alpha \sup_{n\geq1} v(x_n)
< \infty \nonumber
\end{align}
since $v$ is continuous and $\{x_n\}$ is convergent. Therefore, by \cite[Theorem 3.3]{Ser82} we have
\begin{align}
\lim_{n\rightarrow\infty} l_n(x_n) &= \lim_{n\rightarrow\infty} \int_{\sA} \int_{\sX} g(y) p(dy|x_n,a,\mu_n) \pi_t(da|x_n) \nonumber \\
&= \int_{\sA} \int_{\sX} g(y) p(dy|x,a,\mu) \pi_t(da|x) \nonumber \\
&= l(x).
\end{align}
Thus, $l_n$ converges to $l$ continuously. In addition, we have $|l_n(x)| \leq \alpha \|g\|_v v(x)$ for all $x \in \sX$ and $\mu_n(v) \rightarrow \mu(v)$. Then, again by \cite[Theorem 3.3]{Ser82}, we have
\begin{align}
\lim_{n\rightarrow\infty} F(\mu_n) &= \lim_{n\rightarrow\infty} \int_{\sX} l_n(x) \mu_n(dx) \nonumber \\
&= \int_{\sX} l(x) \mu(dx) \nonumber \\
&= F(\mu). \nonumber
\end{align}
Hence, $F \in C_v(\P_v(\sX))$ and so the expectation of the second term in (\ref{eq8}) goes to zero.

Now, consider the first term in (\ref{eq8}). Let us write the expectation of this term as
\begin{align}
E\biggl[ E\biggl[ |e_{t+1}^{(N)}(g) - e_t^{(N)} P^{\pi}_{t,e_t^{(N)}}(g)| \biggr| x_1^N(t),\ldots,x_N^N(t) \biggr] \biggr]. \nonumber
\end{align}
Then, by Lemma~\ref{auxlemma} in Appendix~\ref{newapp1}, we have
\begin{align}
E\biggl[ E\biggl[ |e_{t+1}^{(N)}(g) &- e_t^{(N)} P^{\pi}_{t,e_t^{(N)}}(g)| \biggr| x_1^N(t),\ldots,x_N^N(t) \biggr] \biggr]^2 \nonumber
\\
&\leq E\biggl[ E\biggl[ |e_{t+1}^{(N)}(g) - e_t^{(N)} P^{\pi}_{t,e_t^{(N)}}(g)| \biggr| x_1^N(t),\ldots,x_N^N(t) \biggr]^2 \biggr] \nonumber \\
&\leq E\biggl[ \frac{1}{N^2} \sum_{i=1}^N \biggl\{ \int_{\sX} g^2(y) P^{\pi}_{t,e_t^{(N)}}(dy|x_i^N(t)) \nonumber \\
&\phantom{xxxxxxxxxxxx}+ \biggl( \int_{\sX} g(y) P^{\pi}_{t,e_t^{(N)}}(dy|x_i^N(t)) \biggr)^2 \biggr\}  \biggl] \nonumber \\
&\leq E\biggl[ \frac{\|g\|_v^2}{N^2} \sum_{i=1}^N \biggl\{ \int_{\sX} v^2(y) P^{\pi}_{t,e_t^{(N)}}(dy|x_i^N(t)) \nonumber \\
&\phantom{xxxxxxxxxxxx}+ \biggl( \int_{\sX} v(y) P^{\pi}_{t,e_t^{(N)}}(dy|x_i^N(t)) \biggr)^2 \biggr\}  \biggl] \nonumber \\
&\leq \frac{\|g\|_v^2}{N^2} \sum_{i=1}^N E\bigl[B v^2(x_i^N(t)) + \alpha^2 v^2(x_i^N(t))\bigr]. \nonumber
\end{align}
The last expression follows from Assumption~1-(c) and Assumption~2-(i). For any $t\geq0$, one can prove that $\sup_{N\geq1} \sup_{i=1,\ldots,N} E[v^2(x_i^N(t))] \leq H$ for some $H \in \R$ by Assumption~2-(i). Therefore, the expectation of the first term in (\ref{eq8}) also converges to zero as $N\rightarrow\infty$. Since $g$ is arbitrary, this completes the proof.
\end{proof}

From the above proposition and from the assumed continuity of the transition probability $p(\,\cdot\,|x,a,\mu)$ in $\mu$, we now deduce that the evolution of the state of a generic agent in the original game with sufficiently many agents should closely track the evolution of the state in the mean-field game:

\begin{proposition}\label{prop6} If $(\pi,\bmu)$ is a mean-field equilibrium, then
\begin{align}
\lim_{N\rightarrow\infty} J_1^{(N)}({\boldsymbol \pi}^{(N)}) = J_{\bmu}(\pi) = \inf_{\pi' \in \Pi} J_{\bmu}(\pi'). \nonumber
\end{align}
\end{proposition}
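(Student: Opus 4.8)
The second equality is immediate from the definition of a mean-field equilibrium: since $(\pi,\bmu)$ is a mean-field equilibrium we have $\pi \in \Phi(\bmu)$, which by definition of $\Phi$ means exactly that $J_{\bmu}(\pi) = \inf_{\pi' \in \Pi} J_{\bmu}(\pi')$. Thus the whole content of the proposition lies in the first equality $\lim_{N\to\infty} J_1^{(N)}(\bpi^{(N)}) = J_{\bmu}(\pi)$, and my plan is to establish it by proving term-by-term convergence of the discounted per-stage costs and then interchanging the limit with the infinite sum.

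The first step is to show, by induction on $t$, that the law of the state-action pair of the representative agent converges to its mean-field counterpart; writing ${\cal L}(x(t),a(t))$ for the mean-field state-action distribution at time $t$ under $\pi$ and $\bmu$, I claim that ${\cal L}(x_1^N(t),a_1^N(t))$ converges weakly to ${\cal L}(x(t),a(t))$ as $N\to\infty$. The base case $t=0$ is exact, since $x_1^N(0)\sim\mu_0$ and $a_1^N(0)\sim\pi_0(\,\cdot\,|x_1^N(0))$, precisely as in the mean-field game. For the inductive step, recall that $x_1^N(t+1)\sim p(\,\cdot\,|x_1^N(t),a_1^N(t),e_t^{(N)})$. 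By Proposition~\ref{prop5}, $W_1({\cal L}(e_t^{(N)}),\delta_{\mu_t})\to0$, and since the limiting law is a Dirac mass this forces $e_t^{(N)}\to\mu_t$ in probability in the metric $\rho_v$; combined with the inductive hypothesis and the fact that the limit $\mu_t$ is deterministic, this yields the joint convergence $(x_1^N(t),a_1^N(t),e_t^{(N)}) \Rightarrow (x(t),a(t),\mu_t)$. Weak continuity of $p$ (Assumption~1-(d)) then gives ${\cal L}(x_1^N(t+1))\to\int_{\sX\times\sA} p(\,\cdot\,|x,a,\mu_t)\,{\cal L}(x(t),a(t))(dx,da) = \mu_{t+1}$, and weak continuity of $\pi_{t+1}$ (assumption~(j)) upgrades this to ${\cal L}(x_1^N(t+1),a_1^N(t+1))\to{\cal L}(x(t+1),a(t+1))$, closing the induction.

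Given the joint convergence $(x_1^N(t),a_1^N(t),e_t^{(N)}) \Rightarrow (x(t),a(t),\mu_t)$ for each fixed $t$, continuity of $c$ (Assumption~1-(a)) implies that $c(x_1^N(t),a_1^N(t),e_t^{(N)})$ converges in distribution to $c(x(t),a(t),\mu_t)$. To pass this to convergence of expectations I would establish uniform integrability of the family $\{c(x_1^N(t),a_1^N(t),e_t^{(N)})\}_N$: writing $c(x,a,e_t^{(N)}) \leq c(x,a,\mu_t) + \omega_c(\tilde{\rho}_v(e_t^{(N)},\mu_t))$ via Assumption~2-(h) and bounding $c(x,a,\mu_t)\leq M_t v(x)$ (Assumption~1-(f), since $\mu_t\in\P_v^t(\sX)$), the required uniform integrability follows from the uniform $L^2$ bound $\sup_N E[v^2(x_1^N(t))]\leq H$ obtained in the proof of Proposition~\ref{prop5} (Assumption~2-(i)) together with the corresponding control of $e_t^{(N)}(v)$. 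Hence $E[c(x_1^N(t),a_1^N(t),e_t^{(N)})] \to E^{\pi}[c(x(t),a(t),\mu_t)]$ for every $t$.

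It remains to interchange the limit over $N$ with the summation over $t$. The same decomposition yields the uniform bound $\beta^t E[c(x_1^N(t),a_1^N(t),e_t^{(N)})] \leq (\alpha\beta\gamma)^t RM + \beta^t\alpha^t M\,\|\omega_c(\tilde{\rho}_v(\,\cdot\,,\mu_t))\|^*_v$, where I have used $E[v(x_1^N(t))]\leq\alpha^t M$ (Lemma~\ref{newlemma3} and Assumption~1-(c)) and $E[e_t^{(N)}(v)]\leq\alpha^t M$. Dominating this by a summable sequence lets me apply dominated convergence to the sum and conclude $\lim_N J_1^{(N)}(\bpi^{(N)}) = \sum_{t\geq0} \beta^t E^{\pi}[c(x(t),a(t),\mu_t)] = J_{\bmu}(\pi)$. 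I expect the main obstacle to be precisely this last domination step: because $c$ is unbounded and the empirical measure $e_t^{(N)}$ need not lie in $\P_v^t(\sX)$ almost surely, the per-stage costs cannot be dominated directly by $M_t v$, and one must carefully combine the moment bounds of Assumptions~1 and~2 with the geometric factor $(\alpha\beta\gamma)^t$ (Assumption~1-(g)) to produce a uniform-in-$N$ summable envelope.
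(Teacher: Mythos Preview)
Your argument is correct in outline but follows a genuinely different route from the paper. The paper never tracks the law of the individual agent $(x_1^N(t),a_1^N(t))$; instead it exploits exchangeability to write
\[
E\bigl[c(x_1^N(t),a_1^N(t),e_t^{(N)})\bigr]=\frac{1}{N}\sum_{i=1}^N E\bigl[c(x_i^N(t),a_i^N(t),e_t^{(N)})\bigr]=E\bigl[F(e_t^{(N)})\bigr],
\]
where $F(\mu)=\int_{\sX}\int_{\sA}c(x,a,\mu)\,\pi_t(da|x)\,\mu(dx)$. It then checks that $F\in C_v(\P_v(\sX))$ and applies Proposition~\ref{prop5} together with Lemma~\ref{lemma5} directly to get $E[F(e_t^{(N)})]\to F(\mu_t)$. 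This bypasses entirely your inductive weak-convergence argument for ${\cal L}(x_1^N(t),a_1^N(t))$ and the subsequent uniform-integrability step; everything is absorbed into the single function $F$ of the empirical measure. Your approach is more ``probabilistic'' (Slutsky-type joint convergence plus uniform integrability), while the paper's is more ``functional-analytic'' (reduce to a continuous $v$-bounded functional of $e_t^{(N)}$); both buy the same termwise limit, but the paper's avoids the extra induction and the delicate UI verification.

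One concrete point on your domination step: your envelope contains the factor $(\alpha\beta)^t\,\|\omega_c(\tilde\rho_v(\cdot,\mu_t))\|_v^*$, and Assumption~2-(h) only guarantees that this $v$-norm is finite for each fixed $\mu_t$, not that it is bounded uniformly in $t$. The paper sidesteps this by taking the reference measure to be $\mu_0$ rather than $\mu_t$: bounding $c(x,a,e_t^{(N)})\le\omega_c(\tilde\rho_v(e_t^{(N)},\mu_0))+M_0\,v(x)$ yields the $t$-independent envelope $(\|\omega_c(\tilde\rho_v(\cdot,\mu_0))\|_v^*+M_0)\,M\,(\alpha\beta)^t$, which is summable since $\alpha\beta\le\alpha\beta\gamma<1$. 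With this fix your dominated-convergence argument goes through.
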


\begin{proof}
For each $t\geq0$, let us define
\begin{align}
c_{\pi_t}(x,\mu) \coloneqq \int_{\sA} c(x,a,\mu) \pi_t(da|x). \nonumber
\end{align}
Note that, for any permutation $\sigma$ which is independent of $x^N_1(t),\ldots,x^N_N(t)$, we have
\begin{align}
{\cal L}\bigl(x_1^N(t),\ldots,x_N^N(t),e_t^{(N)}\bigr) = {\cal L}\bigl(x_{\sigma(1)}^N(t),\ldots,x_{\sigma(N)}^N(t),e_t^{(N)}\bigr). \nonumber
\end{align}
Therefore, we can write
\begin{align}
E\bigl[ c(x_1^N(t),a_1^N(t),e_t^{(N)}) \bigr] &= \frac{1}{N} \sum_{i=1}^N E\bigl[ c(x_i^N(t),a_i^N(t),e_t^{(N)}) \bigr] \nonumber \\
&= E\bigl[ e_t^{(N)}\bigl(c_{\pi_t}(x,e_t^{(N)})\bigr) \bigr]. \nonumber
\end{align}
Define $F: \P_v(\sX) \rightarrow \R$ as
\begin{align}
F(\mu) \coloneqq \int_{\sX} c_{\pi_t}(x,\mu) \mu(dx). \nonumber
\end{align}
Hence, $E\bigl[ c(x_1^N(t),a_1^N(t),e_t^{(N)}) \bigr] = E\bigl[ F(e_t^{(N)}) \bigr]$. First, note that
\begin{align}
\|F&\|^*_v \nonumber \\
&= \sup_{\mu \in \P_v(\sX)} \frac{\bigl| \int_{\sX \times \sA} c(x,a,\mu) \pi_t(da|x)\mu(dx) \bigr|}{\mu(v)} \nonumber \\
&\leq \sup_{\mu \in \P_v(\sX)} \frac{\bigl| \int_{\sX} c_{\pi_t}(x,\mu) \mu(dx) - \int_{\sX} c_{\pi_t}(x,\mu_t) \mu(dx) \bigr|}{\mu(v)} + \sup_{\mu \in \P_v(\sX)} \frac{\bigl| \int_{\sX} c_{\pi_t}(x,\mu_t) \mu(dx) \bigr|}{\mu(v)} \nonumber \\
&\leq \sup_{\mu \in \P_v(\sX)} \frac{\omega_c(\tilde{\rho}_v(\mu,\mu_t))}{\mu(v)} + \sup_{\mu \in \P_v(\sX)} \frac{M_t \mu(v)}{\mu(v)} \nonumber \\
&=\|\omega(\tilde{\rho}_v(\,\cdot\,,\mu_t))\|^*_v + M_t <\infty. \text{ } (\text{by Assumption~2-(h)}) \nonumber
\end{align}
Hence, $F$ has finite $v$-norm. For continuity, let $\mu_n \rightarrow \mu$ in $v$-topology. Then, we have
\begin{align}
|F(\mu_n) - F(\mu)| &\leq \biggl| \int_{\sX} c_{\pi_t}(x,\mu_n) \mu_n(dx) - \int_{\sX} c_{\pi_t}(x,\mu) \mu_n(dx) \biggr| \nonumber \\
&\phantom{xxxxxxxxxxx}+ \biggl| \int_{\sX} c_{\pi_t}(x,\mu) \mu_n(dx) - \int_{\sX} c_{\pi_t}(x,\mu) \mu(dx) \biggr| \nonumber \\
&\leq \omega_c(\tilde{\rho}_v(\mu_n,\mu)) + \biggl| \int_{\sX} c_{\pi_t}(x,\mu) \mu_n(dx) - \int_{\sX} c_{\pi_t}(x,\mu) \mu(dx) \biggr|. \nonumber
\end{align}
Note that $c_{\pi_t}(x,\mu) \in C_v(\sX)$ since $\pi_t$ is weakly continuous and $\mu \in \P_v^l(\sX)$ for some $l\geq1$. Therefore, the last expression goes to zero as $\mu_n \rightarrow \mu$ in $v$-topology. Hence, $F \in C_v(\P_v(\sX))$. Therefore, by Proposition~\ref{prop5} we have
\begin{align}
\lim_{N\rightarrow\infty} E\bigl[ c(x_1^N(t),a_1^N(t),e_t^{(N)}) \bigr] &= \lim_{N\rightarrow\infty} E\bigl[ e_t^{(N)}\bigl(c_{\pi_t}(x,e_t^{(N)})\bigr) \bigr] \nonumber \\
&= \lim_{N\rightarrow\infty} E[F(e_t^{(N)})] \nonumber \\
&= F(\mu_t) \nonumber \\
&= \mu_t(c_{\pi_t}(x,\mu_t)). \label{eq9}
\end{align}
Since $t$ is arbitrary, this is true for all $t\geq0$. Recall that the pair $(\pi,\bmu)$ is a mean field equilibrium, and so we can write \begin{align}
J_{\bmu}(\pi) = \sum_{t=0}^{\infty} \beta^t \mu_t(c_{\pi_t}(x,\mu_t)). \nonumber
\end{align}
Note that for all $N\geq1$
\begin{align}
J_1^{(N)}({\boldsymbol \pi}^{(N)}) &= \sum_{t=0}^{\infty} \beta^t E\bigl[ c(x_1^N(t),a_1^N(t),e_t^{(N)}) \bigr] \nonumber \\
&\phantom{x}\leq \sum_{t=0}^{\infty} \beta^t \biggl\{ E\bigl[ |c(x_1^N(t),a_1^N(t),e_t^{(N)}) - c(x_1^N(t),a_1^N(t),\mu_0)|\bigr] \nonumber \\
&\phantom{xxxxxxxxxxxxxxxxxxxxxxxxxxxx}+ E\bigl[ c(x_1^N(t),a_1^N(t),\mu_0) \bigr] \biggr\} \nonumber \\
&\phantom{x}\leq \sum_{t=0}^{\infty} \beta^t \biggl\{ E\bigl[ \omega_c(\tilde{\rho}_v(e_t^{(N)},\mu_0))\bigr] + E\bigl[ M_0 v(x_1^N(t)) \bigr] \biggr\} \nonumber \\
&\phantom{x}\leq \sum_{t=0}^{\infty} \beta^t \biggl\{ \|\omega_c(\tilde{\rho}_v(\,\cdot\,,\mu_0))\|^*_v E\bigl[ e_t^{(N)}(v)\bigr] + M_0 E\bigl[ v(x_1^N(t)) \bigr] \biggr\} \nonumber \\
&\phantom{x}\leq \bigr (\|\omega_c(\tilde{\rho}_v(\,\cdot\,,\mu_0))\|^*_v + M_0 \bigl) M \sum_{t=0}^{\infty} (\beta \alpha)^t \nonumber \\
&\phantom{x} <\infty. \nonumber
\end{align}
Therefore, by (\ref{eq9}) and the dominated convergence theorem, we obtain
\begin{align}
\lim_{N\rightarrow\infty} J_1^{(N)}({\boldsymbol \pi}^{(N)}) = J_{\bmu}(\pi) \nonumber
\end{align}
which completes the proof.
\end{proof}

Now let $\{\tpi^{(N)}\}_{N\geq1} \subset \sM_1^c$ be an arbitrary sequence of weakly continuous Markov policies for Agent~$1$. For each $N\geq1$, let $\bigl\{\tx_i^N(t)\bigr\}_{1\leq i \leq N}$ be the state configuration at time $t$ in the $N$-person game under the policy $\tilde{{\boldsymbol \pi}}^{(N)} \coloneqq \{\tpi^{(N)},\pi,\ldots,\pi\}$ (i.e., when Agents $2$ through $N$ stick to the mean-field equilibrium policy $\pi$,  while Agent~$1$ deviates with $\tpi^{(N)}$), and let $\te_t^{(N)}$ denote the corresponding empirical distribution. The following result, whose proof is a slight modification of the proof of Proposition~\ref{prop5}, states that, in the infinite-population limit, the law of the empirical distribution of the states at each time $t$ is insensitive to local deviations from the mean-field equilibrium policy:

\begin{proposition}\label{prop8}
For each $t\geq0$, $\lim_{N\rightarrow\infty} W_1\bigl({\cal L}(\te_t^{(N)}),\delta_{\mu_t}\bigr)=0$ in $\P_1(\P_v(\sX))$.
\end{proposition}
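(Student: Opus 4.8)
The plan is to mirror the induction-on-$t$ argument used for Proposition~\ref{prop5}, the only new ingredient being the contribution of the single deviating Agent~$1$, which I expect to be of order $1/N$ and therefore asymptotically negligible. By Lemma~\ref{lemma5}, it suffices to show that $\lim_{N\to\infty} E[|\te_t^{(N)}(g) - \mu_t(g)|] = 0$ for every $g \in C_v(\sX)$. The base case $t=0$ is identical to that of Proposition~\ref{prop5}: the deviation of Agent~$1$ affects only the actions, while the initial states of all $N$ agents (including Agent~$1$) are still i.i.d.\ draws from $\mu_0$.

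For the inductive step I would introduce the deviating transition kernel $P_{t,\mu}^{\tpi^{(N)}}(\,\cdot\,|x) \coloneqq \int_\sA p(\,\cdot\,|x,a,\mu)\,\tpi_t^{(N)}(da|x)$, so that, conditioned on the time-$t$ states, $\tx_1^N(t+1)$ has law $P_{t,\te_t^{(N)}}^{\tpi^{(N)}}(\,\cdot\,|\tx_1^N(t))$ while $\tx_i^N(t+1)$ for $i \ge 2$ has law $P_{t,\te_t^{(N)}}^{\pi}(\,\cdot\,|\tx_i^N(t))$. The key observation is that the conditional mean $E[\te_{t+1}^{(N)}(g)\,|\,\tx_1^N(t),\ldots,\tx_N^N(t)]$ differs from $\te_t^{(N)}P_{t,\te_t^{(N)}}^{\pi}(g)$ only through the single $i=1$ term, and, by Assumption~1-(c), this difference is bounded in absolute value by $\tfrac{2\alpha\|g\|_v}{N}v(\tx_1^N(t))$. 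Since ${\cal L}(\tx_1^N(t)) \in \P_v^t(\sX)$ by Lemma~\ref{newlemma3}, taking expectations gives $E[v(\tx_1^N(t))] \le \alpha^t M$, so this correction vanishes like $1/N$.

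I would then decompose, exactly as in (\ref{eq8}),
\[
|\te_{t+1}^{(N)}(g) - \mu_{t+1}(g)| \le \bigl|\te_{t+1}^{(N)}(g) - E[\te_{t+1}^{(N)}(g)\,|\,\tx_1^N(t),\ldots,\tx_N^N(t)]\bigr| + \frac{2\alpha\|g\|_v}{N}v(\tx_1^N(t)) + \bigl|\te_t^{(N)}P_{t,\te_t^{(N)}}^{\pi}(g) - \mu_t P_{t,\mu_t}^{\pi}(g)\bigr|.
\]
The last term is treated verbatim as in Proposition~\ref{prop5}: the map $F(\mu) \coloneqq \mu P_{t,\mu}^{\pi}(g)$ lies in $C_v(\P_v(\sX))$, so by the inductive hypothesis together with Lemma~\ref{lemma5} its expectation tends to zero. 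The first (fluctuation) term is controlled by the same conditional-variance estimate: conditioned on the time-$t$ states, the next states $\{\tx_i^N(t+1)\}_{i=1}^N$ remain conditionally independent, since the single deviation preserves the product structure of the joint transition law in (\ref{eq:state_spec})--(\ref{eq:policy_spec}); hence Lemma~\ref{auxlemma}, Assumption~2-(i), and the uniform bound $\sup_{N\geq1}\sup_{i=1,\ldots,N} E[v^2(\tx_i^N(t))] \le H$ yield a bound of order $1/N$. The middle correction term was already shown to be $O(1/N)$ in expectation, so all three terms vanish and the induction closes.

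The main obstacle — and the only genuine departure from Proposition~\ref{prop5} — is the bookkeeping for the single deviating agent. Concretely, one must check that the deviating kernel $P_{t,\mu}^{\tpi^{(N)}}$ still obeys the growth bound of Assumption~1-(c) (which it does, because that bound is a supremum over all $(a,\mu)$ and is therefore insensitive to the choice of policy) and that conditional independence of the next states survives the deviation. Once these two points are secured, isolating the $i=1$ summand as an $O(1/N)$ perturbation reduces the argument to the one already carried out for Proposition~\ref{prop5}.
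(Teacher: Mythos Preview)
Your proposal is correct and follows the same inductive strategy as the paper, with a slightly different decomposition. The paper introduces the auxiliary $(N-1)$-agent empirical measures
\[
\te_{t+1}^{(N-1)} \coloneqq \frac{1}{N-1}\sum_{i=2}^N \delta_{\tx_i^N(t+1)}, \qquad \te_{t}^{(N-1)} \coloneqq \frac{1}{N-1}\sum_{i=2}^N \delta_{\tx_i^N(t)},
\]
and splits $|\te_{t+1}^{(N)}(g)-\mu_{t+1}(g)|$ into four pieces: two $O(1/N)$ terms comparing $\te^{(N)}$ with $\te^{(N-1)}$, one fluctuation term $|\te_{t+1}^{(N-1)}(g)-\te_t^{(N-1)}P^\pi_{t,\te_t^{(N)}}(g)|$ to which Lemma~\ref{auxlemma} applies verbatim (since agents $2,\ldots,N$ all use the same kernel $P^\pi$), and the continuous-map term. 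Your three-term decomposition is more direct: you leave the deviating agent inside the empirical measure, center by the full conditional mean, and bound the discrepancy between that conditional mean and $\te_t^{(N)}P^\pi_{t,\te_t^{(N)}}(g)$ by $\tfrac{2\alpha\|g\|_v}{N}v(\tx_1^N(t))$. The only cost is that Lemma~\ref{auxlemma}, as stated, assumes a single transition kernel $P$, while your fluctuation term has one heterogeneous summand; but as you correctly note, the proof of that lemma uses only conditional independence, so the extension is immediate. Both routes rest on the same three ingredients---Assumption~2-(i) for the second-moment bound, Lemma~\ref{lemma5} with the inductive hypothesis for the continuous-map term, and the policy-uniform growth estimate of Assumption~1-(c) for the $O(1/N)$ correction---and the paper's detour through $\te^{(N-1)}$ buys nothing beyond a literal match with the hypotheses of Lemma~\ref{auxlemma}.
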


\begin{proof}
We follow the same technique that was employed in the proof of Proposition~\ref{prop5}. Namely, we prove that
\begin{align}
\lim_{N\rightarrow\infty} E\bigl[|\te_t^{(N)}(f) - \mu_t(f)|\bigr] = 0 \nonumber
\end{align}
for any $f \in C_v(\sX)$ by induction on $t$.

Since $\bigl\{\tx_i^N(0)\bigr\}_{1\leq i \leq N} \sim \prod_{i=1}^N \mu_0$, the claim holds for $t=0$. Suppose the claim holds for $t$ and consider $t+1$. Choose any $g \in C_v(\sX)$ and write
\begin{align}
|\te_{t+1}^{(N)}(g) - \mu_{t+1}(g)| &\leq |\te_{t+1}^{(N)}(g) - \te_{t+1}^{(N-1)}(g)| \nonumber \\
&+ |\te_{t+1}^{(N-1)}(g) - \te_{t}^{(N-1)}P^{\pi}_{t,\te_t^{(N)}}(g)| \nonumber \\ &+|\te_{t}^{(N-1)}P^{\pi}_{t,\te_t^{(N)}}(g)-\te_{t}^{(N)}P^{\pi}_{t,\te_t^{(N)}}(g)| \nonumber \\
&+ |\te_{t}^{(N)}P^{\pi}_{t,\te_t^{(N)}}(g) - \mu_t P^{\pi}_{t,\mu_t}(g)|, \label{eq10}
\end{align}
where $\te_{t+1}^{(N-1)}$ and $\te_{t}^{(N-1)} $ are given by
\begin{align}
\te_{t+1}^{(N-1)} &\coloneqq \frac{1}{N-1} \sum_{i=2}^N \delta_{\tx_i^{(N)}(t+1)} \nonumber \\
\intertext{and}
\te_{t}^{(N-1)} &\coloneqq \frac{1}{N-1} \sum_{i=2}^N \delta_{\tx_i^{(N)}(t)}. \nonumber
\end{align}
In the remainder of the proof, we show that expectation of each term in (\ref{eq10}) converges to zero as $N\rightarrow\infty$, which will complete the proof as $g$ is arbitrary.

First, let us consider the first term in (\ref{eq10}). We have
\begin{align}
|\te_{t+1}^{(N)}(g) - \te_{t+1}^{(N-1)}(g)| &\leq \biggl| \frac{1}{N} \sum_{i=1}^N g(\tx_i^N(t+1)) -  \frac{1}{N} \sum_{i=2}^N g(\tx_i^N(t+1)) \biggr| \nonumber \\
&\phantom{xxxxxxx}\vee \biggl| \frac{1}{N} \sum_{i=1}^N g(\tx_i^N(t+1)) -  \frac{1}{N-1} \sum_{i=1}^N g(\tx_i^N(t+1)) \biggr| \nonumber \end{align}
When we take the expectation of above term, we obtain
\begin{align}
E\bigl[ |\te_{t+1}^{(N)}(g) &- \te_{t+1}^{(N-1)}(g)| \bigr] \nonumber \\
&\leq \frac{1}{N} E\bigl[ |g(\tx_1^N(t+1))| \bigr] + \biggl| \frac{1}{N} - \frac{1}{N-1} \biggr| \sum_{i=1}^N E\bigl[ |g(\tx_i^N(t+1)) |\bigr] \nonumber \\
&\leq \frac{\|g\|_v}{N} \alpha^{t+1} M + \biggl|1- \frac{N}{N-1}\biggr| \|g\|_v \alpha^{t+1} M. \nonumber
\end{align}
Note that the last expression goes to zero as $N\rightarrow\infty$.

We can write the expectation of the second term in (\ref{eq10}) as
\begin{align}
E\biggl[ E\biggl[ |\te_{t+1}^{(N-1)}(g) - \te_{t}^{(N-1)}P^{\pi}_{t,\te_t^{(N)}}(g)| \biggr| \tx_1^N(t),\ldots,\tx_N^N(t) \biggr] \biggr]. \nonumber %\label{eq11}
\end{align}
Using Lemma~\ref{auxlemma} in Appendix~\ref{newapp1}, we can prove that this term converges to zero as in the proof of Proposition~\ref{prop5}.

For the expectation of the third term in (\ref{eq10}), we have
\begin{align}
E\bigl[ |\te_{t}^{(N-1)}P^{\pi}_{t,\te_t^{(N)}}(g)&-\te_{t}^{(N)}P^{\pi}_{t,\te_t^{(N)}}(g)| \bigr] \nonumber \\ &= E\biggl[ \biggl| \int_{\sX} \int_{\sX} g(y) P^{\pi}_{t,\te_t^{(N)}}(dy|x) \te_{t}^{(N-1)}(dx) \nonumber \\
&\phantom{xxxxxxxxxxx}- \int_{\sX} \int_{\sX} g(y) P^{\pi}_{t,\te_t^{(N)}}(dy|x) \te_{t}^{(N)}(dx) \biggr| \biggr] \nonumber \\
&\leq \frac{1}{N} E\biggl[ \biggl| \int_{\sX} g(y) P^{\pi}_{t,\te_t^{(N)}}(dy|\tx_1^N(t)) \biggr| \biggr]   \nonumber \\
&\phantom{xxxxxxxxxxx}+ \biggl| \frac{1}{N} -  \frac{1}{N-1} \biggr| \sum_{i=1}^N E\biggl[ \biggl|\int_{\sX} g(y) P^{\pi}_{t,\te_t^{(N)}}(dy|\tx_i^N(t)) \biggr| \biggr] \nonumber \\
&\leq \frac{\|g\|_v \alpha}{N} E\bigl[ v(\tx_1^N(t)) \bigr] + \biggl|1- \frac{N}{N-1}\biggr| \|g\|_v \alpha \sum_{i=1}^N E\bigl[  v(\tx_i^N(t)) \bigr] \nonumber \\
&\leq \frac{\|g\|_v}{N} \alpha^{t+1} M + \biggl|1- \frac{N}{N-1}\biggr| \|g\|_v \alpha^{t+1} M. \nonumber
\end{align}
Again, the last expression goes to zero as $N\rightarrow\infty$.

For the last term in (\ref{eq10}), let us define $F: \P_v(\sX) \rightarrow \R$ as $F(\mu) \coloneqq \mu P_{t,\mu}^{\pi}(g)$. Since $\pi_t$ is weakly continuous, it can be proven as in the proof of Proposition~\ref{prop5} that $F \in C_V(\P_v(\sX))$. Then, by Lemma~\ref{lemma5}, the expectation of the last term converges to zero since ${\cal L}(\te_t^{(N)}) \rightarrow \delta_{\mu_t}$ with respect to $W_1$ distance in $\P_1\bigl(\P_v(\sX)\bigr)$ by induction hypothesis. This completes the proof.
\end{proof}

\noindent For each $N\geq1$, let $\{\hx^N(t)\}_{t\geq0}$ denote the states of the non-homogenous Markov chain that evolves as follows:
\begin{align}
\hx^N(0) \sim \mu_0 \text{ and } \hx^N(t+1) \sim P^{\tpi^{(N)}}_{t,\mu_t}(\,\cdot\,|\hx^N(t)). \nonumber
\end{align}
Note that
\begin{align}
J_{\bmu}(\tpi^{(N)}) = \sum_{t=0}^{\infty} \beta^t E\bigl[ c(\hx^N(t),\hat{a}^N(t),\mu_t)\bigr]. \nonumber
\end{align}

\begin{lemma}\label{lemma6}
Let ${\cal F}$ be a family of equicontinuous functions on $\P_v(\sX)$ with respect to $v$-topology. For any $t\geq0$, define $F_t: \P_v(\sX) \rightarrow [0,+\infty]$ as:
\begin{align}
F_t(\mu) \coloneqq \sup_{\Psi \in {\cal F}} |\Psi(\mu) - \Psi(\mu_t)|. \nonumber
\end{align}
Suppose that $F_t$ is real valued and $\|F_t\|^*_v < \infty$. Then $F_t \in \C_v(\P_v(\sX))$. Therefore,
\begin{align}
\lim_{N\rightarrow\infty} E\biggl[ \sup_{\Psi \in {\cal F}} \bigl| \Psi(\te_t^{(N)}) - \Psi(\mu_t) \bigr| \biggr] = 0. \nonumber
\end{align}
\end{lemma}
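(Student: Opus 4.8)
The plan is to separate the statement into the two assertions it bundles together: first that $F_t \in C_v(\P_v(\sX))$, and second that $E[F_t(\te_t^{(N)})] \to 0$. Since the hypothesis already supplies $\|F_t\|^*_v < \infty$ and the realness of $F_t$ (so the pointwise suprema are finite), the only thing missing for the first assertion is the continuity of $F_t$ with respect to the $v$-topology. The second assertion will then follow by recognizing that $E[\sup_{\Psi \in \F}|\Psi(\te_t^{(N)}) - \Psi(\mu_t)|]$ is exactly $E[F_t(\te_t^{(N)})]$ and feeding $F_t$ into the machinery already developed for $W_1$-convergence of $\L(\te_t^{(N)})$.

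For continuity, I would fix $\mu \in \P_v(\sX)$, take a sequence $\mu_n \to \mu$ in the $v$-topology, and control $|F_t(\mu_n) - F_t(\mu)|$. Applying the elementary bound $|\sup_\Psi a_\Psi - \sup_\Psi b_\Psi| \le \sup_\Psi |a_\Psi - b_\Psi|$ with $a_\Psi = |\Psi(\mu_n) - \Psi(\mu_t)|$ and $b_\Psi = |\Psi(\mu) - \Psi(\mu_t)|$, followed by the reverse triangle inequality, gives
\[
|F_t(\mu_n) - F_t(\mu)| \le \sup_{\Psi \in \F} |\Psi(\mu_n) - \Psi(\mu)|.
\]
This is exactly the quantity that equicontinuity of $\F$ controls: for every $\eps > 0$ there is a $v$-neighborhood of $\mu$ on which $\sup_{\Psi \in \F}|\Psi(\,\cdot\,) - \Psi(\mu)| < \eps$, and $\mu_n$ eventually lies in that neighborhood, so the right-hand side tends to $0$. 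Hence $F_t$ is continuous, and together with the assumed $\|F_t\|^*_v < \infty$ this places $F_t$ in $C_v(\P_v(\sX))$.

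For the second assertion I would invoke Proposition~\ref{prop8} to obtain $W_1\bigl({\cal L}(\te_t^{(N)}),\delta_{\mu_t}\bigr) \to 0$ in $\P_1(\P_v(\sX))$, and then apply the implication $(i)\Rightarrow(ii)$ of Lemma~\ref{lemma5} with $F = F_t$, the laws $\Phi_N = {\cal L}(\te_t^{(N)})$, the random elements $X_N = \te_t^{(N)}$, and $X = \mu_t$ (whose law is $\delta_{\mu_t}$). This yields $E\bigl[|F_t(\te_t^{(N)}) - F_t(\mu_t)|\bigr] \to 0$. Since $F_t(\mu_t) = \sup_{\Psi \in \F}|\Psi(\mu_t) - \Psi(\mu_t)| = 0$ and $F_t \ge 0$, the above reads $E[F_t(\te_t^{(N)})] = E\bigl[\sup_{\Psi \in \F}|\Psi(\te_t^{(N)}) - \Psi(\mu_t)|\bigr] \to 0$, which is the claim.

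The main obstacle is the continuity step, since everything rests on passing the supremum over the (possibly infinite) family $\F$ through the limit $\mu_n \to \mu$; equicontinuity is precisely the hypothesis that makes this legitimate, and without it $F_t$ need not be continuous. The remaining hypotheses play clearly delineated roles: realness of $F_t$ keeps the pointwise suprema finite, and $\|F_t\|^*_v < \infty$ is exactly what is required to land $F_t$ in $C_v(\P_v(\sX))$ so that Lemma~\ref{lemma5} becomes applicable. Once $F_t \in C_v(\P_v(\sX))$ is secured, the convergence of expectations is a direct consequence of the previously established results and requires no further work.
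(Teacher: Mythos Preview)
Your proposal is correct and follows essentially the same approach as the paper: the continuity argument via the inequality $|F_t(\mu_n)-F_t(\mu)|\le\sup_{\Psi\in\F}|\Psi(\mu_n)-\Psi(\mu)|$ together with equicontinuity is exactly what the paper does, and the paper then stops with ``this completes the proof,'' leaving the second assertion implicit. Your explicit appeal to Proposition~\ref{prop8} and Lemma~\ref{lemma5} for the limit $E[F_t(\te_t^{(N)})]\to 0$ simply spells out what the paper takes as understood.
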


\begin{proof}
It is sufficient to prove that $F_t$ is continuous with respect to $v$-topology. Let $\mu_n \rightarrow \mu$ in $v$-topology. Then we have
\begin{align}
|F_t(\mu_n) - F_t(\mu)| &= \bigl| \sup_{\Psi \in {\cal F}} |\Psi(\mu_n) - \Psi(\mu_t)| - \sup_{\Psi \in {\cal F}} |\Psi(\mu) - \Psi(\mu_t)| \bigr| \nonumber \\
&\leq \sup_{\Psi \in {\cal F}} |\Psi(\mu_n)-\Psi(\mu)| \rightarrow 0 \nonumber
\end{align}
as $N\rightarrow\infty$ since ${\cal F}$ is equicontinuous. This completes the proof.
\end{proof}

Using Lemma~\ref{lemma6} we now prove the following result.

\begin{lemma}\label{lemma7}
Fix any $t\geq0$. Suppose that
\begin{align}
\lim_{N\rightarrow\infty}\bigl| {\cal L}(\tx_1^N(t))(g_N) - {\cal L}(\hx_1^N(t))(g_N) \bigr| = 0 \nonumber
\end{align}
for any sequence $\{g_N\}_{N\geq1} \subset C_v(\sX)$ with $\sup_{N\geq1} \|g_N\|_v < \infty$. Then, we have
\begin{align}
\lim_{N\rightarrow\infty} \bigl| {\cal L}(\tx_1^N(t),\te_t^{(N)})(T_N) - {\cal L}(\hx_1^N(t),\delta_{\mu_t})(T_N) \bigr| = 0 \nonumber
\end{align}
for any sequence $\{T_N\}$ of functions on $\sX \times \P_v(\sX)$ satisfying:
\begin{itemize}
\item [(i)] The family $\bigl\{T_N(x,\,\cdot\,): x \in \sX, N\geq1)\bigr\}$ is equicontinuous with respect to $v$-topology.
\item [(ii)] $T_N(\,\cdot\,,\mu) \in C_v(\sX)$ for all $\mu \in \P_v(\sX)$ and $N\geq1$.
\item [(iii)] $\sup_{N\geq1} \|T_N(\,\cdot\,,\mu)\|_v < \infty$ for all $\mu \in \P_v(\sX)$.
\item [(iv)] The function $F_t(\mu) \coloneqq \sup_{\substack{x \in \sX \\ N\geq1}} |T_N(x,\mu) - T_N(x,\mu_t)|$ on $\P_v(\sX)$ is real valued and $\|F_t\|^*_v < \infty$.
\end{itemize}
\end{lemma}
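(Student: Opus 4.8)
The plan is to interpolate between the two joint laws by replacing the random empirical measure $\te_t^{(N)}$ with the deterministic mean-field flow $\mu_t$ in the second argument of $T_N$, and then to control the two resulting discrepancies separately — the first via Lemma~\ref{lemma6} together with Proposition~\ref{prop8}, and the second via the standing hypothesis of the lemma. Concretely, since the second marginal of ${\cal L}(\hx_1^N(t),\delta_{\mu_t})$ is the constant $\mu_t$, we have ${\cal L}(\tx_1^N(t),\te_t^{(N)})(T_N) = E[T_N(\tx_1^N(t),\te_t^{(N)})]$ and ${\cal L}(\hx_1^N(t),\delta_{\mu_t})(T_N) = E[T_N(\hx_1^N(t),\mu_t)]$. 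Inserting the intermediate quantity $E[T_N(\tx_1^N(t),\mu_t)]$, I would bound the target by the triangle inequality:
$$\bigl| E[T_N(\tx_1^N(t),\te_t^{(N)})] - E[T_N(\tx_1^N(t),\mu_t)] \bigr| + \bigl| E[T_N(\tx_1^N(t),\mu_t)] - E[T_N(\hx_1^N(t),\mu_t)] \bigr|.$$

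First I would handle the first bracket. For every $x \in \sX$ and every index, assumption (iv) gives the pointwise estimate $|T_N(x,\te_t^{(N)}) - T_N(x,\mu_t)| \le F_t(\te_t^{(N)})$, uniformly in $x$ and $N$; taking expectations, the first bracket is at most $E[F_t(\te_t^{(N)})]$. Now the family $\{T_N(x,\,\cdot\,): x \in \sX,\, N \ge 1\}$ is equicontinuous on $\P_v(\sX)$ by (i), and $F_t$ is precisely the supremum figuring in Lemma~\ref{lemma6} for this family, with (iv) supplying that $F_t$ is real-valued and $\|F_t\|_v^* < \infty$. Lemma~\ref{lemma6} then yields $F_t \in C_v(\P_v(\sX))$ and, invoking Proposition~\ref{prop8} (i.e.\ $W_1({\cal L}(\te_t^{(N)}),\delta_{\mu_t}) \to 0$) through Lemma~\ref{lemma5}, that $E[F_t(\te_t^{(N)})] = E[|F_t(\te_t^{(N)}) - F_t(\mu_t)|] \to 0$, where I used $F_t(\mu_t) = 0$. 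Hence the first bracket vanishes in the limit.

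For the second bracket I would set $g_N \coloneqq T_N(\,\cdot\,,\mu_t)$. By (ii) each $g_N$ lies in $C_v(\sX)$, and by (iii) we have $\sup_{N\ge1} \|g_N\|_v < \infty$, so $\{g_N\}$ is an admissible test sequence for the hypothesis of the lemma. Since $E[T_N(\tx_1^N(t),\mu_t)] = {\cal L}(\tx_1^N(t))(g_N)$ and $E[T_N(\hx_1^N(t),\mu_t)] = {\cal L}(\hx_1^N(t))(g_N)$, the second bracket equals $|{\cal L}(\tx_1^N(t))(g_N) - {\cal L}(\hx_1^N(t))(g_N)|$, which tends to $0$ by hypothesis. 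Combining the two estimates completes the proof.

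I expect the only delicate point to be the first bracket — specifically, passing from the uniform-in-$x$ bound to the single scalar random variable $F_t(\te_t^{(N)})$ and then legitimately invoking Lemma~\ref{lemma6} for the equicontinuous family indexed by both $x$ and $N$. Everything else is a clean bookkeeping application of (ii)--(iii) and the standing hypothesis. It is worth recording that $\mu_t \in \P_v^t(\sX) \subset \P_v(\sX)$, since it is part of the mean-field equilibrium flow, so that the evaluations $T_N(\,\cdot\,,\mu_t)$ and $F_t(\mu_t)$ are well defined.
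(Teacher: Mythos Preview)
Your proposal is correct and follows essentially the same decomposition as the paper: split via the intermediate quantity $E[T_N(\tx_1^N(t),\mu_t)]$, handle the measure-argument discrepancy by bounding with $F_t(\te_t^{(N)})$ and invoking Lemma~\ref{lemma6} (together with Proposition~\ref{prop8}), and handle the state-argument discrepancy by applying the standing hypothesis to $g_N=T_N(\cdot,\mu_t)$ using (ii)--(iii). The paper writes the first step through a conditioning on $\tx_1^N(t)$ before taking the supremum over the equicontinuous family, but this is only a cosmetic difference from your direct pointwise bound $|T_N(x,\te_t^{(N)})-T_N(x,\mu_t)|\le F_t(\te_t^{(N)})$.
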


\begin{proof}
Fix any sequence $\{T_N\}_{N\geq1}$ satisfying the hypothesis of the lemma. Then, we have
\begin{align}
&\bigl| {\cal L}(\tx_1^N(t),\te_t^{(N)})(T_N) - {\cal L}(\hx_1^N(t),\delta_{\mu_t})(T_N) \bigr| \nonumber \\
&\phantom{xxxx}\leq \biggl| \int_{\sX \times \P_v(\sX)} T_N(x,\mu) {\cal L}(\tx_1^N(t),\te_t^{(N)})(dx,d\mu) \nonumber \\
&\phantom{xxxxxxxxxxxxx}- \int_{\sX \times \P_v(\sX)} T_N(x,\mu) {\cal L}(\tx_1^N(t),\delta_{\mu_t})(dx,d\mu) \biggr| \nonumber \\
&\phantom{xxxxxxxx}+ \biggl| \int_{\sX \times \P_v(\sX)} T_N(x,\mu) {\cal L}(\tx_1^N(t),\delta_{\mu_t})(dx,d\mu) \nonumber \\
&\phantom{xxxxxxxxxxxxx}- \int_{\sX \times \P_v(\sX)} T_N(x,\mu) {\cal L}(\hx_1^N(t),\delta_{\mu_t})(dx,d\mu) \biggr|. \label{eq12}
\end{align}
First, let us consider the second term in (\ref{eq12}). We have
\begin{align}
\lim_{N\rightarrow\infty} \biggl| \int_{\sX} T_N(x,\mu_t) {\cal L}(\tx_1^N(t))(dx) - \int_{\sX} T_N(x,\mu_t) {\cal L}(\hx_1^N(t))(dx) \biggr|=0 \nonumber
\end{align}
as $\{T_N(\,\cdot\,,\mu_t)\}_{N\geq1} \subset C_v(\sX)$ with $\sup_{N\geq1} \|T_N(\,\cdot\,,\mu_t)\|_v < \infty$.

Now, consider the first term in (\ref{eq12}). To this end, let us define ${\cal F} \coloneqq \bigl\{T_N(x,\,\cdot\,): x \in \sX, N\geq1)\bigr\}$. Then, we have
\begin{align}
&\lim_{N\rightarrow\infty}\biggl| \int_{\sX \times \P_v(\sX)} T_N(x,\mu) {\cal L}(\tx_1^N(t),\te_t^{(N)})(dx,d\mu) \nonumber \\
&\phantom{xxxxxxxxxxxxxxx}- \int_{\sX \times \P_v(\sX)} T_N(x,\mu) {\cal L}(\tx_1^N(t),\delta_{\mu_t})(dx,d\mu) \biggr| \nonumber \\
&\leq \lim_{N\rightarrow\infty} \int_{\sX} \biggl| \int_{\P_v(\sX)} T_N(x,\mu) {\cal L}(\te_t^{(N)}|\tx_1^N(t))(d\mu|dx)  \nonumber \\
&\phantom{xxxxxxxxxxxxxxx}- \int_{\P_v(\sX)} T_N(x,\mu) {\cal L}(\delta_{\mu_t})(d\mu) \biggr| {\cal L}(\tx_1^N(t))(dx) \nonumber \\
&\leq \lim_{N\rightarrow\infty} E\biggl[ E\biggl[ \bigl|T_N(\tx_1^N(t),\te_t^{(N)}) - T_N(\tx_1^N(t),\mu_t) \bigl| \biggl| \tx_1^N(t) \biggr] \biggr] \nonumber \\
&\leq \lim_{N\rightarrow\infty} E\biggl[ \sup_{\Psi \in {\cal F}} \bigl|\Psi(\te_t^{(N)}) - \Psi(\mu_t) \bigl| \biggr] \nonumber \\
&= 0 \text{  } (\text{by Lemma~\ref{lemma6}}) \nonumber.
\end{align}
This completes the proof.
\end{proof}

For any sequence $\{g_N\}_{N\geq1} \subset C_v(\sX)$ with $\sup_{N\geq1} \|g_N\|_v \eqqcolon L <\infty$ and for any $t\geq0$, let us define
\begin{align}
l_{N,t}(x,\mu) \coloneqq \int_{\sA \times \sX} g_N(y) p(dy|x,a,\mu) \tpi_t^{(N)}(da|x). \nonumber
\end{align}
Note that for any $(\mu,\nu) \in \P_v(\sX)^2$, we have
\begin{align}
\sup_{\substack{x \in \sX \\ N\geq1}}|l_{N,t}(x,\mu) - l_{N,t}(x,\nu)| &\leq L \sup_{x \in \sX } \int_{\sA} \|p(\,\cdot\,|x,a,\mu) - p(\,\cdot\,|x,a,\nu)\|_{v} \tpi_t^{(N)}(da|x) \nonumber \\
&\leq L \omega_p(\tilde{\rho}_v(\mu,\nu)). \nonumber
\end{align}
Since $\omega_p(r) \rightarrow0$ as $r\rightarrow0$ by Assumption~2-(h), the family $\{l_{N,t}(x,\,\cdot\,): x \in \sX, N\geq1\}$ is equicontinuous. Moreover, the function
\begin{align}
F_t(\mu) \coloneqq \sup_{\substack{x \in \sX \\ N\geq1}} |l_{N,t}(x,\mu) - l_{N,t}(x,\mu_t)| \leq L \omega_p(\tilde{\rho}_v(\mu,\mu_t)) \nonumber
\end{align}
is real-valued and $\|F_t\|^*_v<\infty$ since $\|\omega_p(\tilde{\rho}_v(\,\cdot\,,\mu_t))\|^*_v<\infty$ by again Assumption~2-(h). Note also that $l_{N,t}(\,\cdot\,,\mu) \in C_v(\sX)$ for all $\mu \in \P_v(\sX)$ and $N\geq1$, and $\sup_{N\geq1} \|l_{N,t}(\,\cdot\,,\mu)\|_v<\infty$ for all $\mu \in \P_v(\sX)$.

Indeed, we have $|l_{N,t}(x,\mu)| \leq L \alpha v(x)$, and so, $l_{N,t}(\,\cdot\,,\mu) \in B_v(\sX)$ for all $\mu \in \P_v(\sX)$. Furthermore, if $x_n \rightarrow x$ in $\sX$, then since
\begin{align}
r_n(a) &\coloneqq \int_{\sX} g_N(y) p(dy|x_n,a,\mu) \nonumber \\
\intertext{continuously converges to}
r(a) &\coloneqq \int_{\sX} g_N(y) p(dy|x,a,\mu) \nonumber
\end{align}
and $\|r_n\| \leq L v(x_n) \leq L \sup_{n\geq1} v(x_n) < \infty$, we have $\lim_{n\rightarrow\infty} l_{N,t}(x_n,\mu) = l_{N,t}(x,\mu)$ by \cite[Theorem 3.3]{Ser82} as $\tpi_t^{(N)}$ is assumed to be weakly continuous. Hence, $l_{N,t}(\,\cdot\,,\mu) \in C_v(\sX)$. We also have
\begin{align}
\sup_{N\geq1} \sup_{x \in \sX} \frac{|l_{N,t}(x,\mu)|}{v(x)} &= \sup_{N\geq1} \sup_{x \in \sX} \frac{\bigl|\int_{\sA \times \sX}g_N(y) p(dy|x,a,\mu) \tpi_t^{(N)}(da|x)\bigr|}{v(x)} \nonumber \\
&\leq L \sup_{x \in \sX} \frac{\int_{\sA \times \sX} v(y) p(dy|x,a,\mu) \tpi_t^{(N)}(da|x)}{v(x)} \nonumber \\
&\leq L \alpha <\infty. \nonumber
\end{align}
Hence, $\sup_{N\geq1} \|l_{N,t}(\,\cdot\,,\mu)\|_v<\infty$. Using these observations, we now prove the following result.

\begin{proposition}\label{prop9}
For any sequence $\{g_N\}_{N\geq1} \subset C_v(\sX)$ with $\sup_{N\geq1} \|g_N\|_v <\infty$ and for any $t\geq0$, we have
\begin{align}
\lim_{N\rightarrow\infty} \bigl| {\cal L}(\tx_1^N(t))(g_N) - {\cal L}(\hx_1^N(t))(g_N) \bigr| = 0. \nonumber
\end{align}
Therefore, by Lemma~\ref{lemma7}, for any $t\geq0$, we have
\begin{align}
\lim_{N\rightarrow\infty} \bigl| {\cal L}(\tx_1^N(t),\te_t^{(N)})(T_N) - {\cal L}(\hx_1^N(t),\delta_{\mu_t})(T_N) \bigr| = 0 \nonumber
\end{align}
for any sequence $\{T_N\}$ of functions on $\sX \times \P_v(\sX)$ satisfying hypothesis (i)-(iv) in Lemma~\ref{lemma7}.
\end{proposition}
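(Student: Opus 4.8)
The plan is to prove the first identity by induction on $t$; once it holds for all $t$, the second assertion drops out as a single application of Lemma~\ref{lemma7}. Throughout I would fix a sequence $\{g_N\}\subset C_v(\sX)$ with $L \coloneqq \sup_{N}\|g_N\|_v < \infty$ and work with the kernels $l_{N,t}(x,\mu)=\int_{\sA\times\sX} g_N(y)\,p(dy|x,a,\mu)\,\tpi_t^{(N)}(da|x)$ introduced just above the proposition, which were already shown there to satisfy hypotheses (i)--(iv) of Lemma~\ref{lemma7}. The base case $t=0$ is immediate: both $\tx_1^N(0)$ and $\hx_1^N(0)$ have law $\mu_0$, so ${\cal L}(\tx_1^N(0))={\cal L}(\hx_1^N(0))$ and the difference vanishes for every $N$, each $g_N$ being $\mu_0$-integrable in view of $|g_N|\le L\,v$ and Assumption~1-(e).

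For the inductive step I would assume the first claim at time $t$ (for \emph{every} sequence of uniformly $v$-bounded test functions) and derive it at time $t+1$. The crucial ingredient is a one-step Markov identity. Using the factorized transition law \eqref{eq:state_spec}, the next state $\tx_1^N(t+1)$ of Agent~1 depends on the time-$t$ configuration only through $(\tx_1^N(t),a_1^N(t),\te_t^{(N)})$; since Agent~1 follows the Markov policy $\tpi^{(N)}$, integrating out $a_1^N(t)\sim\tpi_t^{(N)}(\,\cdot\,|\tx_1^N(t))$ shows that the conditional law of $\tx_1^N(t+1)$ given the whole configuration is $P^{\tpi^{(N)}}_{t,\te_t^{(N)}}(\,\cdot\,|\tx_1^N(t))$. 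Taking the conditional expectation of $g_N$ and then integrating, I would obtain
\begin{align}
{\cal L}(\tx_1^N(t+1))(g_N) = E\bigl[ l_{N,t}(\tx_1^N(t),\te_t^{(N)})\bigr] = {\cal L}\bigl(\tx_1^N(t),\te_t^{(N)}\bigr)(l_{N,t}). \nonumber
\end{align}
The chain $\{\hx_1^N(t)\}$ evolves with the deterministic flow $\mu_t$, so the same computation yields
\begin{align}
{\cal L}(\hx_1^N(t+1))(g_N) = E\bigl[ l_{N,t}(\hx_1^N(t),\mu_t)\bigr] = {\cal L}\bigl(\hx_1^N(t),\delta_{\mu_t}\bigr)(l_{N,t}), \nonumber
\end{align}
the second coordinate being almost surely equal to $\mu_t$.

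It then remains only to compare the two right-hand sides, and here I would invoke Lemma~\ref{lemma7} directly with $T_N=l_{N,t}$: its hypotheses (i)--(iv) hold for $\{l_{N,t}\}$ by the verification preceding the proposition, and its standing assumption is \emph{exactly} the induction hypothesis at time $t$, so the lemma gives $\lim_{N\to\infty}\bigl|{\cal L}(\tx_1^N(t),\te_t^{(N)})(l_{N,t})-{\cal L}(\hx_1^N(t),\delta_{\mu_t})(l_{N,t})\bigr|=0$. Combined with the two displayed identities, this produces the claim at time $t+1$, closing the induction; the second assertion of the proposition is then a final, direct application of Lemma~\ref{lemma7} with the now-established first claim.

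I expect the one-step Markov identity to be the only genuinely delicate point. The care required is in the information bookkeeping: one must use that $\te_t^{(N)}$ is measurable with respect to $\sigma(\tx_1^N(t),\ldots,\tx_N^N(t))$ and that, by the product structure of \eqref{eq:state_spec}, Agent~1's transition depends on the remaining agents only through the empirical measure, so that after averaging over the action the conditional law reduces cleanly to $P^{\tpi^{(N)}}_{t,\te_t^{(N)}}(\,\cdot\,|\tx_1^N(t))$ and its $g_N$-integral is precisely $l_{N,t}(\tx_1^N(t),\te_t^{(N)})$. Everything downstream of this identity is a mechanical appeal to Lemma~\ref{lemma7} and the induction hypothesis.
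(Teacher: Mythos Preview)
Your proposal is correct and follows essentially the same route as the paper: induction on $t$ with the trivial base case ${\cal L}(\tx_1^N(0))={\cal L}(\hx_1^N(0))=\mu_0$, and in the inductive step the one-step identity ${\cal L}(\tx_1^N(t+1))(g_N)={\cal L}(\tx_1^N(t),\te_t^{(N)})(l_{N,t})$ (and its $\hx$-analogue) followed by an appeal to Lemma~\ref{lemma7} with $T_N=l_{N,t}$, whose hypotheses were verified just before the proposition. Your write-up is in fact more explicit than the paper's about the conditional-law computation underlying that one-step identity.
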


\begin{proof}
We prove the result by induction on $t$. The claim trivially holds for $t=0$ as ${\cal L}(\tx_1^N(0)) = {\cal L}(\hx_1^N(0)) = \mu_0$ for all $N\geq1$. Suppose the claim holds for $t$ and consider $t+1$. We can write
\begin{align}
\bigl| {\cal L}(\tx_1^N(t+1))(g_N) - {\cal L}(\hx_1^N(t+1))(g_N) \bigr| &=\biggl| \int_{\sX \times \P_v(\sX)} l_{N,t}(x,\mu) {\cal L}(\tx_1^N(t),\te_t^{(N)})(dx,d\mu) \nonumber \\
&\phantom{}- \int_{\sX \times \P_v(\sX)} l_{N,t}(x,\mu) {\cal L}(\hx_1^N(t),\delta_{\mu_t})(dx,d\mu) \biggr|. \nonumber
\end{align}
Since the family $\{l_{N,t}\}_{N\geq1}$ satisfies the hypothesis of Lemma~\ref{lemma7} as we showed above, the last term converges to zero as $N\rightarrow\infty$. This completes the proof.
\end{proof}

Recall that we have
\begin{align}
J_{\bmu}(\tpi^{(N)}) = \sum_{t=0}^{\infty} \beta^t E\bigl[ c(\hx^N(t),\hat{a}^N(t),\mu_t) \bigr]. \nonumber
\end{align}

\begin{theorem}\label{theorem3}
Let $\{\tpi^{(N)}\}_{N\geq1} \subset \sM_1^c$ be an arbitrary sequence of policies for Agent~$1$. Then, we have
\begin{align}
\lim_{N \rightarrow \infty} \bigl| J_1^{(N)}(\tpi^{(N)},\pi,\ldots,\pi) - J_{\bmu}(\tpi^{(N)}) \bigr| = 0. \nonumber
\end{align}
\end{theorem}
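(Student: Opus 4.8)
The plan is to reduce both cost streams, term by term, to integrals of a \emph{single} pair-integrand against the two joint laws that Proposition~\ref{prop9} already compares. Because $\tpi^{(N)}$ is Markov, conditioning on the current state and integrating out the action yields, for each $t\geq0$,
\begin{align}
E\bigl[c(\tx_1^N(t),a_1^N(t),\te_t^{(N)})\bigr] &= {\cal L}\bigl(\tx_1^N(t),\te_t^{(N)}\bigr)(T_{N,t}), \nonumber \\
E\bigl[c(\hx^N(t),\hat{a}^N(t),\mu_t)\bigr] &= {\cal L}\bigl(\hx^N(t),\delta_{\mu_t}\bigr)(T_{N,t}), \nonumber
\end{align}
where $T_{N,t}(x,\mu) \coloneqq \int_{\sA} c(x,a,\mu)\,\tpi_t^{(N)}(da|x)$. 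Summing against $\beta^t$ and using the expressions for $J_1^{(N)}$ and $J_{\bmu}$ gives
$$
\bigl|J_1^{(N)}(\tpi^{(N)},\pi,\ldots,\pi) - J_{\bmu}(\tpi^{(N)})\bigr| \leq \sum_{t=0}^{\infty} \beta^t \Bigl| {\cal L}\bigl(\tx_1^N(t),\te_t^{(N)}\bigr)(T_{N,t}) - {\cal L}\bigl(\hx^N(t),\delta_{\mu_t}\bigr)(T_{N,t})\Bigr|,
$$
so it suffices to drive each summand to zero while dominating the series uniformly in $N$.

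First I would fix $t$ and verify that $\{T_{N,t}\}_{N\geq1}$ satisfies hypotheses (i)--(iv) of Lemma~\ref{lemma7}, so that Proposition~\ref{prop9} applies and forces the $t$-th summand to vanish as $N\to\infty$. Equicontinuity (i) is immediate from $|T_{N,t}(x,\mu)-T_{N,t}(x,\nu)|\leq\omega_c(\tilde{\rho}_v(\mu,\nu))$ together with $\omega_c(r)\to0$ (Assumption~2-(h)); taking $\nu=\mu_t$ gives $F_t(\mu)\leq\omega_c(\tilde{\rho}_v(\mu,\mu_t))$, which is real-valued with finite $v$-norm by Assumption~2-(h), establishing (iv). Properties (ii)--(iii), namely $T_{N,t}(\,\cdot\,,\mu)\in C_v(\sX)$ with $N$-uniform $v$-norm, follow exactly as for $c_{\pi_t}$ in the proof of Proposition~\ref{prop6}: weak continuity of $\tpi_t^{(N)}$ and continuity of $c$ yield continuity in $x$ via \cite[Theorem 3.3]{Ser82}, while Assumption~1-(f), applied on the set $\P_v^l(\sX)$ that contains the fixed $\mu$, bounds $c(x,a,\mu)$ by a multiple of $v(x)$ uniformly in $N$.

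The remaining, and in my view principal, difficulty is to interchange $\lim_{N\to\infty}$ with the infinite sum over $t$, since the comparison measure $\te_t^{(N)}$ is random and need not lie in $\P_v^t(\sX)$ pathwise; one must produce an $N$-independent summable dominating sequence. I would reuse the bookkeeping of Proposition~\ref{prop6}. For the deviating stream, the decomposition $c(x,a,\te_t^{(N)})\leq\omega_c(\tilde{\rho}_v(\te_t^{(N)},\mu_0))+c(x,a,\mu_0)\leq\omega_c(\tilde{\rho}_v(\te_t^{(N)},\mu_0))+M_0 v(x)$, combined with $E[\te_t^{(N)}(v)]\leq\alpha^t M$ and $E[v(\tx_1^N(t))]\leq\alpha^t M$ from Lemma~\ref{newlemma3}, bounds the $t$-th term by $\bigl(\|\omega_c(\tilde{\rho}_v(\,\cdot\,,\mu_0))\|^*_v+M_0\bigr)M(\alpha\beta)^t$. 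For the mean-field stream, $\mu_t\in\P_v^t(\sX)$ gives $c(x,a,\mu_t)\leq M_t v(x)$, and since $E[v(\hx^N(t))]\leq\alpha^t M$ (by Assumption~1-(c) applied to $P^{\tpi^{(N)}}_{t,\mu_t}$), the $t$-th term is at most $RM(\alpha\beta\gamma)^t$. Both bounds are independent of $N$ and summable by Assumption~1-(g) (as $\gamma\geq1$ forces $\alpha\beta<1$), so dominated convergence over the index $t$ sends the whole series to zero, completing the proof.
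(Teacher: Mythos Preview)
Your proposal is correct and follows essentially the same route as the paper: define $T_{N,t}(x,\mu)=\int_{\sA} c(x,a,\mu)\,\tpi_t^{(N)}(da|x)$, verify hypotheses (i)--(iv) of Lemma~\ref{lemma7} via $\omega_c$ and Assumption~2-(h), invoke Proposition~\ref{prop9} for termwise convergence, and finish with dominated convergence over $t$. The paper is terser about the domination step (it simply refers back to the bound established at the end of Proposition~\ref{prop6}), whereas you spell out separate $N$-uniform summable majorants for the two cost streams; this is a harmless elaboration of the same argument.
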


\begin{proof}
Fix any $t\geq0$ and define
\begin{align}
T_{N,t}(x,\mu) \coloneqq \int_{\sA} c(x,a,\mu) \tpi_t^{(N)}(da|x). \nonumber
\end{align}
Note that for any $(\mu,\nu) \in \P_v(\sX)^2$, we have
\begin{align}
\sup_{\substack{x \in \sX \\ N\geq1}}\bigl| T_{N,t}(x,\mu) - T_{N,t}(x,\nu) \bigr| &\phantom{}= \sup_{\substack{x \in \sX \\ N\geq1}}\biggl| \int_{\sA} c(x,a,\mu) \tpi_t^{(N)}(da|x) - \int_{\sA} c(x,a,\nu) \tpi_t^{(N)}(da|x) \biggr| \nonumber \\
&\phantom{}\leq \omega_c(\tilde{\rho}_v(\mu,\nu)). \nonumber
\end{align}
Since $\omega_c(r) \rightarrow 0$ as $r\rightarrow0$, the family $\bigl\{T_{N,t}(x,\,\cdot\,): x \in \sX, N\geq1\bigr\}$ is equicontinuous. Moreover, the function
\begin{align}
F_t(\mu) \coloneqq \sup_{\substack{x \in \sX \\ N\geq1}}\bigl| T_{N,t}(x,\mu) - T_{N,t}(x,\mu_t) \bigr| \nonumber \end{align}
is real-valued and $\|F_t\|^*_v<\infty$.

One can also prove that $T_N(\,\cdot\,,\mu) \in C_v(\sX)$ for all $N\geq1$ and $\mu \in \P_v(\sX)$, and $\sup_{N\geq1} \|T_N(\,\cdot\,,\mu)\|_v < \infty$ for all $\mu \in \P_v(\sX)$.
Therefore, by Proposition~\ref{prop9}, we have
\begin{align}
\lim_{N\rightarrow\infty} \biggl| E\bigl[ c(\tx_1^N,\tilde{a}_1^N,\te_t^{(N)}) \bigr] - E\bigl[ c(\hx_1^N,\hat{a}_1^N,\mu_t) \bigr] \biggr|=0. \nonumber
\end{align}
Since $t$ is arbitrary, above result holds for all $t\geq0$. By using the same method as in the last part of Proposition~\ref{prop6}, we can conclude that
\begin{align}
\lim_{N \rightarrow \infty} \bigl| J_1^{(N)}(\tpi^{(N)},\pi,\ldots,\pi) - J_{\bmu}(\tpi^{(N)}) \bigr| = 0. \nonumber
\end{align}
by the dominated convergence theorem.
\end{proof}

As a corollary of Proposition~\ref{prop6} and Theorem~\ref{theorem3}, we obtain the following result.

\begin{corollary}\label{cor1}
We have
\begin{align}
\lim_{N \rightarrow \infty} J_1^{(N)}(\tpi^{(N)},\pi,\ldots,\pi)
&\geq \inf_{\pi' \in \Pi} J_{\bmu}(\pi') \nonumber \\
&= J_{\bmu}(\pi) \nonumber \\
&= \lim_{N \rightarrow \infty} J_1^{(N)}(\pi,\pi,\ldots,\pi). \nonumber
\end{align}
\end{corollary}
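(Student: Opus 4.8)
The plan is to obtain the two equalities on the second and third lines directly from Proposition~\ref{prop6}, and to derive the inequality on the first line by combining Theorem~\ref{theorem3} with the trivial fact that no admissible deviation can beat the optimal mean-field cost. Since everything is a direct chaining of the two quoted results, I expect the argument to be short; the only point needing care is the interpretation of the limit in the first line.

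First I would dispatch the equalities. Applying Proposition~\ref{prop6} to the mean-field equilibrium $(\pi,\bmu)$ yields at once that $\lim_{N\rightarrow\infty} J_1^{(N)}(\pi,\pi,\ldots,\pi) = J_{\bmu}(\pi)$ and that $J_{\bmu}(\pi) = \inf_{\pi' \in \Pi} J_{\bmu}(\pi')$. These are precisely the last two lines of the claimed display, so no additional work is required there.

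For the inequality, I would first observe that each deviation policy $\tpi^{(N)}$ lies in $\sM_1^c \subset \Pi$, hence is an admissible policy for the single-agent mean-field control problem driven by the flow $\bmu$; consequently $J_{\bmu}(\tpi^{(N)}) \geq \inf_{\pi' \in \Pi} J_{\bmu}(\pi')$ for every $N$. Theorem~\ref{theorem3} supplies $\bigl| J_1^{(N)}(\tpi^{(N)},\pi,\ldots,\pi) - J_{\bmu}(\tpi^{(N)}) \bigr| \to 0$, so that
\begin{align}
J_1^{(N)}(\tpi^{(N)},\pi,\ldots,\pi) &\geq J_{\bmu}(\tpi^{(N)}) - \bigl| J_1^{(N)}(\tpi^{(N)},\pi,\ldots,\pi) - J_{\bmu}(\tpi^{(N)}) \bigr| \nonumber \\
&\geq \inf_{\pi' \in \Pi} J_{\bmu}(\pi') - \bigl| J_1^{(N)}(\tpi^{(N)},\pi,\ldots,\pi) - J_{\bmu}(\tpi^{(N)}) \bigr|. \nonumber
\end{align}
Passing to the limit (read as $\liminf$ when the limit is not known a priori to exist) makes the subtracted term vanish and delivers the first line of the display. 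The hard part is essentially nonexistent here: the one subtlety is the limit-versus-liminf reading, which is harmless precisely because the error term from Theorem~\ref{theorem3} tends to zero irrespective of which subsequence of $\{\tpi^{(N)}\}$ one extracts.
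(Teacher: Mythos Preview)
Your proposal is correct and matches the paper's own treatment: the paper states the corollary as an immediate consequence of Proposition~\ref{prop6} and Theorem~\ref{theorem3} without giving any further argument, and your write-up simply spells out the two-line chaining (the equalities from Proposition~\ref{prop6}, the inequality from Theorem~\ref{theorem3} together with $J_{\bmu}(\tpi^{(N)})\geq\inf_{\pi'\in\Pi}J_{\bmu}(\pi')$) that the paper leaves implicit. Your remark about reading the left-hand limit as a $\liminf$ is a helpful clarification that the paper does not bother to make.
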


%The following theorem is the main result of this section (and of the paper), which states that the policy ${\boldsymbol \pi}^{(N)} = (\pi,\ldots,\pi)$, where $\pi$ is repeated $N$ times, which itself is obtained from the mean field equilibrium, is an $\varepsilon$-Markov-Nash equilibrium for sufficiently large $N$.
%
%\begin{theorem}\label{theorem4}
%For any $\varepsilon>0$, there exists $N(\varepsilon)$ such that for $N\geq N(\varepsilon)$, the policy ${\boldsymbol \pi}^{(N)}$ is $\varepsilon$-Markov-Nash equilibrium for the game with $N$ agents.
%\end{theorem}

\noindent Now we are ready to prove the main result of the paper:

\begin{proof}[Proof of Theorem~\ref{theorem4}]
For sufficiently large $N$, we need to prove that
\begin{align}
J_i^{(N)}({\boldsymbol \pi}^{(N)}) &\leq \inf_{\pi^i \in \sM_i^c} J_i^{(N)}({\boldsymbol \pi}^{(N)}_{-i},\pi^i) + \varepsilon \label{eq13}
\end{align}
for each $i=1,\ldots,N$. Since the transition probabilities and the one-stage cost functions are the same for all agents, it is sufficient to prove (\ref{eq13}) for Agent~$1$ only. Given $\epsilon > 0$, for each $N\geq1$, let $\tpi^{(N)} \in \sM_1^c$ be such that
\begin{align}
J_1^{(N)} (\tpi^{(N)},\pi,\ldots,\pi) < \inf_{\pi' \in \sM_1^c} J_1^{(N)} (\pi',\pi,\ldots,\pi) + \frac{\varepsilon}{3}. \nonumber
\end{align}
Then, by Corollary~\ref{cor1}, we have
\begin{align}
\lim_{N\rightarrow\infty} J_1^{(N)} (\tpi^{(N)},\pi,\ldots,\pi) &= \lim_{N\rightarrow\infty} J_{\bmu}(\tpi^{(N)}) \nonumber \\
&\geq \inf_{\pi'} J_{\bmu}(\pi') \nonumber \\
&= J_{\bmu}(\pi) \nonumber \\
&= \lim_{N\rightarrow\infty} J_1^{(N)} (\pi,\pi,\ldots,\pi). \nonumber
\end{align}
Therefore, there exists $N(\varepsilon)$ such that
\begin{align}
\inf_{\pi' \in \sM_1^c} J_1^{(N)} (\pi',\pi,\ldots,\pi) + \varepsilon &> J_1^{(N)} (\tpi^{(N)},\pi,\ldots,\pi) + \frac{2\varepsilon}{3} \nonumber \\
&\geq J_{\bmu}(\pi) + \frac{\varepsilon}{3} \nonumber \\
&\geq J_1^{(N)} (\pi,\pi,\ldots,\pi). \nonumber
\end{align}
for all $N\geq N(\varepsilon)$.
\end{proof}

\section{Example}\label{example}

In this section, we consider an example, the additive noise model, in order to illustrate our results. In this model, the dynamics of a generic agent for $N$-agent game problem are given by
\begin{align}
x_i^N(t+1) = \frac{1}{N} \sum_{j=1}^N f(x_i^N(t),a_i^N(t),x_j^N(t)) + g(x_i^N(t),a_i^N(t)) w_i^N(t), \label{model}
\end{align}
where $x_i^N(t),x_j^N(t) \in \sX$, $a_i^N(t) \in \sA$, and $w_i^N(t) \in \sW$. Here, we assume that $\sX = \sW = \R$, $\sA \subset \R$, and the `noise' $\{w_i^N(t)\}$ is a sequence of i.i.d. random variables with standard normal distribution. Note that we can write (\ref{model}) as
\begin{align}
x_i^N(t+1) &= \int_{\sX} f(x_i^N(t),a_i^N(t),y) \phantom{i} e_t^{(N)}(dy) + g(x_i^N(t),a_i^N(t)) w_i^N(t) \nonumber \\
&= F(x_i^N(t),a_i^N(t),e_t^{(N)}) + g(x_i^N(t),a_i^N(t)) w_i^N(t), \nonumber
\end{align}
where $F(x,a,\mu) \coloneqq \int_{\sX} f(x,a,y) \mu(dy)$. The one-stage cost function of a generic agent is given by
\begin{align}
c(x_i^N(t),a_i^N(t),e_t^{(N)}) &= \frac{1}{N} \sum_{j=1}^N d(x_i^N(t),a_i^N(t),x_j^N(t)) \nonumber \\
&= \int_{\sX} d(x_i^N(t),a_i^N(t),y) \phantom{i} e_t^{(N)}(dy), \nonumber
\end{align}
for some measurable function $d: \sX \times \sA \times \sX \rightarrow [0,\infty)$.

For this model, Assumption~1 holds with $w(x) = v(x) = 1+x^2$ under the following conditions: (i) $\sA$ is compact, (ii) $g$ is continuous, and $f$ is bounded and continuous, (iii) $\sup_{a \in \sA} g^2(x,a) \leq L x^2$ for some $L>0$, (iv) $\sup_{(x,a,y) \in K \times \sA \times \sX} d(x,a,y) < \infty$ for any compact $K \subset \sX$, (v) $d(x,a,y) \leq R w(x) w(y)$ for some $R>0$, (vi) $\omega_d(r) \rightarrow 0$ as $r\rightarrow0$, where
\begin{align}
\omega_d(r) = \sup_{y \in \sX} \sup_{\substack{(x,a),(x',a'):\\ |x-x'|+|a-a'| \leq r}} |d(x,a,y) - d(x,a,y)|, \nonumber
\end{align}
(vii) $\alpha^2 \beta <1$, where $\alpha = \max\{1+\|f\|, L\}$.

Indeed, we have
\begin{align}
\int_{\sX} (1+y^2) p(dy|x,a,\mu) &= \int_{\sX} \biggl( 1+ \bigl[ F(x,a,\mu) + g(x,a) y \bigr]^2 \biggr) q(y) m(dy) \nonumber \\
&\leq 1 + \|f\|^2 + g^2(x,a) \leq \alpha (1+x^2), \nonumber
\end{align}
where $q$ is density of the standard normal distribution and $m$ is the Lebesgue measure. Hence, Assumption~1-(c) holds. In order to verify Assumption~1-(d), suppose $(x_n,a_n,\mu_n) \rightarrow (x,a,\mu)$ and let $g \in C_b(\sX)$. Then, we have
\begin{align}
\lim_{n\rightarrow\infty} \int_{\sX} g(y) p(dy|x_n,a_n,\mu_n) &= \lim_{n\rightarrow\infty} \int_{\sX} g\bigl(F(x_n,a_n,\mu_n) + g(x_n,a_n) z\bigl) q(z) m(dz) \nonumber \\
&= \int_{\sX} g\bigl(F(x,a,\mu) + g(x,a) z\bigl) q(z) m(dz) \nonumber
\end{align}
since $g$ and $F$ are continuous, where the continuity of $F$ follows from \cite[Theorem 3.3]{Ser82} and the fact that $f$ is bounded and continuous. Therefore, the transition probability $p(\,\cdot\,|x,a,\mu)$ is weakly continuous. Similarly, one can verify that the function $\int_{\sX} (1+y^2) p(dy|x,a,\mu)$ is continuous. Thus, Assumption~1-(d) holds. Note that Assumption~1-(e) holds if the initial distribution $\mu_0$ has a finite second moment. Finally, we will verify Assumption~1-(a) and (f). For (a), suppose $(x_n,a_n,\mu_n) \rightarrow (x,a,\mu)$. Then, we have
\begin{align}
|c(x_n,a_n,\mu_n)& - c(x,a,\mu)| = \biggl| \int_{\sX} d(x_n,a_n,y) \mu_n(dy) - \int_{\sX} d(x,a,y) \mu(dy) \biggr| \nonumber \\
&\leq \biggl| \int_{\sX} d(x_n,a_n,y) \mu_n(dy) - \int_{\sX} d(x,a,y) \mu_n(dy) \biggr| \nonumber \\
&\phantom{xxxxxxxxxxxxxx} \biggl| \int_{\sX} d(x,a,y) \mu_n(dy) - \int_{\sX} d(x,a,y) \mu(dy) \biggr| \nonumber \\
&\leq \omega_d\bigl(|x_n-x| + |a_n-a|\bigr) + \biggl| \int_{\sX} d(x,a,y) \mu_n(dy) - \int_{\sX} d(x,a,y) \mu(dy) \biggr|. \nonumber
\end{align}
The last expression goes to zero as $N\rightarrow\infty$, since $d(x,a,\,\cdot\,) \in C_b(\sX)$ and $\omega_d(r) \rightarrow0$ as $r\rightarrow0$. Thus, $c$ is continuous. For (f), we have
\begin{align}
\sup_{(a,\mu) \in \sA \times \P_v^t(\sX)} c(x,a,\mu) &= \sup_{(a,\mu) \in \sA \times \P_v^t(\sX)} \int_{\sX} d(x,a,y) \mu(dy) \nonumber \\
&\leq R v(x) \sup_{(a,\mu) \in \sA \times \P_v^t(\sX)} \int_{\sX} v(y) \mu(dy) \leq R \alpha^t v(x). \nonumber
\end{align}
Hence, Assumption~1-(f) holds with $\gamma = \alpha$. Therefore, under (i)-(vii), there exists a mean-field equilibrium for the mean-field game that arises from the above finite-agent game problem.

For the same model, Assumption~2 holds under the following conditions: (viii) $d(x,a,y)$ is (uniformly) H\"{o}lder continuous in $y$ with exponent $p$ and H\"{o}lder constant $K_d$, (ix) $f(x,a,y)$ is (uniformly) H\"{o}lder continuous in $y$ with exponent $p$ and H\"{o}lder constant $K_f$, (x) $g$ is bounded and $\inf_{(x,a) \in \sX \times \sA} |g(x,a)| \eqqcolon \theta > 0$.

Indeed, we have
\begin{align}
\omega_c(r) &= \sup_{(x,a) \in \sX\times\sA} \sup_{\substack{\mu,\nu: \\ \tilde{\rho}_v(\mu,\nu)\leq r}} \biggl |\int_{\sX} d(x,a,y) \mu(dy) - \int_{\sX} d(x,a,y) \nu(dy) \biggr| \nonumber \\
&\leq \sup_{(x,a) \in \sX\times\sA} \sup_{\substack{\mu,\nu: \\ \tilde{\rho}_v(\mu,\nu)\leq r}} K_d W_p(\mu,\nu)^p \nonumber \\
&\leq K_d r^p. \nonumber
\end{align}
Hence, $\omega_c(r) \rightarrow 0$ as $r\rightarrow0$ and $\omega_c(\tilde{\rho}(\,\cdot\,,\mu))$ has a finite $v$-norm. Therefore, Assumption~2-(h) holds for $\omega_c$. For $\omega_p$, we have
\begin{align}
\omega_{p}(r) &= \sup_{(x,a) \in \sX\times\sA} \sup_{\substack{\mu,\nu: \\ \tilde{\rho}_v(\mu,\nu)\leq r}} \|p(\,\cdot\,|x,a,\mu) - p(\,\cdot\,|x,a,\nu)\|_{v} \nonumber \\
&= \sup_{(x,a) \in \sX\times\sA} \sup_{\substack{\mu,\nu: \\ \tilde{\rho}_v(\mu,\nu)\leq r}}
\sup_{l \in \Lip_{\lambda}(1,\R)} \biggl| \int_{\sX} l\bigl(F(x,a,\mu)+g(x,a) z\bigr) q(z) m(dz) \nonumber \\
&\phantom{xxxxxxxxxxxxxxxxxxxxxxxxx}- \biggl| \int_{\sX} l(F(x,a,\nu)+g(x,a) z) q(z) m(dz) \biggr| \nonumber \\
&= \sup_{(x,a) \in \sX\times\sA} \sup_{\substack{\mu,\nu: \\ \tilde{\rho}_v(\mu,\nu)\leq r}}
\sup_{l \in \Lip_{\lambda}(1,\R)} \biggl| \int_{\sX} l\bigl(g(x,a) z\bigr) q\biggl(z-\frac{F(x,a,\mu)}{g(x,a)}\biggr) m(dz) \nonumber \\
&\phantom{xxxxxxxxxxxxxxxxxxxxxxxxx}- \biggl| \int_{\sX} l\bigl(g(x,a) z\bigr) q\biggl(z-\frac{F(x,a,\nu)}{g(x,a)}\biggr) m(dz) \biggr| \nonumber \\
&\leq \|l\| \sup_{(x,a) \in \sX\times\sA} \sup_{\substack{\mu,\nu: \\ \tilde{\rho}_v(\mu,\nu)\leq r}}
\sup_{h \in \Lip_{\lambda}(1,\R)} \biggl| \int_{\sX} h(z) q\biggl(z-\frac{F(x,a,\mu)}{g(x,a)}\biggr) m(dz) \nonumber \\
&\phantom{xxxxxxxxxxxxxxxxxxxxxxxxx}- \int_{\sX} h(z) q\biggl(z-\frac{F(x,a,\nu)}{g(x,a)}\biggr) m(dz) \biggr|, \label{bound}
\end{align}
where (\ref{bound}) follows from the fact that $l(g(x,a) \,\cdot\,) \in \Lip_{\lambda}(\|l\|,\R)$. Note that for any compact interval $K = [-k,k] \subset \sX$, we can upper bound (\ref{bound}) as follows:
\begin{align}
&(\ref{bound}) \leq \|l\| \sup_{(x,a) \in \sX\times\sA} \sup_{\substack{\mu,\nu: \\ \tilde{\rho}_v(\mu,\nu)\leq r}}
\sup_{h \in \Lip_{\lambda}(1,\R)} \biggl( \biggl| \int_{K} h(z) q\biggl(z-\frac{F(x,a,\mu)}{g(x,a)}\biggr) m(dz) \nonumber \\
&\phantom{xxxxxxxxxxxxxxxxxxxxxxxxxxxx}- \int_{K} h(z) q\biggl(z-\frac{F(x,a,\nu)}{g(x,a)}\biggr) m(dz) \biggr| \nonumber \\
&+ \int_{K^c} |h(z)| q\biggl(z-\frac{F(x,a,\mu)}{g(x,a)}\biggr) m(dz) + \int_{K^c} |h(z)| q\biggl(z-\frac{F(x,a,\nu)}{g(x,a)}\biggr) m(dz) \biggr) \nonumber \\
&\leq \|l\| \sup_{(x,a) \in \sX\times\sA} \sup_{\substack{\mu,\nu: \\ \tilde{\rho}_v(\mu,\nu)\leq r}}
\sup_{h \in \Lip_{\lambda}(1,\R)} \biggl( \biggl| \int_{K} h(z) q\biggl(z-\frac{F(x,a,\mu)}{g(x,a)}\biggr) m(dz) \nonumber \\
&\phantom{xxxxxxxxxxxxxxxxxxxxxxxxxxxx}- \int_{K} h(z) q\biggl(z-\frac{F(x,a,\nu)}{g(x,a)}\biggr) m(dz) \biggr| \nonumber \\
&+ \int_{K^c} (1+z^2) q\biggl(z-\frac{F(x,a,\mu)}{g(x,a)}\biggr) m(dz) + \int_{K^c} (1+z^2) q\biggl(z-\frac{F(x,a,\nu)}{g(x,a)}\biggr) m(dz) \biggr) \label{bound2}
\end{align}
Note that the last two terms in the last expression go to zero (uniformly in $(x,a,\mu,\nu)$) as $k \rightarrow \infty$, since $F$ and $g$ are bounded, and $\inf_{(x,a) \in \sX \times \sA} |g(x,a)| > 0$. For any $\varepsilon>0$, let $K_{\varepsilon}=[-k_{\varepsilon},k_{\varepsilon}] \subset \sX$ so that the sum of these terms is less than $\varepsilon$ for all $(x,a,\mu,\nu)$. Let us define the Lipschitz seminorm of $q$ on $K_{\varepsilon}$ as
\begin{align}
T_{\varepsilon} \coloneqq \sup_{\substack{(x,z) \in K_{\varepsilon}\times K_{\varepsilon}\\ x \neq z}} \frac{|q(x) - q(z)|}{|x-z|}. \nonumber
\end{align}
Then, we have
\begin{align}
(\ref{bound2}) &\leq \|l\| \sup_{(x,a) \in \sX\times\sA} \sup_{\substack{\mu,\nu: \\ \tilde{\rho}_v(\mu,\nu)\leq r}}
\sup_{h \in \Lip_{\lambda}(1,\R)} \biggl| \int_{K_{\varepsilon}} h(z) q\biggl(z-\frac{F(x,a,\mu)}{g(x,a)}\biggr) m(dz) \nonumber \\
&\phantom{xxxxxxxxxxxxxxxxxxxxxxxx}- \int_{K_{\varepsilon}} h(z) q\biggl(z-\frac{F(x,a,\nu)}{g(x,a)}\biggr) m(dz) \biggr| +\|l\| \varepsilon \nonumber \\
&\leq \|l\| T_{\varepsilon} \sup_{(x,a) \in \sX\times\sA} \sup_{\substack{\mu,\nu: \\ \tilde{\rho}_v(\mu,\nu)\leq r}}
\sup_{h \in \Lip_{\lambda}(1,\R)} \int_{K_{\varepsilon}} |h(z)|  \biggl|\frac{F(x,a,\mu)}{g(x,a)} - \frac{F(x,a,\nu)}{g(x,a)} \biggr| m(dz) \nonumber \\
&\phantom{xxxxxxxxxxxxxxxxxxxxxxxxxxxxxxxxxxxxxxxxxxxxx}+\|l\| \varepsilon \nonumber \\
&\leq \frac{\|l\| T_{\varepsilon}}{\theta} \int_{K_{\varepsilon}} (1+z^2) m(dz) \sup_{(x,a) \in \sX\times\sA} \sup_{\substack{\mu,\nu: \\ \tilde{\rho}_v(\mu,\nu)\leq r}} |F(x,a,\mu) - F(x,a,\nu)| + \|l\| \varepsilon \nonumber \\
&\leq \frac{\|l\| T_{\varepsilon}}{\theta} \int_{K_{\varepsilon}} (1+z^2) m(dz) \sup_{(x,a) \in \sX\times\sA} \sup_{\substack{\mu,\nu: \\ \tilde{\rho}_v(\mu,\nu)\leq r}} K_f W_p(\mu,\nu)^p + \|l\| \varepsilon \nonumber \\
&\leq \frac{\|l\| T_{\varepsilon}}{\theta} \int_{K_{\varepsilon}} (1+z^2) m(dz) K_f r^p + \|l\| \varepsilon. \nonumber
\end{align}
Since $\varepsilon$ is arbitrary, we have $\omega_p(r) \rightarrow 0$ as $r\rightarrow0$ and $\omega_p(\tilde{\rho}_v(\,\cdot\,,\mu))$ has finite $v$-norm. Thus, Assumption~2-(h) holds. Finally, it is straightforward to prove that Assumption~2-(i) holds, since the noise has a standard normal distribution.
Therefore, under (viii)-(x), Assumption~2 holds.

The final piece in order to deduce the existence of approximate Markov-Nash equilibria in the above game with sufficiently many agents is to prove condition (j). However, verification of (j) is highly dependent on the systems components, and so, it is quite difficult to find a global assumption in order to satisfy (j). One way to establish this is as follows. For any $h \in C_v(\sX)$ and $\nu \in \P_v(\sX)$, define
\begin{align}
R_{\nu,h}(x,a) &\coloneqq c(x,a,\nu) + \beta \int_{\sX} h(y) p(dy|x,a,\nu) \nonumber \\
&= c(x,a,\nu) + \beta \int_{\sX} h\bigl(F(x,a,\nu)+g(x,a) z\bigr) q(z) m(dz). \nonumber
\end{align}
Using this, let us state the following condition:
\begin{itemize}
\item [(j')] For any $h \in C_v(\sX)$ and $\nu \in \P_v(\sX)$, there a exists unique minimizer $a_x \in \sA$ of $R_{\nu,h}(x,\,\cdot\,)$, for each $x \in \sX$.
\end{itemize}
Under assumption (j'), we can establish that the policy in the mean field equilibrium satisfies the weak continuity condition (j).

Indeed, fix any $t\geq0$ and consider the policy $\pi_t$ at time $t$ in $\pi$, where $\pi$ is the policy in the mean-field equilibrium. By (j'), we must have $\pi_t(\,\cdot\,|x) = \delta_{f_t(x)}(\,\cdot\,)$ for some $f_t: \sX \rightarrow \sA$ which minimizes some function $R_{\nu,h}(x,a)$ of the above form; that is, $\min_{a \in \sA} R_{\nu,h}(x,a) = R_{\nu,h}(x,f_t(x))$ for all $x \in \sX$. If $f_t$ is continuous, then $\pi_t$ is also continuous. Hence, in order to prove (j), it is sufficient to prove that $f_t$ is continuous. Suppose $x_n \rightarrow x$ in $\sX$. Note that $l(\,\cdot\,) = \min_{a \in \sX} R_{\nu}(\,\cdot\,,a)$ is continuous. Therefore, every accumulation point of the sequence $\{f_t(x_n)\}_{n\geq1}$ must be a minimizer for $R_{\nu,h}(x,\,\cdot\,)$. Since there exists a unique minimizer $f_t(x)$ of $R_{\nu,h}(x,\,\cdot\,)$, the set of all accumulation points of $\{f_t(x_n)\}_{n\geq1}$ must be $\{f_t(x)\}$. This implies that $f_t(x_n)$ converges to $f_t(x)$ since $\sA$ is compact. Hence, $f_t$ is continuous.

%Suppose that $\sA$ is convex. If $R(x,\,\cdot\,)$ is a strictly convex function over $\sA$, for all $x \in \sX$, then we can prove that the policy in the mean field equilibrium is weakly continuous. Indeed, it is given by a

\begin{remark}
We note that, using similar ideas, the finite-horizon cost criterion; that is,
\begin{align}
E \biggl[ \sum_{t=0}^{T} c(x(t),a(t),\mu_t)\biggr] \text{ } \text{ for some $T<\infty$},
\end{align}
can be handled with the same quantitative results. The only part that needs to verified differently from the infinite-horizon case is Proposition~\ref{prop4}. Namely, we have to establish that $F_t^{(n)}$ and $J_{*,t}^{\bnu^{(n)}}$ continuously converge to $F_t$ and $J^{\bnu}_{*,t}$, respectively, since we cannot use the fixed-point argument anymore (see the proof of Proposition~\ref{prop4}). Note that, in finite-horizon case, for each $n$ and $t<T$, these functions are given by
\begin{align}
F^{(n)}_t(x,a) &= c(x,a,\nu^{(n)}_{t,1}) + \int_{\sX} J^{\bnu^{(n)}}_{*,t+1}(y) p(dy|x,a,\nu^{(n)}_{t,1}), \nonumber \\
F_t(x,a) &= c(x,a,\nu_{t,1}) + \int_{\sX} J^{\bnu}_{*,t+1}(y) p(dy|x,a,\nu_{t,1}), \nonumber
\end{align}
and
\begin{align}
J^{\bnu^{(n)}}_{*,t}(x) = \min_{a \in \sA} F^{(n)}_t(x,a) \text{ } \text{ and } \text{ } J^{\bnu}_{*,t}(x) = \min_{a \in \sA} F_t(x,a). \nonumber
\end{align}
Observe that the discount factor $\beta$ is missing in the above equations. For $t=T$, we have $F_T^{(n)}(x,a) = c(x,a,\nu_{T,1}^{(n)})$ and $F_T(x,a) = c(x,a,\nu_{T,1})$. Since $c$ is continuous and $\nu_{T,1}^{(n)}$ weakly converges to $\nu_{T,1}$, we have that $J^{\bnu^{(n)}}_{*,T}$ continuously converges to $J^{\bnu}_{*,T}$ by \cite[Proposition 7.32]{BeSh78}. But this implies that $F_{T-1}^{(n)}$ continuously converges to $F_{T-1}$ by the discussion before Proposition~\ref{prop4}, and so, $J^{\bnu^{(n)}}_{*,T-1}$ continuously converges to $J^{\bnu}_{*,T-1}$ again by \cite[Proposition 7.32]{BeSh78}.
Then, by induction hypothesis, we can conclude that $F_t^{(n)}$ and $J_{*,t}^{\bnu^{(n)}}$ continuously converge to $F_t$ and $J^{\bnu}_{*,t}$, respectively, for each $t\leq T$. Therefore, Theorems~\ref{thm:MFE} and \ref{theorem4} hold for the finite-horizon cost criterion under the same set of assumptions. Furthermore, if we start the mean-field game at time $\tau>0$ with initial measure $\mu_{\tau}$, then the pair $\bigl(\{\pi_t\}_{\tau\leq t \leq T},\{\mu_t\}_{\tau\leq t \leq T}\bigr)$ in Theorem~\ref{thm:MFE} is still a mean-field equilibrium for the sub-game. Similar conclusion can be reached for the finite-agent game problem; that is, the policy $\{\pi_t\}_{\tau\leq t \leq T}$ is an $\varepsilon$-Markov-Nash equilibrium for the finite-agent game problem starting at time $\tau$ with i.i.d. initial measures drawn according to $\mu_{\tau}$.

%mean-field equlibrium in Theorem~\ref{thm:MFE} is indeed a \emph{perfect Nash equilibrium} for the limiting game problem (i.e., $N=\infty$) by the dynamic programming principle; that is, if we start the mean-field game at time $T\geq\tau>0$ with initial measure $\mu_{\tau}$, then the pair $(\pi,{\mu_t}_{\tau\leq t \leq T})$ is a mean-field equilibrium as well.
\end{remark}

\section{Conclusion}\label{conc}

Using the mean-field approach, we have shown that there exist nearly Markov-Nash equilibria for finite-population
game problems with infinite-horizon discounted costs when the number of agents is sufficiently large. Under mild technical conditions, we have first established the existence of a Nash equilibrium in the limiting mean-field game problem. We have then applied this policy to the finite population game and have demonstrated that it constitutes an $\varepsilon$-Markov-Nash equilibrium for games with a sufficiently large number of agents.

%We have introduced the Markov-Nash eqiulibrium as a new solution concept for discrete-time mean-field games with finitely many players under the infinite-horizon discounted cost optimality criterion. In a nutshell, a Markov-Nash equilibrium policy is player-by-player optimal in the class of all Markov policies.  Using the mean-field approach, we have demonstrated the existence of approximate Markov-Nash equilibria for finite-population games with unbounded costs when the number of agents is sufficiently large. %Under mild technical conditions, we first showed that the limiting mean-field problem has an equilibrium. We then applied the policy obtained from this equilibrium to the finite population game and proved that it constitutes an $\varepsilon$-Markov-Nash equilibrium for games with a sufficiently large number of agents $N$.

%We are currently investigating extensions of our results to the unbounded cost case, the rate of convergence to the mean-field limit to capture the explicit relation between the number of agents $N$ and $\varepsilon$, and the uniqueness of the mean-field equilibrium. It would also be interesting to study the non-homogeneous setting, where the agents do not have identical one-stage costs and/or transition probabilities, and instead their one-stage costs and transition probabilities belong to finite families of such objects.
One interesting future direction is to study mean-field games with imperfect information. In this case, one possible approach is to use the theory developed for partially observed Markov decision processes (POMDPs). Finally, average-cost and risk-sensitive optimality criteria are also worth studying. In particular, using the vanishing discount approach in MDP theory (i.e., $\beta \rightarrow 1$), we may be able to establish similar results for the average cost case.

\section*{Appendix}

\appsec

\subsection{Proof of Theorem~\ref{theorem0}}\label{app0}

Since the transition probabilities and the one-stage cost functions are the same for all agents, it is sufficient to prove the result for Agent~$1$. To this end, choose an arbitrary policy ${\boldsymbol \pi}^{(N)} \subset {\bf \sM}^{(N)}$. We prove that for any $\varepsilon>0$, there exists $\pi^{1,\varepsilon} \in \sM_1^c$ such that
\begin{align}
J_1^{(N)}(\pi^{1,\varepsilon},\pi^2,\ldots,\pi^N) < J_1^{(N)}(\pi^{1},\pi^2,\ldots,\pi^N) + \varepsilon, \nonumber
\end{align}
which will complete the proof.

To achieve this, we first describe the overall $N$-agent game as an Markov Decision Process (MDP) with state space $\sX^N \times \P(\sX)$, action space $\sA^N$, and transition probability
\begin{align}
Q(d\by,d\mu|\bx,\nu,\ba) \coloneqq \delta_{F_N(\by)}(d\mu) \prod_{i=1}^N p(dy_i|x_i,a_i,\nu), \nonumber
\end{align}
where $F_N(\by) \coloneqq \frac{1}{N} \sum_{i=1}^N \delta_{y_i}$. The one-stage cost function is given by $C(\bx,\mu,\ba) = c(x_1,a_1,\mu)$ which corresponds to the one-stage cost function of Agent~$1$. The cost function for this MDP is $\beta$-discounted cost and denoted by $J(\pi^1,\ldots,\pi^N)$ for the policy ${\boldsymbol \pi}^{(N)} = (\pi^1,\ldots,\pi^N)$. Observe that for any $\tpi \in \sM$, we have
\begin{align}
J(\tpi,\pi^2,\ldots,\pi^N) = J_1^{(N)}(\tpi,\pi^2,\ldots,\pi^N). \nonumber
\end{align}
Let $(\bx(t),\nu_t)$ and $\ba(t)$ denote the state and action of this MDP at time $t$, respectively. Let $\bx(t) = (x_1(t),\ldots,x_N(t))$.

Recall that $\mu_0 = {\cal L}(x_1(0))$. By Lusin's theorem \cite[Theorem 7.5.2]{Dud89}, for any $\delta_0>0$, there exists a closed set $F_0 \subset \sX$ such that $\mu_0(F_0) < \delta_0$ and $\pi^1_0$ is weakly continuous on $F_0$. Since $\P(\sA)$ is a convex subset of a locally convex vector space (i.e., the set of finite signed measures over $\sA$), by Dugundji extension theorem \cite[Theorem 7.4]{GrDu03} we can extend $\pi^1_{0}: F_0\rightarrow \P(\sA)$ to $\sX$ continuously. Let $\pi^1_{0,\delta}$ denote this extended continuous function and define $\pi^{1,1}_{\delta} \coloneqq (\pi^1_{0,\delta},\pi^1_1,\pi^1_2,\ldots)$. Then, it is straightforward to prove that \begin{align}
&|J(\pi^{1,1}_{\delta},\pi^2,\ldots,\pi^N) - J(\pi^{1},\pi^2,\ldots,\pi^N)| \nonumber \\
&\phantom{xxxxxxxxxxxxxx}\biggl(= |J_1^{(N)}(\pi^{1,1},\pi^2,\ldots,\pi^N) - J_1^{(N)}(\pi^{1},\pi^2,\ldots,\pi^N)| \biggr) \nonumber \\
&\phantom{xxxxxxxxxxxxxx}\leq 2 L_0 \int_{F_0^c} v(x_0) \mu_0(dx_0). \nonumber
\end{align}
Note that the last expression can be made arbitrarily small by choosing arbitrarily small $\delta_0$ since $v$ is $\mu_0$-integrable.

We can apply the same method to the policy $\pi^{1,1}_{\delta}$; that is, we replace $\pi^1_1$ with a continuous $\pi^1_{1,\delta}$ that agrees with $\pi^1_1$ on some closed subset $F_1$ of $\sX$ with probability $1-\delta_1$. Let $\pi^{1,2}_{\delta} \coloneqq (\pi^1_{0,\delta},\pi^1_{1,\delta},\pi^1_2,\ldots)$. Therefore, $|J(\pi^{1,2}_{\delta},\pi^2,\ldots,\pi^N) - J(\pi^{1,1}_{\delta},\pi^2,\ldots,\pi^N)| \leq 2 L_1 \int_{F_1^c} v(x_1) \mu_1(dx_1)$, where $\mu_1 = {\cal L}(x_1(1))$ under $\pi^{1,1}_{\delta}$ or $\pi^{1,2}_{\delta}$. Continuing in this way, we will obtain $\pi^{1,\infty}_{\delta} \in \sM_1^c$ such that
\begin{align}
&|J(\pi^{1},\pi^2,\ldots,\pi^N) - J(\pi^{1,\infty},\pi^2,\ldots,\pi^N)| \nonumber \\
&\phantom{xxxxxxxxxxxxx}\biggl(=|J_1^{(N)}(\pi^{1},\pi^2,\ldots,\pi^N) - J_1^{(N)}(\pi^{1,\infty},\pi^2,\ldots,\pi^N)| \biggr) \nonumber \\
&\phantom{xxxxxxxxxxxxx}\leq \sum_{t=0}^{\infty} |J(\pi^{1,t},\pi^2,\ldots,\pi^N) - J(\pi^{1,t+1},\pi^2,\ldots,\pi^N)| \nonumber \\
&\phantom{xxxxxxxxxxxxx}\leq \sum_{t=0}^{\infty} 2 L_{t} \beta^t \int_{F_t^c} v(x_t) \mu_t(dx_t). \nonumber
\end{align}
The last expression goes to zero with the proper choice of the sequence $\{\delta_t\}_{t\geq0}$.
This completes the proof.

\subsection{Proof of Proposition~\ref{lemma1}}\label{app01}

Let $\bmu \in \M$ and $\pi \in \Pi$ be arbitrary. According to the Ionescu-Tulcea theorem, an initial distribution $\mu_0$ on $\sX$, a policy $\pi$, and state measure flow $\bmu$ define a unique probability measure $\rP^{\pi}$ on $\sG_{\infty}=(\sX\times\sA)^{\infty}$. Define $\mu_t^{\pi} \coloneqq \rP^{\pi}(dx(t))$ for each $t\geq0$.

Let $\hat{\pi}$ be the Markov policy given by
\begin{align}
\hat{\pi}_t(da(t)|x(t)) \coloneqq \rP^{\pi}(da(t)|x(t)). \nonumber
\end{align}
We first prove that $\rP^{\hat{\pi}}(dx(t)) \eqqcolon \mu_t^{\hat{\pi}} = \mu_t^{\pi}$ for all $t$. We prove this by induction. For $t=0$ the claim clearly holds as the initial distribution $\mu_0$ is fixed. Assume the claim is true for $t\geq0$. Then we have
\begin{align}
\mu_{t+1}^{\hat{\pi}}(\,\cdot\,) &= \int_{\sX \times \sA} p(\,\cdot\,|x(t),a(t),\mu_t) \hat{\pi}_t(da(t)|x(t)) \mu_t^{\hat{\pi}}(dx(t)) \nonumber \\
&= \int_{\sX \times \sA} p(\,\cdot\,|x(t),a(t),\mu_t) \rP^{\pi}(da(t)|x(t)) \mu_t^{\pi}(dx(t)) \nonumber \\
&= \mu_{t+1}^{\pi}(\,\cdot\,). \nonumber
\end{align}
This proves the claim.

To complete the proof of the first statement, take any $t\geq0$. Then we have
\begin{align}
E^{\hat{\pi}} \bigl[ c(x(t),a(t),\mu_t^{\hat{\pi}}) \bigr] &= \int_{\sX \times \sA} c(x(t),a(t),\mu_t^{\hat{\pi}}) \hat{\pi}(da(t)|x(t)) \mu_{t}^{\hat{\pi}}(dx(t)) \nonumber \\
&= \int_{\sX \times \sA} c(x(t),a(t),\mu_t^{\pi}) \rP^{\pi}(da(t)|x(t)) \mu_{t}^{\pi}(dx(t)) \nonumber \\
&= E^{\pi} \bigl[ c(x(t),a(t),\mu_t^{\pi}) \bigr]. \nonumber
\end{align}
Since $t$ is arbitrary, the above equality holds for all $t\geq0$. Thus, by Tonelli's theorem \cite[Theorem 18.3]{Bil95}, $J_{\bmu}(\hat{\pi}) = J_{\bmu}(\pi)$.

The proof of the second statement can be done similarly, so we omit the details.

\subsection{Proof of Proposition~\ref{prop4}}\label{app1}

We use successive approximations to prove Proposition~\ref{prop4}. To ease the notation, let $T^{(n)} \coloneqq T^{\bnu^{(n)}}$, $J^{(n)}_* \coloneqq J^{\bnu^{(n)}}_*$, $T \coloneqq T^{\bnu}$, and $J_* \coloneqq J^{\bnu}_*$. Let $\bu^{(n)}_0 = \bu_0 = 0$ and
\begin{align}
\bu^{(n)}_{k+1} = T^{(n)} \bu^{(n)}_k \text{ } \text{and} \text{ } \bu_{k+1} = T \bu_k. \nonumber
\end{align}
Since $T^{(n)}$ and $T$ are contractive operators on $\C$ with modulus $\sigma \beta \alpha$, it can be proved that
\begin{align}
\rho(\bu^{(n)}_k,J^{(n)}_*) , \rho(\bu_k,J_*)  \leq (\sigma \beta \alpha)^k L_0. \label{eq7}
\end{align}

\begin{lemma}\label{lemma3}
For any compact $K \subset \sX$, we have
\begin{align}
\lim_{n\rightarrow\infty} \sup_{x \in K} \bigl| u^{(n)}_{k,t}(x) - u_{k,t}(x) \bigr| = 0, \nonumber
\end{align}
for all $k\geq0$ and $t\geq0$, where $\bu^{(n)}_k = \bigl( u^{(n)}_{k,t} \bigr)_{t\geq0}$ and $\bu_k = \bigl( u_{k,t} \bigr)_{t\geq0}$.
\end{lemma}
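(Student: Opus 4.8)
The plan is to prove the lemma by induction on the iteration index $k$, establishing the statement for all $t\geq0$ simultaneously at each stage, and to phrase everything in terms of continuous convergence, which by \cite[Lemma 2.1]{Lan81} is equivalent to uniform convergence on compact subsets of $\sX$ for the continuous functions at hand. The base case $k=0$ is immediate, since $u^{(n)}_{0,t}=u_{0,t}=0$ for every $n$ and $t$.

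For the inductive step I would fix $t$ and recall that $u^{(n)}_{k+1,t}=T^{(n)}_t u^{(n)}_{k,t+1}$ and $u_{k+1,t}=T_t u_{k,t+1}$. Writing
$$
F^{(n)}_{k,t}(x,a) \coloneqq c(x,a,\nu^{(n)}_{t,1}) + \beta\int_{\sX} u^{(n)}_{k,t+1}(y)\,p(dy|x,a,\nu^{(n)}_{t,1}),
$$
and the analogous $F_{k,t}$ built from $\bnu$, we have $u^{(n)}_{k+1,t}(x)=\min_{a\in\sA}F^{(n)}_{k,t}(x,a)$ and $u_{k+1,t}(x)=\min_{a\in\sA}F_{k,t}(x,a)$. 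The step then reduces to two claims: (a) $F^{(n)}_{k,t}\to F_{k,t}$ continuously, and (b) taking the minimum over the compact set $\sA$ preserves continuous convergence.

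For (a), take any $(x^{(n)},a^{(n)})\to(x,a)$; since $\bnu^{(n)}\to\bnu$ in $\Xi$ forces $\nu^{(n)}_{t,1}\to\nu_{t,1}$ weakly, the triple $(x^{(n)},a^{(n)},\nu^{(n)}_{t,1})$ converges in $\sX\times\sA\times\P(\sX)$, so the cost term converges by continuity of $c$ (Assumption~1-(a)). The integral term is the main obstacle: the integrand $u^{(n)}_{k,t+1}$ is bounded only by $L_{t+1}v$ with $v$ unbounded, so weak convergence of $p(\,\cdot\,|x^{(n)},a^{(n)},\nu^{(n)}_{t,1})$ alone (Assumption~1-(d)) does not suffice. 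The induction hypothesis provides the continuous convergence $u^{(n)}_{k,t+1}\to u_{k,t+1}$; the uniform bound $|u^{(n)}_{k,t+1}(y)|\leq L_{t+1}v(y)$ holds because $\bu^{(n)}_k\in\C$ (as $T^{(n)}$ maps $\C$ into itself and $\bu^{(n)}_0=0$); and the crucial extra ingredient is that $\int_{\sX} v(y)\,p(dy|x^{(n)},a^{(n)},\nu^{(n)}_{t,1})\to\int_{\sX} v(y)\,p(dy|x,a,\nu_{t,1})$, which is precisely the continuity statement in Assumption~1-(d) (recall $v=w$). These are exactly the hypotheses of \cite[Theorem 3.3]{Ser82}, so the integral terms converge and $F^{(n)}_{k,t}\to F_{k,t}$ continuously.

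For (b), I would use compactness of $\sA$: given $x^{(n)}\to x$, select minimizers $a^{(n)}$ of $F^{(n)}_{k,t}(x^{(n)},\cdot)$ and pass to a subsequence with $a^{(n)}\to a^\ast$, so that continuous convergence gives $\liminf_n\min_a F^{(n)}_{k,t}(x^{(n)},a)=F_{k,t}(x,a^\ast)\geq\min_a F_{k,t}(x,a)$; testing against a fixed minimizer $\bar a$ of $F_{k,t}(x,\cdot)$ yields the matching bound $\limsup_n\min_a F^{(n)}_{k,t}(x^{(n)},a)\leq F_{k,t}(x,\bar a)=\min_a F_{k,t}(x,a)$. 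Hence $u^{(n)}_{k+1,t}\to u_{k+1,t}$ continuously, which closes the induction. Throughout, the only genuine difficulty is controlling the unbounded integrand against the weakly converging kernels; the rest is bookkeeping absorbed by the $v$-structure of $\C$ and Assumption~1.
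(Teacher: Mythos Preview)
Your proposal is correct and follows essentially the same approach as the paper: induction on $k$, the equivalence of continuous convergence and uniform convergence on compacts for continuous functions, and \cite[Theorem 3.3]{Ser82} applied with the uniform envelope $|u^{(n)}_{k,t+1}|\leq L_{t+1}v$ together with the continuity of $\int v(y)\,p(dy|\cdot,\cdot,\cdot)$ from Assumption~1-(d). The only cosmetic difference is in your step (b): the paper bypasses the subsequence argument by using the elementary bound $\bigl|\min_a f(x,a)-\min_a g(x,a)\bigr|\leq\sup_a|f(x,a)-g(x,a)|$ and then passing directly to uniform convergence on $K\times\sA$, which is slightly more economical but logically equivalent to what you wrote.
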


\begin{proof}
Fix any compact $K \subset \sX$. Since $\bu^{(n)}_0 = \bu_0 = 0$, the claim trivially holds for $k=0$ and for all $t\geq0$. Suppose that the claim holds for $k$ and for all $t\geq0$. Then, consider $k+1$ and arbitrary $t$. For these indices, we have
\begin{align}
&\sup_{x \in K} \bigl| u^{(n)}_{k+1,t}(x) - u_{k+1,t}(x) \bigr| \nonumber \\
&\phantom{xxxxxx}=\sup_{x \in K} \biggl| \min_{a \in \sA} \biggl[ c(x,a,\nu^{(n)}_{t,1}) + \beta \int_{\sX} u^{(n)}_{k,t+1}(y) p(dy|x,a,\nu^{(n)}_{t,1}) \biggr] \nonumber \\
&\phantom{xxxxxxxxxxxxxxxx}- \min_{a \in \sA}\biggl[ c(x,a,\nu_{t,1}) + \beta \int_{\sX} u_{k,t+1}(y) p(dy|x,a,\nu_{t,1}) \biggr] \biggr| \nonumber \\
&\phantom{xxxxxx}\leq \sup_{(x,a) \in K \times \sA} \bigl| c(x,a,\nu_{t,1}^{(n)}) - c(x,a,\nu_{t,1}) \bigr| \nonumber \\
&\phantom{xxxxxxxxxxxxxxxx}+ \beta \sup_{(x,a) \in K \times \sA} \biggl| \int_{\sX} u^{(n)}_{k,t+1}(y) p(dy|x,a,\nu^{(n)}_{t,1}) \nonumber \\
&\phantom{xxxxxxxxxxxxxxxxxxxxxxxxxxxxxxxxx}- \int_{\sX} u_{k,t+1}(y) p(dy|x,a,\nu_{t,1}) \biggr|. \nonumber
\end{align}
Note that $c(\cdot\,,\,\cdot\,,\nu^{(n)})$ converges to $c(\cdot\,,\,\cdot\,,\nu)$ continuously as $n\rightarrow\infty$. Furthermore,  since $u^{(n)}_{k,t+1}$ converges to $u_{k,t+1}$ continuously (as $u^{(n)}_{k,t+1}$, $u_{k,t+1}$ are continuous and $u^{(n)}_{k,t+1}$ uniformly converges to $u_{k,t+1}$ over compact sets by assumption), $u^{(n)}_{k,t+1}(y) \leq L_{t+1} v(y)$ for all $n\geq1$, and
$\int_{\sX} v(y) p(dy|x^{(n)},a^{(n)},\nu^{(n)}_{t,1})\rightarrow\int_{\sX} v(y) p(dy|x,a,\nu^{(n)}_{t,1})$ for any $(x^{(n)},a^{(n)}) \rightarrow (x,a)$ in $\sX \times \sA$, by \cite[Theorem 3.3]{Ser82} $\int_{\sX} u^{(n)}_{k,t+1}(y) p(dy|\cdot\,,\,\cdot\,,\nu^{(n)}_{t,1})$ converges to $\int_{\sX} u_{k,t+1}(y) p(dy|\cdot\,,\,\cdot\,,\nu_{t,1})$ continuously as $n\rightarrow\infty$. Since continuous convergence is equivalent to uniform convergence over compact sets for continuous functions, the last expression goes to zero as $n\rightarrow\infty$. This completes the proof.
\end{proof}

Using previous lemma, we now complete the proof of Proposition~\ref{prop4}.

Fix any compact $K \subset \sX$. For all $k\geq0$, we have
\begin{align}
\sup_{x \in K} \frac{\bigl| J^{(n)}_{*,t}(x) - J_{*,t}(x)|}{v(x)} &\leq \| J^{(n)}_{*,t} - u^{(n)}_{k,t}\|_v + \sup_{x \in K} \frac{\bigl| u^{(n)}_{k,t}(x) - u_{k,t}(x) \bigr|}{v(x)} + \|u_{k,t} - J_{*,t}\|_v \nonumber \\
&\leq \sigma^t \rho(J^{(n)}_{*},\bu^{(n)}_{k}) + \sup_{x \in K} \frac{\bigl| u^{(n)}_{k,t}(x) - u_{k,t}(x) \bigr|}{v(x)} + \sigma^t \rho(J_{*},\bu_{k}) \nonumber \\
&\leq 2 \sigma^t (\sigma \beta \alpha)^k L_0 + \sup_{x \in K} \frac{\bigl| u^{(n)}_{k,t}(x) - u_{k,t}(x) \bigr|}{v(x)} \text{ } \text{ (by (\ref{eq7}))}. \nonumber
\end{align}
This last expression can be made arbitrary small by first choosing large enough $k$ and then large enough $n$. This completes the proof since $\sup_{x \in K} v(x) < \infty$.

\subsection{Auxiliary Lemma}\label{newapp1}

In this section, we state and prove an auxiliary lemma which is a generalization of \cite[Lemma A.2]{BuMa14} to unbounded real-valued functions.

\begin{lemma}\label{auxlemma}
Let $P:\sX \rightarrow \P(\sX)$ be a transition probability on $\sX$ given $\sX$. Fix $N\geq1$ and let $y_1,\ldots,y_N \in \sX$. Let $X_1,\ldots,X_N$ be independent random variables such that ${\cal L}(X_i) = P(\,\cdot\,|y_i)$. Let $e_0^{(N)}(\,\cdot\,) \coloneqq \frac{1}{N} \sum_{i=1}^N \delta_{y_i}$ and $e_1^{(N)}(\,\cdot\,) \coloneqq \frac{1}{N} \sum_{i=1}^N \delta_{X_i}$. Then, we have
\begin{align}
E\bigl[ |e_1^{(N)}(g) - e_0^{(N)}P(g)| \bigr]^2 \leq \frac{1}{N^2} \sum_{i=1}^N \bigl( E[Z_i^2] + E[Z_i]^2 \bigr), \nonumber
\end{align}
where $Z_i = g(X_i)$.
\end{lemma}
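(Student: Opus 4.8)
The plan is to recognize the quantity $e_1^{(N)}(g)-e_0^{(N)}P(g)$ as a normalized sum of \emph{independent, centered} random variables and then bound its first absolute moment by its second moment via Jensen's inequality. Before any of the algebra, I would dispose of the degenerate case: if $E[Z_i^2]=\infty$ for some $i$, then the right-hand side is infinite and the inequality is trivial, so I may assume $E[Z_i^2]<\infty$ for every $i$. Since each $P(\,\cdot\,|y_i)$ is a probability measure, this also gives $E[|Z_i|]<\infty$ by Cauchy--Schwarz, so that all expectations appearing below are well defined. This integrability bookkeeping is precisely the point at which the statement extends \cite[Lemma A.2]{BuMa14} from bounded to unbounded $g$, and it is the only place where care is genuinely required; the rest is a routine second-moment computation.

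Next I would identify the two empirical averages explicitly. By definition $e_1^{(N)}(g)=\frac{1}{N}\sum_{i=1}^N g(X_i)=\frac{1}{N}\sum_{i=1}^N Z_i$. For the second term, recall that $P(g)$ denotes the function $x\mapsto\int_{\sX} g(y)P(dy|x)$; evaluating it at $y_i$ and using ${\cal L}(X_i)=P(\,\cdot\,|y_i)$ gives
\[
P(g)(y_i)=\int_{\sX} g(y)\,P(dy|y_i)=E[g(X_i)]=E[Z_i],
\]
so that $e_0^{(N)}P(g)=\frac{1}{N}\sum_{i=1}^N E[Z_i]$. Subtracting, and writing $W_i:=Z_i-E[Z_i]$, I obtain
\[
e_1^{(N)}(g)-e_0^{(N)}P(g)=\frac{1}{N}\sum_{i=1}^N W_i,
\]
where the $W_i$ are independent (because the $X_i$ are) and satisfy $E[W_i]=0$.

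The key estimate then follows from Jensen's inequality applied to $t\mapsto t^2$, followed by expanding the square and invoking independence. Indeed $E[\,|\,\cdot\,|\,]^2\le E[\,|\,\cdot\,|^2\,]$ yields
\[
E\bigl[\,|e_1^{(N)}(g)-e_0^{(N)}P(g)|\,\bigr]^2
\le E\Bigl[\Bigl(\tfrac{1}{N}\textstyle\sum_{i=1}^N W_i\Bigr)^2\Bigr]
=\frac{1}{N^2}\sum_{i,j=1}^N E[W_iW_j],
\]
and since $E[W_iW_j]=E[W_i]E[W_j]=0$ for $i\neq j$ by independence and centering, only the diagonal survives, giving $\frac{1}{N^2}\sum_{i=1}^N E[W_i^2]=\frac{1}{N^2}\sum_{i=1}^N\operatorname{Var}(Z_i)$. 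Finally, bounding each variance crudely by $\operatorname{Var}(Z_i)=E[Z_i^2]-E[Z_i]^2\le E[Z_i^2]+E[Z_i]^2$ produces the claimed inequality. There is no substantial obstacle here beyond the initial integrability reduction; the generous form $E[Z_i^2]+E[Z_i]^2$ of the bound (rather than the sharper variance) is exactly what is matched when the lemma is applied conditionally in the proof of Proposition~\ref{prop5}.
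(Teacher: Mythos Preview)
Your proof is correct and follows essentially the same approach as the paper: apply Jensen's inequality to pass from $E[|\cdot|]^2$ to $E[|\cdot|^2]$, then expand the square and use independence. The paper's proof is terse (it simply says ``the result follows by expanding and simplifying the term inside the expectation''), whereas you spell out the centering $W_i=Z_i-E[Z_i]$, the vanishing of cross terms, and the final bound $\operatorname{Var}(Z_i)\le E[Z_i^2]+E[Z_i]^2$; your added integrability reduction for the case $E[Z_i^2]=\infty$ is a sensible bit of bookkeeping that the paper omits.
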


\begin{proof}
By Jensen's inequality, we have
\begin{align}
E\bigl[ |e_1^{(N)}(g) - e_0^{(N)}P(g)| \bigr]^2 &= E \bigg[ \biggl| \frac{1}{N} \sum_{i=1}^N g(X_i) - \frac{1}{N} \sum_{i=1}^N \int_{\sX} g(y) P(dy|y_i) \biggr| \biggr]^2 \nonumber \\ &\leq E \bigg[ \biggl| \frac{1}{N} \sum_{i=1}^N g(X_i) - \frac{1}{N} \sum_{i=1}^N \int_{\sX} g(y) P(dy|y_i) \biggr|^2 \biggr]. \nonumber
\end{align}
The result follows by expanding and simplifying the term inside the expectation.
\end{proof}

\bibliographystyle{siam}
\bibliography{references}

\end{document}